\DeclareFontFamily{U}{mathx}{\hyphenchar\font45}
\DeclareFontShape{U}{mathx}{m}{n}{
      <5> <6> <7> <8> <9> <10>
      <10.95> <12> <14.4> <17.28> <20.74> <24.88>
      mathx10
      }{}
\DeclareSymbolFont{mathx}{U}{mathx}{m}{n}
\DeclareMathAccent{\widecheck}{0}{mathx}{"71}
\renewcommand{\scriptsize}{\fontsize{7.2pt}{9pt}\selectfont}
\newtheorem{theorem}{Theorem}
\newtheorem{lemma}{Lemma}
\newtheorem{example}{Example}
\newtheorem{assumption}{Assumption}
\newtheorem{definition}{Definition}
\newcommand{\boldh}{\boldsymbol{h}}
\newcommand{\bd}{\boldsymbol{d}}
\newcommand{\bs}{\boldsymbol{s}}
\newcommand{\bsb}{\overline{\bs}}
\newcommand{\bsc}{\widecheck{\bs}}
\newcommand{\bp}{\boldsymbol{p}}
\newcommand{\bu}{\boldsymbol{u}}
\newcommand{\bv}{\boldsymbol{v}}
\newcommand{\bw}{\boldsymbol{w}}
\newcommand{\bz}{\boldsymbol{z}}
\newcommand{\bzb}{\overline{\bz}}
\newcommand{\bzc}{\widecheck{\bz}}
\newcommand{\bx}{\boldsymbol{x}}
\newcommand{\by}{\boldsymbol{y}}
\newcommand{\bpsi}{\boldsymbol{\psi}}
\newcommand{\bphi}{\boldsymbol{\phi}}
\newcommand{\bphib}{\overline{\bphi}}
\newcommand{\bphic}{\widecheck{\bphi}}
\newcommand{\bdelta}{\boldsymbol{\delta}}
\newcommand{\bchi}{\boldsymbol{\chi}}
\newcommand{\bH}{\boldsymbol{H}}
\newcommand{\cA}{\mathcal{A}}
\newcommand{\cC}{\mathcal{C}}
\newcommand{\cE}{\mathcal{E}}
\newcommand{\cS}{\mathcal{S}}
\newcommand{\cG}{\mathcal{G}}
\newcommand{\cI}{\mathcal{I}}
\newcommand{\cJ}{\mathcal{J}}
\newcommand{\cN}{\mathcal{N}}
\newcommand{\cP}{\mathcal{P}}
\newcommand{\cU}{\mathcal{U}}
\newcommand{\cV}{\mathcal{V}}
\newcommand{\cw}{{\scriptstyle\mathcal{W}}}
\newcommand{\ccw}{{\scriptscriptstyle\mathcal{W}}}
\newcommand{\ccu}{{\scriptscriptstyle\mathcal{U}}}
\newcommand{\bcB}{\boldsymbol{\cal{B}}}
\newcommand{\bcH}{\boldsymbol{\cal{H}}}
\newcommand{\bcQ}{\boldsymbol{\cal{Q}}}
\newcommand{\bcC}{\boldsymbol{\cal{C}}}
\newcommand{\bcD}{\boldsymbol{\cal{D}}}
\newcommand{\bcw}{\boldsymbol{\cw}}
\newcommand{\bwt}{\widetilde \bw}
\newcommand{\bpsit}{\widetilde \bpsi}
\newcommand{\bphit}{\widetilde \bphi}
\newcommand{\bcwt}{\widetilde\bcw}
\newcommand{\expec}{\mathbb{E}}
\newcommand{\col}{\text{col}}
\newcommand{\diag}{\text{diag}}
\newcommand{\sign}{\text{sign}}
\DeclareMathOperator*{\st}{subject~to}
\newcolumntype{C}[1]{>{\centering\arraybackslash}m{#1}}
\begin{document}
\title{Differential error feedback for communication-efficient  decentralized learning}
\author{Roula Nassif, \IEEEmembership{Member, IEEE},  Stefan Vlaski, \IEEEmembership{Member, IEEE}, Marco Carpentiero,  \IEEEmembership{Student Member, IEEE}\\
 Vincenzo Matta, \IEEEmembership{Senior Member,~IEEE},  Ali H. Sayed, \IEEEmembership{Fellow, IEEE}\vspace{-10mm}
\thanks{A short conference version of this work {appears in~\cite{nassif2024differential}. This extended version includes proofs, derivations, and new results.}

R. Nassif is with Universit\'e C\^ote d'Azur, I3S Laboratory, CNRS,  France (email: roula.nassif@unice.fr).  S. Vlaski is with Imperial College London, UK (e-mail: s.vlaski@imperial.ac.uk). M. Carpentiero and V.  Matta are with University of Salerno, Italy (e-mail: $\{$mcarpentiero,vmatta$\}$@unisa.it).  A. H. Sayed is with the Institute of Electrical and Micro Engineering, EPFL, Switzerland (e-mail: ali.sayed@epfl.ch).}}

\maketitle

\begin{abstract}
Communication-constrained algorithms for decentralized learning and optimization rely on {local updates coupled with the exchange of compressed signals}. In this context, \emph{differential quantization} is an effective technique to mitigate the negative impact of compression by leveraging correlations between {successive} iterates. In addition, the use of \emph{error feedback}, which consists of incorporating the compression error into subsequent steps, is a powerful mechanism to compensate for the bias caused by the compression. Under error feedback, performance guarantees in the literature have so far focused on algorithms employing a fusion center or a special class of contractive compressors that cannot be implemented with a finite number of bits. In this work, we propose a new \emph{decentralized} communication-efficient learning approach that blends differential quantization with error feedback.  The approach is specifically tailored for decentralized learning problems where agents have individual {risk} functions to minimize subject to subspace constraints that require the minimizers across the network to lie in low-dimensional subspaces. This constrained formulation includes consensus or single-task optimization as special  cases, and allows for more general task relatedness models such as multitask smoothness and  coupled optimization. We show that, under some general conditions on the compression noise, and for sufficiently small step-sizes~$\mu$, the resulting communication-efficient strategy is stable both in  terms of  mean-square error and  average bit rate: by reducing  $\mu$, it is possible to keep the \emph{estimation errors small (on the order of $\mu$) without increasing indefinitely the bit rate as $\mu\rightarrow 0$.}  The results {establish} that, in the \emph{small step-size regime} and with a  \emph{finite number of bits}, it is possible to attain the performance achievable in the absence of compression.
\end{abstract}

\begin{IEEEkeywords}
Error feedback, differential quantization, compression operator, decentralized subspace projection, {single-task learning, multitask learning, mean-square-error analysis, bit rate analysis}.
\end{IEEEkeywords}

\newpage
\section{Introduction}
Data is increasingly being collected in a distributed and streaming manner, in an environment where communication and data privacy are becoming major concerns. In this context, centralized learning schemes with fusion centers tend to be replaced by new paradigms, such as federated and decentralized learning~\cite{mcmahan2017communication,li2020federated
,alistarch2017qsgd
,smith2017federated,sayed2014adaptation,sayed2014adaptive,sayed2013diffusion,nassif2020multitask}. In these approaches, each participating device (which is referred to as \emph{agent} or \emph{node}) has a local training dataset, which is never uploaded to the server. The training data is kept locally on users' devices, and the devices act as agents performing local computations to learn global models of interest. In applications where communication with a server becomes a bottleneck, \emph{decentralized} topologies (where agents only communicate with their neighbors) become attractive alternatives to federated topologies (where a server connects with all remote devices). These decentralized implementations reduce the communication burden since model updates are exchanged locally between agents without relying on a central coordinator~\cite{sayed2014adaptation,sayed2014adaptive,sayed2013diffusion,nassif2020multitask,kovalev2021linearly,koloskova2019decentralized}. Studies  have shown that decentralized approaches can be as efficient as the centralized schemes when considering, for instance, steady-state mean-square-error performance~\cite{sayed2014adaptation,nassif2020adaptation2}.

In traditional decentralized implementations,  agents need to exchange (possibly \emph{high-dimensional} and \emph{dense}) parameter vectors at every iteration of the learning algorithm
, leading to high communication costs. In modern distributed networks comprising a massive number of devices (e.g., thousands of participating smartphones), communication can be slower than local computation by many orders of magnitude (due to limited resources such as energy and bandwidth). Designers are typically limited by an upload bandwidth of 1MB/s or less~\cite{mcmahan2017communication}.  Therefore, in practice,  if not addressed adequately, the scarcity of the communication resources may limit the application of decentralized learning~\cite{cao2023communication}. A variety of methods have been proposed to reduce the communication overhead of decentralized learning.  These methods can be divided into two main categories. In the first one, communication is reduced by skipping communication rounds while performing a certain number of local updates in between~\cite{mcmahan2017communication,koloskova2020unified,liu2023decentralized}, thus trading-off communication overhead, computation, and learning performance. In the second one, information is compressed by employing either quantization (e.g., employing dithered quantization~\cite{aysal2008distributed}), sparsification (e.g., employing top-$k$ or rand-$k$ sparsifiers~\cite{kovalev2021linearly}), or both (e.g., employing top-$k$ combined with dithering~\cite{beznosikov2023biased}), before being exchanged.  Compression operators and  learning algorithms are then jointly designed to prevent the compression error from accumulating during the learning process and from significantly deteriorating the performance of the decentralized approach~\cite{nedic2008distributed,zhao2012diffusion,thanou2013distributed,carpentiero2022adaptive,carpentiero2024compressed,kovalev2021linearly,koloskova2019decentralized,reisizadeh2019exact,taheri2020quantized,lee2021finite,nassif2023quantization,zhao2022beer,singh2024decentralized}. 
Other works propose to combine the aforementioned two categories to further reduce the communication overhead~\cite{sparq2023sparq,singh2021squarm}. 

In this work, we introduce a new  communication-efficient approach for decentralized learning. The approach exploits  \emph{differential quantization} and \emph{error feedback} to mitigate the negative impact of compressed communications on the learning performance. \emph{Differential quantization} is a common  technique for mitigating the impact of compression by leveraging correlations between {successive} iterates.  In this case, instead of communicating compressed versions of  the iterates, agents communicate compressed versions of the differences between current estimates and their predictions based on previous iterations. Several recent works have focused on studying the benefits of differential quantization in the context of decentralized learning.  For instance, the work~\cite{koloskova2019decentralized} shows that, in a diminishing  step-size regime, differential quantization can reduce communication overhead without degrading much the learning rate. 
In~\cite{tang2018communication}, it is shown that decentralized learning can achieve the same convergence rate as centralized learning in non-convex settings, under very high accuracy constraints on the compression operators. The  constraints are relaxed in the study~\cite{zhao2022beer}, which also assumes decentralized non-convex optimization. The work~\cite{sparq2023sparq} studies the benefits of differential quantization and event-triggered communications. The analysis shows that  compression affects slightly the convergence rate  when gradients are bounded. Similar results are established in~\cite{singh2021squarm} under a weaker bounded gradient dissimilarity assumption, and in~\cite{taheri2020quantized} in the context of decentralized learning over directed graphs. The works~\cite{carpentiero2022adaptive,carpentiero2024compressed,nassif2023quantization} study the benefits of differential quantization without imposing any assumptions on the gradients, and by allowing for the use of combination matrices\footnote{As we will see, combination matrices in decentralized learning are used to control the exchange of information between neighboring agents.}  that are not symmetric~\cite{carpentiero2022adaptive,carpentiero2024compressed} or that have matrix valued entries~\cite{nassif2023quantization}. While the works~\cite{koloskova2019decentralized,tang2018communication,carpentiero2022adaptive,carpentiero2024compressed,nassif2023quantization,zhao2022beer,sparq2023sparq,singh2021squarm,taheri2020quantized} focus on studying primal stochastic optimization techniques (that are based on propagating and estimating primal variables), the works~\cite{kovalev2021linearly,singh2024decentralized} consider primal-dual  techniques and the work~\cite{lee2021finite} considers deterministic optimization. 

\emph{Error feedback}, on the other hand, consists of locally storing the compression error (i.e., the difference between the input and output of the compression operator), and incorporating it  into the next iteration. This technique has been previously employed for stochastic gradient descent (SGD) algorithms. Specifically, it has been applied to the SignSGD algorithm in the single-agent context under $1$-bit quantization~\cite{karimireddy2019error}, and to the distributed SGD to handle biased compression operators~\cite{beznosikov2023biased}. In the context of decentralized learning, the DeepSqueeze approach in~\cite{tang2019deepsqueeze} uses error feedback without differential quantization.

In the current work, we show how to blend differential quantization and error feedback in order to obtain a communication-efficient decentralized learning algorithm. First, we describe in Sec.~\ref{sec: Problem setup} the decentralized learning framework and the class of compression operators considered in the study. While most existing works on decentralized learning with compressed communications are focused on single-task or consensus algorithm design, the design in the current work goes beyond this traditional focus by allowing for both \emph{single-task} and \emph{multitask} implementations. In single-task learning, nodes collaborate to reach an agreement on a single parameter vector (also referred to as \emph{task} or \emph{objective}) despite having different local data distributions. Multitask learning, on the other hand, involves training multiple  tasks simultaneously and exploiting their intrinsic relationship. This approach offers several advantages, including improved network performance, especially when the tasks share commonalities in their underlying features~\cite{nassif2020multitask}. Compared with previous works, another {contribution} in the current study {is} the consideration of a general class of compression operators. {Specifically, rather than} being confined to the set of  probabilistic unbiased operators {as in~\cite{nassif2023quantization,carpentiero2022adaptive,carpentiero2024compressed,kovalev2021linearly,tang2018communication}}, we allow {for the use} of biased  (possibly deterministic) compression operators. Moreover, while most existing works assume that some quantities (e.g., the norm or some components of the vector to be quantized) are represented with very high precision (e.g., machine precision) and neglect the associated quantization error~\cite{kovalev2021linearly,koloskova2019decentralized,singh2024decentralized,tang2018communication,carpentiero2022adaptive,carpentiero2024compressed,zhao2022beer,sparq2023sparq,singh2021squarm,taheri2020quantized,tang2019deepsqueeze}, the current work incorporates realistic quantization models into the compression process and shows how to effectively manage the errors and minimize their impact on the learning performance. In Secs.~\ref{sec: Decentralized algorithmic framework: compressed communications} and~\ref{sec: Stochastic performance analysis},  we present and analyze the proposed learning strategy. While there exist several theoretical works investigating communication-efficient learning, the analysis in the current work is more general  in the following sense. {First, it considers a general class of compression schemes.} Moreover, unlike the studies in~\cite{kovalev2021linearly,koloskova2019decentralized,singh2024decentralized,lee2021finite,tang2018communication,zhao2022beer,sparq2023sparq,singh2021squarm,tang2019deepsqueeze}, it does not require the combination matrices to be symmetric. 
We further allow the entries of the combination matrix to be matrix-valued (as opposed to scalar valued as in traditional implementations) in order to solve general multitask optimization problems. 
{Finally, we do} not assume bounded gradients as in~\cite{koloskova2019decentralized,sparq2023sparq,taheri2020quantized,singh2024decentralized,tang2019deepsqueeze}. For ease of reference, the modeling conditions from this and related works are summarized in Table~\ref{table: modeling conditions}. 
We establish in Sec.~\ref{sec: Stochastic performance analysis} the mean-square-error stability of the proposed decentralized communication-efficient approach. In addition to investigating the mean-square-error stability, we characterize the steady-state average bit rate of the proposed approach  when  variable-rate quantizers are used. The analysis  shows that, by properly designing the quantization operators, the iterates generated by the proposed approach lead to small estimation errors on the order of the step-size $\mu$ (as it happens in the {\em ideal case without compression}), while concurrently guaranteeing a bounded average bit rate as $\mu\rightarrow 0$. Our theoretical findings {show} that, in the \emph{small step-size regime}, the proposed strategy attains the \emph{performance achievable in the absence of compression, despite the use of a finite number of bits}. This demonstrates the effectiveness of the approach in maintaining performance while reducing communication overhead. Finally, we present  in Sec.~\ref{sec: simulation results} experimental results illustrating the theoretical findings and showing that blending differential {compression} and error feedback can achieve superior  performance compared to state-of-the-art baselines.

{\scriptsize
\begin{table}{
\caption{Comparison of modeling assumptions for decentralized stochastic optimization studies. All works employ differential quantization {without error feedback}, except the one marked with $^{\star}$ and {our} work. While the work with $^{\star}$ uses error feedback, {our} work employs error feedback with differential quantization. {All works assume that some quantities are exchanged with very high precision, except the one marked with $^{\dagger}$ and this work. We use the symbol -- in the last column for works that do not have bounded gradient assumptions.} }
\label{table: modeling conditions}
\vspace{-3mm}
\begin{center}{
\begin{tabular}{||c ||c|c |c |c|c||} 
 \hline \hline
\cellcolor[gray]{0.7} {\scriptsize Reference}&\cellcolor[gray]{0.7} {\scriptsize Stochastic optimization context} &\cellcolor[gray]{0.7}{ \scriptsize Combination matrix}&\cellcolor[gray]{0.7}{\scriptsize Compression operator}&\cellcolor[gray]{0.7} {\scriptsize Step-size}&\cellcolor[gray]{0.7} {\scriptsize Gradient assumption}
\\ [0.5ex] 
 \hline\hline
  \multirow{1}{*}{\scriptsize \cite{koloskova2019decentralized}}& \multirow{1}{*}{\scriptsize Primal, consensus-type}&  \multirow{1}{*}{\scriptsize Symmetric} & \multirow{1}{*}{\scriptsize Can be deterministic \& biased}& \multirow{1}{*}{\scriptsize Diminishing}& \multirow{1}{*}{\scriptsize Bounded} 
  \\ \hline
  \multirow{1}{*}{\scriptsize \cite{kovalev2021linearly}}& \multirow{1}{*}{\scriptsize Primal-dual, consensus-type}&  \multirow{1}{*}{\scriptsize Symmetric} & \multirow{1}{*}{\scriptsize Probabilistic \& unbiased}& \multirow{1}{*}{\scriptsize Constant}& \multirow{1}{*}{--} 
  \\ \hline
  \multirow{1}{*}{\scriptsize \cite{sparq2023sparq}}& \multirow{1}{*}{\scriptsize Primal, consensus-type}&  \multirow{1}{*}{\scriptsize Symmetric} & \multirow{1}{*}{\scriptsize Can be deterministic \& biased}& \multirow{1}{*}{\scriptsize Diminishing}& \multirow{1}{*}{\scriptsize Bounded}
  \\ \hline
  \multirow{1}{*}{\scriptsize \cite{singh2021squarm}}& \multirow{1}{*}{\scriptsize Primal, consensus-type}&  \multirow{1}{*}{\scriptsize Symmetric} & \multirow{1}{*}{\scriptsize Can be deterministic \& biased}& \multirow{1}{*}{\scriptsize Constant}& \multirow{1}{*}{\scriptsize Bounded dissimilarities} 
  \\ \hline
  \multirow{1}{*}{\scriptsize \cite{singh2024decentralized}}& \multirow{1}{*}{\scriptsize Primal-dual, multitask}&  \multirow{1}{*}{\scriptsize Symmetric} & \multirow{1}{*}{\scriptsize Can be deterministic \& biased}& \multirow{1}{*}{\scriptsize Constant}& \multirow{1}{*}{\scriptsize Bounded}
  \\ \hline
  \multirow{1}{*}{\scriptsize \cite{tang2018communication}}& \multirow{1}{*}{\scriptsize Primal, consensus-type}&  \multirow{1}{*}{\scriptsize Symmetric} & \multirow{1}{*}{\scriptsize Probabilistic \& unbiased}& \multirow{1}{*}{\scriptsize Constant}& \multirow{1}{*}{--} 
\\ \hline
  \multirow{1}{*}{\scriptsize \cite{zhao2022beer}}& \multirow{1}{*}{\scriptsize Primal, consensus-type}&  \multirow{1}{*}{\scriptsize Symmetric} & \multirow{1}{*}{\scriptsize Can be deterministic \& biased}& \multirow{1}{*}{\scriptsize Constant}& \multirow{1}{*}{--} 
  \\ \hline
  \multirow{1}{*}{\scriptsize \cite{taheri2020quantized}}& \multirow{1}{*}{\scriptsize Primal, consensus-type}&  \multirow{1}{*}{\scriptsize Can be non-symmetric} & \multirow{1}{*}{\scriptsize Probabilistic \& unbiased}& \multirow{1}{*}{\scriptsize Constant}& \multirow{1}{*}{\scriptsize Bounded} 
  \\ \hline
 \multirow{1}{*}{\scriptsize \cite{tang2019deepsqueeze}$^{\star}$}& \multirow{1}{*}{\scriptsize Primal, consensus-type}&  \multirow{1}{*}{\scriptsize Symmetric} & \multirow{1}{*}{\scriptsize Can be deterministic \& biased}& \multirow{1}{*}{\scriptsize Constant}& \multirow{1}{*}{\scriptsize Bounded dissimilarities} 
 \\ \hline

  \multirow{1}{*}{\scriptsize \cite{carpentiero2022adaptive}}& \multirow{1}{*}{\scriptsize Primal, consensus-type}&  \multirow{1}{*}{\scriptsize Can be non-symmetric} & \multirow{1}{*}{\scriptsize Probabilistic \& unbiased}& \multirow{1}{*}{\scriptsize Constant}& \multirow{1}{*}{--} 
  \\ \hline
 \multirow{2}{*}{\scriptsize \cite{nassif2023quantization}$^{\dagger}$ }&  \multirow{2}{*}{\scriptsize Primal, consensus-type \& multitask}& {\scriptsize Can be non-symmetric} & {\scriptsize Probabilistic \& unbiased}& \multirow{2}{*}{\scriptsize Constant}& \multirow{2}{*}{--} 
 \\
\multirow{1}{*}{}&&{\scriptsize with matrix-valued block entries}  &{\scriptsize no high precision representations}&& 
\\ \hline
\multirow{2}{*}{\scriptsize This work}&   \multirow{2}{*}{\scriptsize Primal, consensus-type \& multitask}& {\scriptsize Can be non-symmetric} &{\scriptsize Can be deterministic \& biased}& \multirow{2}{*}{ \scriptsize Constant}& \multirow{2}{*}{--} 
\\
\multirow{1}{*}{}& &{ \scriptsize with matrix-valued block entries}  &{ \scriptsize no high precision representations}&& 
\\ \hline
 \hline
\end{tabular}}
\end{center}}
\end{table}}

\noindent\textbf{Notation:} All vectors are column vectors. Random quantities are denoted in boldface. Matrices are denoted in uppercase letters while vectors and scalars are denoted in {lowercase} letters. The symbol $(\cdot)^\top$ denotes matrix transposition. The operator $\col\{\cdot\}$ stacks the column vector entries on top of each other. The operator $\diag\{\cdot\}$ 
 forms a matrix from block arguments by placing each block immediately below and to the right of its predecessor. The symbol $\otimes $ denotes the Kronecker product. {The ceiling and floor functions are denoted by $\lceil \cdot \rceil$ and $\lfloor \cdot \rfloor$, respectively.} The $M\times M$ identity matrix is denoted by $I_M$. The abbreviation ``w.p.'' is used for ``with probability''.  
 The notation $\alpha=O(\mu)$ signifies that there exist two positive constants $c$ and $\mu_0$ such that $|\alpha|\leq c\mu$ for all $\mu\leq\mu_0$. {Vectors or matrices} of all zeros {are} denoted by 0.

\section{Problem setup}
\label{sec: Problem setup}

In this section, we formally state the decentralized optimization problem and introduce strategies, quantities, and assumptions that will be used in subsequent sections.

\subsection{Decentralized optimization under subspace constraints}
\label{subsec: Decentralized optimization under subspace constraints}

 We consider a connected  graph (or network) $\cG(\cV,\cE)$, where $\cV$ and $\cE$ denote the set of $K$ agents or nodes (labeled $k=1,\ldots, K$) and  the set of possible communication links or edges, respectively.  Let $w_k\in\mathbb{R}^{M_k}$ denote some parameter vector at agent $k$ and let $\cw=\col\{w_1,\ldots,w_K\}$ denote the {$M$-dimensional} vector (where $M=\sum_{k=1}^KM_k$) collecting the parameter vectors from across the network. We associate with each agent $k$ a differentiable convex {risk} $J_k(w_k):\mathbb{R}^{M_k}\rightarrow\mathbb{R}$
 , expressed as the expectation of some loss function $L_k(\cdot)$ and written as $J_k(w_k)=\expec L_k(w_k;\by_k)$, where $\by_k$ denotes the random data at agent $k$. The expectation is computed relative to the distribution of the local data. In the stochastic setting, when the data distribution is unknown, the {risks} $J_k(\cdot)$ and their gradients $\nabla_{w_k}J_k(\cdot)$ are unknown. In this case, instead of using the true gradient, it is common to use approximate gradient {vectors based on the loss functions such as} {$\widehat{\boldsymbol{\nabla}_{w_k}J_k}(w_k)=\nabla_{w_k}L_k(w_k;\by_{k,i})$} where $\by_{k,i}$ represents the data observed at iteration $i$~\cite{sayed2014adaptation,sayed2022inference}.

In traditional \emph{single-task} or \emph{consensus} problems, agents need to agree on a common parameter {vector (also called \emph{model} or \emph{task})} corresponding to the minimizer of the {following weighted sum of individual {risks}:}
\begin{equation}
\label{eq: consensus formulation}
w^o=\arg\min_{w\in\mathbb{R}^{{M_c}}}\frac{1}{K}\sum_{k=1}^KJ_k(w),
\end{equation}
where {$M_c$} represents a \emph{common} vector length, i.e., {in this case, the dimensions $M_k$ for all agents are identical and equal to $M_c$. Moreover,} $w$ is the global parameter vector, which all agents need to agree upon. Each agent seeks to estimate $w^o$ through local computations and communications among neighboring agents without the need to know any of the {risks or losses} besides {its} own. Among many useful strategies that have been proposed in the literature~\cite{sayed2013diffusion,sayed2014adaptation,sayed2014adaptive,nedic2009distributed,bertsekas1997new,dimakis2010gossip}, diffusion strategies~\cite{sayed2013diffusion,sayed2014adaptation,sayed2014adaptive} are particularly attractive since they are scalable, robust, and enable continuous learning and adaptation in response to drifts in the location of the minimizer.

In this work, instead of considering the single-task formulation~\eqref{eq: consensus formulation}, we consider a generalization that allows the network to solve {\emph{multitask}} optimization problems. {Multitask learning} is suitable for network applications {where differences in the  data distributions} 
require more complex models and  more flexible algorithms than single-task implementations. In multitask networks, agents generally  need to estimate and track multiple distinct, though related, {models or objectives}. For instance, in distributed power system state estimation, the local state vectors to be estimated at neighboring control centers may overlap partially since the areas in a power system are interconnected~\cite{kekatos2013distributed}. Likewise, in weather forecasting applications, regional differences in the collected data distributions require agents to exploit the correlation profile in the data for enhanced decision rules~\cite{nassif2020learning}. Existing strategies to address multitask problems generally exploit prior knowledge on how the tasks across the network relate to each other~\cite{nassif2020multitask}.   For example, one way to model relationships among tasks is to formulate convex optimization problems with appropriate co-regularizers between neighboring agents~\cite{nassif2020learning,nassif2020multitask}. Another way to leverage the relationships among tasks  is to constrain the model parameters to lie within certain subspaces that can represent for instance shared latent patterns that are common across tasks~\cite{nassif2020adaptation,nassif2020multitask,plata2017heterogeneous}. The choice between these techniques depends in general on the specific characteristics of the tasks and the desired trade-offs between model complexity and performance~\cite{nassif2020multitask,plata2017heterogeneous}.

In this work, we study \emph{decentralized learning under subspace constraints in the presence of compressed communications}. {Specifically,} we consider inference problems of the form: 
\begin{equation}
\label{eq: network constrained problem}
\begin{split}
\cw^o=&\arg\min_{\ccw}~\sum_{k=1}^KJ_k(w_k),\\
&\st ~\cw\in\text{Range}(\cU)
\end{split}
\end{equation}
where {the matrix $\cU$ is an $M\times P$ full-column rank matrix (with $P\ll M$) assumed to be semi-unitary, i.e., its columns are orthonormal ($\cU^\top\cU=I_P$). By using the {stochastic gradient} of the individual {risk} $J_k(w_k)$, agent~$k$ seeks to estimate the $k-$th $M_k\times 1$ subvector $w^o_k$ of the network vector $\cw^o=\col\{w^o_1,\ldots,w^o_K\}$, which is required to lie in {a} low-dimensional subspace. While the objective in~\eqref{eq: network constrained problem} is  additively separable, the subspace constraint couples the models across the agents.} Constrained formulations of the form~\eqref{eq: network constrained problem} have  been studied previously in decentralized settings where communication constraints are absent~\cite{nassif2020adaptation,dilorenzo2020distributed}. As explained in~\cite{nassif2020multitask,dilorenzo2020distributed},~\cite[Sec.~II]{nassif2020adaptation}, by properly selecting the matrix $\cU$, formulation~\eqref{eq: network constrained problem} can be tailored to address a wide range of optimization problems encountered in network applications. Examples include $i)$ consensus or single-task optimization (where the agents' objective is to reach consensus on the minimizer $w^o$ in~\eqref{eq: consensus formulation}), $ii)$ decentralized coupled optimization (where the parameter vectors to be estimated at neighboring agents are partially overlapping)~\cite{mota2015distributed,plata2017heterogeneous,alghunaim2020distributed,kekatos2013distributed}, and $iii)$ multitask inference under smoothness (where the network parameter vector $\cw$ to be estimated is required to be smooth w.r.t. the underlying network topology)~\cite{nassif2020adaptation,nassif2020learning}. For instance, setting in~\eqref{eq: network constrained problem} 
$\cU=\frac{1}{\sqrt{K}}(\mathds{1}_K\otimes I_{{M_c}})$ where $\mathds{1}_K$ is the $K\times 1$ vector of all ones, we obtain an optimization problem equivalent to the consensus problem~\eqref{eq: consensus formulation}. While projecting onto the space spanned by the vector of all ones allows to enforce consensus across the network, graph smoothness can in general be promoted by projecting onto the space spanned by the eigenvectors of the graph Laplacian corresponding to small eigenvalues~\cite{nassif2020multitask,nassif2020adaptation,dilorenzo2020distributed}.
\subsection{Decentralized diffusion-based approach}
\label{subsec: Decentralized diffusion-based approach}

To solve problem~\eqref{eq: network constrained problem} in a decentralized manner, we consider the primal approach proposed and studied in~\cite{nassif2020adaptation,nassif2020adaptation2}, namely, 
\begin{subequations}
\label{eq: decentralized learning approach}
 \begin{empheq}[left={\empheqlbrace\,}]{align}
\bpsi_{k,i}&=\bw_{k,i-1}-\mu\widehat{\nabla_{w_k}J_k}(\bw_{k,i-1})\label{eq: step1}\\
\bw_{k,i}&=\sum_{\ell\in\cN_k}A_{k\ell}\bpsi_{\ell,i}\label{eq: step2}
 \end{empheq}
\end{subequations}
where $\cN_k$ {denotes} the set of nodes connected to agent $k$ by a communication link (including node $k$ itself) and $\mu>0$ is a small step-size parameter. Note that the information sharing across agents in~\eqref{eq: decentralized learning approach} is implemented by means of a $K\times K$ block  combination matrix $\cA=[A_{k\ell}]$ that has a zero $M_k\times M_{\ell}$ block element $(k,\ell)$ if nodes $k$ and $\ell$ are not neighbors, i.e., $A_{k\ell}=0$ if $\ell\notin\cN_k$, and satisfies the following conditions~\cite{nassif2020adaptation}:
\begin{equation}
\label{eq: condition A}
\cA\,\cU=\cU,\quad \cU^\top\cA=\cU^\top,~~ \text{and~ }\rho(\cA-\cP_{\ccu})<1,
\end{equation}
where $\rho(\cdot)$ denotes the spectral radius of its matrix argument, and $\cP_{\ccu}=\cU\cU^\top$ is the orthogonal projection matrix onto $\text{Range}(\cU)$.   {It is shown~\cite{nassif2020adaptation2,dilorenzo2020distributed}  how combination matrices $\cA$ satisfying~\eqref{eq: condition A} can be constructed.} If strategy~\eqref{eq: decentralized learning approach} is employed to solve the single-task problem~\eqref{eq: consensus formulation}, we can select the combination matrix $\cA$ in the form $A\otimes I_{{M_c}}$, where $A=[a_{k\ell}]$ is a $K\times K$ doubly-stochastic matrix satisfying:
\begin{equation}
\label{eq: doubly stochastic condition}
a_{k\ell}\geq 0,\quad A\mathds{1}_K=\mathds{1}_K,\quad\mathds{1}_K^\top A= \mathds{1}_K^\top,\quad a_{k\ell}=0\text{ if }\ell\notin\cN_k,
\end{equation}
In this case, conditions~\eqref{eq: condition A} {are satisfied for  $\cU=\frac{1}{\sqrt{K}}(\mathds{1}_K\otimes I_{{M_c}})$~\cite{nassif2020adaptation}} and strategy~\eqref{eq: decentralized learning approach} reduces to the standard \emph{diffusion Adapt-Then-Combine (ATC)} approach~\cite{sayed2014adaptation,sayed2014adaptive,sayed2013diffusion}:
\begin{subequations}
\label{eq: diffusion strategy}
 \begin{empheq}[left={\empheqlbrace\,}]{align}
\bpsi_{k,i}&=\bw_{k,i-1}-\mu\widehat{\nabla_{w_k}J_k}(\bw_{k,i-1})\label{eq: diffusion strategy step1}\\
\bw_{k,i}&=\sum_{\ell\in\cN_k}a_{k\ell}\bpsi_{\ell,i}\label{eq: diffusion strategy step2}
 \end{empheq}
\end{subequations}
This fact motivates the use of the terminology ``diffusion Adapt-Then-Combine (ATC) approach'' in the sequel when referring to the decentralized strategy~\eqref{eq: decentralized learning approach} for solving general constrained optimization problems of the form~\eqref{eq: network constrained problem}.

The first step~\eqref{eq: step1} in the ATC approach~\eqref{eq: decentralized learning approach} is the \emph{self-learning} step corresponding to the stochastic gradient descent step on the individual {risk} $J_k(\cdot)$. This step is followed by the \emph{social learning} step~\eqref{eq: step2} where agent $k$ receives the intermediate estimates $\{\bpsi_{\ell,i}\}$ from its neighbors $\ell\in\cN_k$ and combines them through $\{A_{k\ell}\}$ to form $\bw_{k,i}$, which corresponds to the estimate of $w^o_k$ at agent $k$ and iteration $i$. To alleviate the communication bottleneck resulting from the exchange of the intermediate estimates  among agents over many iterations, compressed communication must be considered. Before presenting the communication-efficient variant of the ATC approach~\eqref{eq: decentralized learning approach}, we describe in the following the class of compression operators considered in this study.

\subsection{Compression operator}
\label{subsec: Compression operators}


For the sake of clarity, we first introduce the following formal definitions for key concepts relating to data compression.
\begin{definition}{\emph{\textbf{(Compression operator)}.}} {Let $L$ represent a generic vector length. A compression} operator $\bcC:\mathbb{R}^L\rightarrow\mathbb{R}^L$ associates to {every input  $x\in\mathbb{R}^L$ a  random quantity $\bcC(x)\in\mathbb{R}^L$ that is} governed by the conditional probability measure~$\mathbb{P}(\cdot|x)$.

\end{definition}
Note that the above family of compression operators includes deterministic mappings as a particular case.

\begin{definition}{\emph{\textbf{(Bounded-distortion compression operator)}.}} 
\label{def: bounded distortion}A bounded-distortion compression operator is a compression operator that {fulfills} the property:
\begin{equation}
\label{eq: bounded quantization noise variance condition}
\expec\|x-\bcC(x)\|^2\leq \beta^2_c\|x\|^2+\sigma^2_c,
\end{equation}
for some $\beta^2_c\geq 0$ and $\sigma^2_c\geq 0$, and where the expectation is taken over the conditional probability  measure~$\mathbb{P}(\cdot|x)$.

\end{definition}

\begin{definition}{\emph{\textbf{(Unbiased compression operator)}.}} An unbiased compression operator is a compression operator that {fulfills} the property:
\begin{equation}
\label{eq: unbiasedness condition}
\expec[\bcC(x)]=x.
\end{equation}

\end{definition}
In this study, we consider \emph{bounded-distortion compression operators}.  Table~\ref{table: examples of quantizers} provides a list of bounded-distortion operators commonly used in decentralized learning, with the corresponding compression noise parameters $\beta^2_c$ and $\sigma^2_c$, and the bit-budget required to encode an input vector $x\in\mathbb{R}^L$. By comparing the reported schemes, we observe that the ``rand-$c$'', ``randomized Gossip'', 
 and ``top-$c$'' can be considered as \emph{sparsifiers} that map a full vector into a sparse version thereof. For instance, the rand-$c$ scheme selects randomly $c$ components of the input vector and encodes them with very high precision ($32$ or $64$ {bits} are typical values for encoding a scalar). These bits are then communicated over the links in addition to the bits encoding the locations of the selected entries. On the other hand, the QSGD scheme encodes the norm of the input vector with very high precision. In addition to encoding the norm, $L$-bits are used to encode the signs of the input vector components and $L\lceil\log_2(s)\rceil$ to encode the levels. In comparison, the  probabilistic uniform  and  probabilistic ANQ \emph{quantizers} do not make any assumptions on the high-precision representation of specific variables. 
{In the following, we highlight the key facts regarding the compression operators considered in this study.}
\begin{table*}
\caption{Examples of bounded-distortion compression operators. {For each scheme, we report the compression rule, the parameters $\beta_c^2$ and $\sigma^2_c$ in~\eqref{eq: bounded quantization noise variance condition}, and the bit-budget. $B_{\text{HP}}$ denotes the number of bits required to encode a  scalar with high precision (Typical values for $B_{\text{HP}}$ are $32$ or $64$). The operator marked with $^\dagger$ is deterministic. 
The operators marked with $^\star$ are unbiased.} }
\label{table: examples of quantizers}
\vspace{-7mm}
\begin{center}
\begin{tabular}{||>{\centering\arraybackslash}m{0.905in} ||c ||>{\centering\arraybackslash}m{0.62in} |>{\centering\arraybackslash}m{0.17in} ||>{\centering\arraybackslash}m{1.17in}||} 
 \hline \hline
\cellcolor[gray]{0.8} {\tiny{Name}} &\cellcolor[gray]{0.8} {\tiny Rule} &\cellcolor[gray]{0.8}{\tiny $\beta^2_{c}$} &\cellcolor[gray]{0.8} {\tiny $\sigma^2_{c}$}&\cellcolor[gray]{0.8} {\tiny Bit-budget} \\ [0.5ex] 
 \hline\hline
 \multirow{1}{*}{{\tiny No compression~\cite{nassif2020adaptation}} }& {\tiny $\bcC(x)=x$} &\multirow{1}{*}{\tiny $0$} &\multirow{1}{*}{\tiny$0$}&{\tiny$LB_{\text{HP}}$}\\ \hline  
 \multirow{3}{*}{\tiny Probabilistic uniform}&  {\tiny$\left[\bcC(x)\right]_j=\Delta\cdot\boldsymbol{n}(x_j) $}&\multirow{3}{*}{\tiny 0} &\multirow{3}{*}{\tiny$L\frac{\Delta^2}{4}$}&\multirow{3}{*}{
\tiny$\boldsymbol{r}(x)$ defined in~\eqref{eq: bit rate general formula}}\\ 
 \multirow{2}{*}{\tiny or dithered} &  {\tiny $\boldsymbol{n}(x_j)=\left\lbrace\begin{array}{ll}
m,&\text{w.p. }
\frac{(m+1)\Delta-x_j}{\Delta},\\
m+1,&\text{w.p. }\frac{x_j-m\Delta}{\Delta},
 \end{array}\right.\qquad m=\left\lfloor
\frac{x_j}{\Delta}
\right\rfloor$} &&&\\ 
{\tiny quantizer$^\star$~\cite{aysal2008distributed,reisizadeh2019exact}} & {\tiny $\Delta$ is the quantization step}  &&&\\ 
 \hline
 \multirow{5}{*}{\tiny Probabilistic ANQ$^\star$~\cite{lee2021finite} }& {\tiny $\left[\bcC(x)\right]_j=y_{\boldsymbol{n}(x_j)}$} &\multirow{5}{*}{\tiny $2\omega^2$} &\multirow{5}{*}{\tiny $2L\eta^2$}&\multirow{5}{*}{\tiny $\boldsymbol{r}(x)$ defined in~\eqref{eq: bit rate general formula}}\\ 
 & {\tiny $\boldsymbol{n}(x_j)= \left\lbrace\begin{array}{ll}
m,&\text{w.p. } \frac{y_{m+1}-x_j}{y_{m+1}-y_{m}},\\
m+1,&\text{w.p. } \frac{x_j-y_{m}}{y_{m+1}-y_{m}},
 \end{array}\right.\quad m=\left\lfloor
\sign(x_j)\frac{\ln\left(1+\frac{\omega}{\eta}|x_j|\right)}
{2\ln \left( \omega + \sqrt{1+\omega^2} \right)}
\right\rfloor$} & &&\\ 
&{\tiny $y_{m}=\sign(m)\frac{\eta}{\omega}\left[
\left(\omega + \sqrt{1+\omega^2}\right)^{2 |m|}
- 1
\right]$}
 & &&\\ 
&{\tiny $\omega$ and $\eta$ are two non-negative design parameters }& &&\\ \hline
 \multirow{2}{*}{\tiny Rand-$c$~\cite{kovalev2021linearly,koloskova2019decentralized} } & {\tiny$\left[\bcC(x)\right]_j=\alpha\cdot\left\lbrace\begin{array}{ll}
x_j,&\text{if  }x_j\in\Omega_c\\
0,&\text{otherwise }
 \end{array}\right.\quad\text{where }\alpha=\left\lbrace\begin{array}{ll}
1,&\text{biased version}\\
\frac{L}{c},&\text{unbiased version}
 \end{array}\right. $}&\multirow{2}{*}{\tiny$\alpha\left(1-\frac{c}{L}\right)$} &\multirow{2}{*}{\tiny $0$}&\multirow{2}{*}{\tiny {$cB_{\text{HP}}+c\lceil\log_2(L)\rceil$}}\\ 
&{\tiny $\Omega_c$ is a set of $c$ randomly selected coordinates, $c\in\{1,\ldots,L\}$}& &&\\ 
  \hline
\multirow{2}{*}{\tiny Randomized Gossip~\cite{koloskova2019decentralized} }& {\tiny$\bcC(x)=\alpha\cdot\left\lbrace\begin{array}{ll}
x,&\text{w.p. }q\\
0,&\text{w.p. }1-q
 \end{array}\right.\quad\text{where }\alpha=\left\lbrace\begin{array}{ll}
1,&\text{biased version}\\
\frac{1}{q},&\text{unbiased version}
 \end{array}\right.$}&\multirow{2}{*}{\tiny $\alpha(1-q)$} &\multirow{2}{*}{\tiny $0$}&\multirow{2}{*}{{\tiny $LB_{\text{HP}}q\qquad$ (on average)}}\\
&{\tiny $q\in(0,1]$ is the transmission probability}&&&\\ 
 \hline
  \multirow{3}{*}{\tiny QSGD$^\star$~\cite{alistarch2017qsgd} }& {\tiny $\left[\bcC(x)\right]_j=\|x\|\cdot\sign(x_j)\cdot\frac{\boldsymbol{n}(x_j,x)}{s}$ }&\multirow{3}{*}{\tiny$\min\left(\frac{L}{s^2},\frac{\sqrt{L}}{s}\right)$} &\multirow{3}{*}{\tiny$0$}&\multirow{3}{*}{\tiny{$B_{\text{HP}}+L+L\lceil\log_2(s)\rceil$}}\\ 
 &{\tiny $\boldsymbol{n}(x_j,x)=\left\lbrace\begin{array}{ll}
m,&\text{w.p. }(m+1)-\frac{|x_j|}{\|x\|}s,\\
 m+1,&\text{w.p. }\frac{|x_j|}{\|x\|}s-m,
 \end{array}\right.\qquad m=\left\lfloor s \frac{|x_j|}{\|x\|}\right\rfloor$ }& &&\\ 
&{\tiny$s$ is the number of quantization levels }
 & &&\\ 
 \hline
  \multirow{2}{*}{\tiny Top-$c$$^{\dagger}$ sparsifier~\cite{koloskova2019decentralized} }& {\tiny $\left[\cC(x)\right]_j=\left\lbrace\begin{array}{ll}
x_j,&\text{if  }x_j\in\Omega_c\\
0,&\text{otherwise }
 \end{array}\right.$ $\qquad\qquad(c\in\{1,\ldots,L\})$}&\multirow{2}{*}{\tiny$1-\frac{c}{L}$} &\multirow{2}{*}{\tiny $0$}&\multirow{2}{*}{\tiny {$cB_{\text{HP}}+c\lceil\log_2(L)\rceil$}}\\ 
&{\tiny $\Omega_c$ is a set of the $c$ coordinates with highest magnitude }& &&\\ 
  \hline
 \hline
\end{tabular}
\end{center}
\end{table*}


\subsubsection{Allowing for an absolute compression noise term} Many existing works focus on studying decentralized learning in the presence of bounded-distortion compression operators that satisfy condition~\eqref{eq: bounded quantization noise variance condition} with the \emph{absolute noise} term $\sigma^2_{c}=0$~\cite{kovalev2021linearly,koloskova2019decentralized,singh2024decentralized,tang2018communication,carpentiero2022adaptive,carpentiero2024compressed,zhao2022beer,sparq2023sparq,singh2021squarm,taheri2020quantized,tang2019deepsqueeze}. In contrast, the analysis in the current work is conducted in the presence of both the \emph{relative} (captured through $\beta^2_c$) and the {absolute} compression noise terms. 
 As explained in~\cite{lee2021finite}, neglecting the effect of $\sigma^2_{c}$ requires that some quantities (e.g., the norm of the vector  in QSGD) are represented with no quantization error, in practice at the machine precision. We refer the reader to~\cite{nassif2023quantization} for a description of a  framework for designing randomized compression operators that do not require high-precision quantization of specific variables. Particularly, Sec.~II in~\cite{nassif2023quantization} describes the design of the \emph{probabilistic uniform and ANQ} quantizers endowed with a \emph{variable-rate} coding scheme from~\cite{lee2021finite} to adapt the bit rate based on the quantizer input. When these rules, which are listed in Table~\ref{table: examples of quantizers} (rows $2$--$3$), are applied entrywise to a vector $x\in\mathbb{R}^L$, the overall (random) bit budget will be equal to~\cite{nassif2023quantization}:
\begin{equation}
\label{eq: bit rate general formula}
\boldsymbol{r}(x)=\log_2(3)\sum_{j=1}^L(1+\lceil\log_2(|\boldsymbol{n}(x_j)|+1)\rceil).
\end{equation}

\subsubsection{Allowing for the use of biased compression operators} Although the communication-efficient  approach developed in~\cite{nassif2023quantization} can be used for solving decentralized learning under subspace constraints and makes no assumptions about encoding quantities with high precision, it is not designed to handle biased compression operators, i.e., operators that do not satisfy the unbiasedness condition~\eqref{eq: unbiasedness condition}. As we will see, by incorporating explicitly the error feedback mechanism into the differential quantization approach proposed in~\cite{nassif2023quantization}, we can address biased compression by filtering the compression error over time. In general, biased compression operators tend to outperform their unbiased counterparts~\cite{beznosikov2023biased}. 

 {While the list of compression operators in Table~\ref{table: examples of quantizers} provides several examples of interest, it is not exhaustive. As we will see, through concatenation, it is possible to achieve other meaningful schemes.}
 
\begin{example}{\textbf{\emph{{(Concatenation of compression schemes).}}}}
\label{example: top-c quantizer}
In this example, we investigate a specific type of compression operator consisting of concatenating two distinct bounded-distortion compression operators. The first operation, known as the top-$c$ sparsifier, entails retaining only the $c$ {largest-magnitude} components of the input vector. The second operation involves quantizing the output of the top-$c$ sparsifier. This concatenation is particularly noteworthy as it tends to require the fewest number of bits for representation by exploiting the inherent sparsity induced by the sparsifier and by concentrating the quantization process on the most significant components of the input. Numerical results illustrating {the benefits of the concatenation} are provided in Sec.~\ref{sec: simulation results}.

\begin{definition}{\emph{\textbf{(Top-$c$ quantizer)}.}} 
\label{def: definition of the top-c quantizer}
Let $\bcQ(\cdot)$ be a bounded-distortion compression operator with parameters $\beta^2_q$ and $\sigma^2_q$. Let $\cS(\cdot)$ be the top-$c$ sparsifier, i.e., the deterministic compression operator listed in Table~\ref{table: examples of quantizers} (row 7), which can also be defined as~\cite{stich2018sparsified,basu2019qsparse}:
\begin{equation}
\left[\cS(x)\right]_j=\left\lbrace\begin{array}{ll}
x_j,&\emph{if } j\in\cI_c\\
0,&\emph{otherwise}
\end{array}
\right.
\end{equation}
where $\cI_c$ is the set containing the indices of the $c$ largest-magnitude components of $x$. In case of ties (i.e., when the first $c$ components are not uniquely determined), any tie-break rule is permitted. The top-$c$ quantizer operator is defined as:
\begin{equation}
\label{eq: top-c quantizer}
\bcC(x)=\alpha\cdot\bcQ(\cS(x)),\qquad \emph{where }\alpha=\left\lbrace\begin{array}{ll}
\frac{1}{1+\beta^2_q},&\emph{if }\bcQ(\cdot) \emph{ is an unbiased scheme}\\
1,&\emph{if }\bcQ(\cdot) \emph{ is a biased scheme with $\beta^2_q\leq 1$.}
\end{array}
\right.
\end{equation}
\end{definition}

\begin{lemma}{\emph{\textbf{(Property of the top-$c$ quantizer)}.}} 
\label{lemma: Property of the top-$c$ quantizer}
The compression operator defined by~\eqref{eq: top-c quantizer} is a bounded-distortion compression operator with parameters:
\begin{equation}
\label{eq: bounded distortion compression operator parameters}
\beta^2_c=1-\frac{c}{L}\left(1-\left((1-\alpha)^2+\alpha^2\beta^2_q\right)\right)
\qquad \emph{and} \qquad\sigma^2_c=
\alpha^2\sigma^2_q,
\end{equation}
 where the expectation in~\eqref{eq: bounded quantization noise variance condition} is taken over the conditional probability measure~$\mathbb{P}(\cdot|x)$ that governs the random behavior of the compression operator $\bcQ(\cdot)$. By choosing $\alpha$ according to~\eqref{eq: top-c quantizer}, we find that the parameters $\{\beta^2_c,\sigma^2_c\}$ reduce to $\left\{1-\frac{c}{L(1+\beta^2_q)},\frac{\sigma^2_q}{(1+\beta^2_q)^2}\right\}$ for unbiased $\bcQ(\cdot) $ and to   $\left\{1-(1-\beta^2_q)\frac{c}{L},\sigma^2_q\right\}$ for biased $\bcQ(\cdot) $.
\end{lemma}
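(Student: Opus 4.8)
The plan is to compute $\expec\|x - \bcC(x)\|^2$ directly by decomposing the input coordinates into those retained by the top-$c$ sparsifier and those zeroed out, and then applying the bounded-distortion property of $\bcQ(\cdot)$ on the sparsified vector. First I would write $\bcC(x) = \alpha\,\bcQ(\cS(x))$ and split the squared error as
\begin{equation}
\label{eq: proof split}
\|x - \alpha\bcQ(\cS(x))\|^2 = \sum_{j\in\cI_c}\big(x_j - \alpha[\bcQ(\cS(x))]_j\big)^2 + \sum_{j\notin\cI_c}x_j^2,
\end{equation}
using that $[\cS(x)]_j = 0$ for $j\notin\cI_c$ and that (for the compression schemes of interest, or in general because $\cS(x)$ has support in $\cI_c$) the output of $\bcQ$ on these zero entries contributes only to the off-support term; more carefully one treats $\bcQ(\cS(x))$ as a vector and adds and subtracts $\cS(x)$. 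The cleanest route is: $x - \alpha\bcQ(\cS(x)) = (x - \cS(x)) + (\cS(x) - \alpha\bcQ(\cS(x)))$, where the first term is supported off $\cI_c$ and the second is handled by the bounded-distortion bound on $\bcQ$.

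Next I would take conditional expectations. For the unbiased case, $\expec[\bcQ(\cS(x))] = \cS(x)$, so the cross term $\expec\langle x - \cS(x),\, \cS(x) - \alpha\bcQ(\cS(x))\rangle = (1-\alpha)\langle x - \cS(x), \cS(x)\rangle = 0$ because $x - \cS(x)$ and $\cS(x)$ have disjoint supports; for the biased case one keeps the cross term and bounds it, or more simply one just expands $\expec\|\cS(x) - \alpha\bcQ(\cS(x))\|^2 = \expec\|(1-\alpha)\cS(x) + \alpha(\cS(x) - \bcQ(\cS(x)))\|^2 \le (1-\alpha)^2\|\cS(x)\|^2 + \alpha^2(\beta^2_q\|\cS(x)\|^2 + \sigma^2_q) + 2|1-\alpha|\alpha\,\expec|\langle \cS(x), \cS(x)-\bcQ(\cS(x))\rangle|$. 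Actually, the structure of the claimed $\beta^2_c$, namely $(1-\alpha)^2 + \alpha^2\beta^2_q$, strongly suggests the intended argument treats the noise $\cS(x) - \bcQ(\cS(x))$ as "orthogonal on average" to $\cS(x)$ after the $\alpha$-rescaling is chosen so that the residual bias and the scaled quantization noise combine without a cross term — I would verify that the specific choices $\alpha = 1/(1+\beta^2_q)$ (unbiased) and $\alpha = 1$ (biased) are exactly the values that make this bookkeeping work, possibly by invoking $\expec\|\alpha\bcQ(z) - z\|^2 \le ((1-\alpha)^2 + \alpha^2\beta^2_q)\|z\|^2 + \alpha^2\sigma^2_q$ as a standalone sub-lemma valid for any bounded-distortion $\bcQ$ (this holds for unbiased $\bcQ$ by the bias-variance split, and trivially for $\alpha=1$).

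Combining, I would get $\expec\|x - \bcC(x)\|^2 \le \|x - \cS(x)\|^2 + \big((1-\alpha)^2 + \alpha^2\beta^2_q\big)\|\cS(x)\|^2 + \alpha^2\sigma^2_q$. The final step is the deterministic sparsifier estimate: since $\cI_c$ picks the $c$ largest-magnitude coordinates, $\|x - \cS(x)\|^2 \le \frac{L-c}{L}\|x\|^2$ and $\|\cS(x)\|^2 \le \|x\|^2$ (the first is the standard top-$c$ bound, the second is immediate). Substituting and collecting the coefficient of $\|x\|^2$ yields
\begin{equation}
\expec\|x - \bcC(x)\|^2 \le \Big(1 - \tfrac{c}{L} + \tfrac{c}{L}\big((1-\alpha)^2 + \alpha^2\beta^2_q\big)\Big)\|x\|^2 + \alpha^2\sigma^2_q = \beta^2_c\|x\|^2 + \sigma^2_c,
\end{equation}
which is exactly~\eqref{eq: bounded distortion compression operator parameters}; plugging in the two values of $\alpha$ gives the two specializations. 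The main obstacle I anticipate is getting the cross-term handling exactly right for the \emph{biased} $\bcQ$ case with $\beta^2_q \le 1$: there the bound $(1-\alpha)^2 + \alpha^2\beta^2_q$ with $\alpha = 1$ collapses to $\beta^2_q$, so really the biased case is the trivial substitution $\bcC = \bcQ\circ\cS$ and the only content is the sparsifier bound — the genuinely delicate algebra is confined to showing $\expec\|\alpha\bcQ(z)-z\|^2 \le ((1-\alpha)^2+\alpha^2\beta^2_q)\|z\|^2+\alpha^2\sigma^2_q$ for unbiased $\bcQ$, which is a clean bias–variance computation, and then verifying that $1-\frac{c}{L(1+\beta^2_q)}$ and $\frac{\sigma^2_q}{(1+\beta^2_q)^2}$ drop out after substituting $\alpha = 1/(1+\beta^2_q)$.
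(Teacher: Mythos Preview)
Your overall strategy matches the paper's: decompose $x-\alpha\bcQ(\cS(x))$ into a ``sparsifier residual'' and a ``scaled quantization error,'' kill the cross term (by unbiasedness of $\bcQ$ in one case, by disjoint supports when $\alpha=1$ in the other), apply the bounded-distortion bound of $\bcQ$, and finish with the top-$c$ sparsifier estimate. The intermediate bound you reach,
\[
\expec\|x-\bcC(x)\|^2 \le \|x-\cS(x)\|^2 + \big((1-\alpha)^2+\alpha^2\beta_q^2\big)\|\cS(x)\|^2 + \alpha^2\sigma_q^2,
\]
is exactly what the paper obtains.

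The gap is in your final step. You propose to plug in the two separate inequalities $\|x-\cS(x)\|^2 \le \tfrac{L-c}{L}\|x\|^2$ and $\|\cS(x)\|^2 \le \|x\|^2$. Substituting these gives the coefficient $1-\tfrac{c}{L} + \big((1-\alpha)^2+\alpha^2\beta_q^2\big)$, \emph{not} the claimed $1-\tfrac{c}{L} + \tfrac{c}{L}\big((1-\alpha)^2+\alpha^2\beta_q^2\big)$; you have lost the factor $c/L$ in front of the quantizer term, and the resulting $\beta_c^2$ could even exceed $1$. The point is that you should not bound the two pieces independently: since $\cS(x)$ and $x-\cS(x)$ have disjoint supports, use the \emph{exact} identity $\|x-\cS(x)\|^2 = \|x\|^2 - \|\cS(x)\|^2$, rewrite the bound as
\[
\|x\|^2 - \big(1-((1-\alpha)^2+\alpha^2\beta_q^2)\big)\|\cS(x)\|^2 + \alpha^2\sigma_q^2,
\]
and then apply the \emph{lower} bound $\|\cS(x)\|^2 \ge \tfrac{c}{L}\|x\|^2$ (valid because the coefficient in front is nonnegative for both choices of $\alpha$). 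This is precisely how the paper closes the argument.
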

\begin{proof}
See Appendix~\ref{app: Property of the top-c quantizer}.
\end{proof}
\vspace{-2mm}
\hfill $\blacksquare$
\end{example}


\subsection{Contributions: Significant reduction in communication, with almost no effect on steady-state performance}
In summary, we provide the following main contributions.
\begin{itemize}
\item We propose a communication-efficient variant of the ATC {diffusion approach~\eqref{eq: decentralized learning approach}} for solving decentralized learning under subspace constraints. The strategy blends \emph{differential  quantization} and \emph{error feedback}.
\item We provide a detailed characterization of the proposed approach for a general class of bounded-distortion compression operators satisfying~\eqref{eq: bounded quantization noise variance condition}, both in terms of mean-square stability and communication resources.
\item \emph{In terms of steady-state performance:} {We show} that, in the small step-size regime, i.e., when $\mu \rightarrow 0$ (so that higher order terms of the step-size can be neglected), the iterates $\bw_{k,i}$ generated by the communication-efficient approach resulting from incorporating  differential quantization and error feedback into the ATC {diffusion approach~\eqref{eq: decentralized learning approach}} satisfy:
\begin{equation}
\label{eq: main result}
{\limsup_{i\rightarrow\infty}\expec\|w^o_k-\bw_{k,i}\|^2\approx\kappa\mu},
\end{equation}
where $\kappa$ is a constant that depends mainly on the gradient noise {(i.e., the difference between the true gradient and its approximation)} variance, and does not depend on the compression noise terms $\{\beta^2_c,\sigma^2_c\}$. The same result holds when studying the ATC {diffusion approach~\eqref{eq: decentralized learning approach}}  in the absence of compression~\cite[Theorem~1]{nassif2020adaptation}. {As we will show later, this asymptotic equivalence is achievable because the compression error is contained in higher-order terms that vanish faster than $\mu$ as $\mu\rightarrow 0$.}
\item  \emph{In terms of bit rate:} {While result~\eqref{eq: main result} provides important reassurance about the accuracy of the compressed approach, it does not address the communication efficiency directly, which is often quantified in terms of bit-rate.} In the absence of the absolute quantization noise term ($\sigma^2_c=0$), result~\eqref{eq: main result} is  achieved at the expense of communicating some quantities with high precision (as previously explained). In the presence of the absolute noise term, the analysis reveals that, to guarantee~\eqref{eq: main result}, {the parameters of the compression schemes (which are chosen by the designer) should be set so that the absolute noise term converges to zero as $\mu\rightarrow 0$.} 
We prove that this result can be achieved with a bit rate that remains bounded as $\mu\rightarrow 0$, despite the fact that we are requiring an increasing precision as the step-size decreases.
\end{itemize}

Thus, our theoretical findings  reveal that, in the small step-size regime, the proposed strategy attains the performance achievable in the absence of compression, despite the use of a finite number of bits. This demonstrates the effectiveness of the approach in maintaining performance while reducing communication overheads. While the theoretical findings show the optimality of the strategy in the small step-size regime, the experimental results in Sec.~\ref{sec: simulation results} illustrate its practical effectiveness in terms of achieving superior or competitive performance against state-of-the-art baselines in various scenarios, including those beyond the small step-size regime.


\section{Decentralized algorithmic framework: compressed communications}
\label{sec: Decentralized algorithmic framework: compressed communications}
In this work, we propose the DEF-ATC (differential error feedback - adapt then combine) diffusion strategy listed in Algorithm~\ref{alg: Quantized decentralized approach} {and in~\eqref{eq: step 1}--\eqref{eq: step 3}} for solving problem~\eqref{eq: network constrained problem} in a decentralized and communication-efficient manner. At each iteration $i$, each agent $k$ in the network performs three steps. The first step, which corresponds to the \emph{adaptation} step, is identical to the adaptation step~\eqref{eq: step1}, except that the step-size $\mu$ in~\eqref{eq: step1} is replaced by  $\mu/\zeta$ in~\eqref{eq: adaptation step}, {where $\zeta\in(0,1]$ is a \emph{damping} parameter appearing  in the compression step~\eqref{eq: reconstruction}. This parameter is used to {counteract} the instability induced by the compression errors. 
The second step is the \emph{compression} step. To update $\{\bphi_{\ell,i}\}_{\ell\in\cN_k}$, each agent $k$ first encodes the error compensated difference $\bpsi_{k,i}-\bphi_{k,i-1}+\bz_{k,i-1}$ (using {a bounded-distortion compression operator}\footnote{Since the compression scheme characteristics can vary {across agents}, the compression operator becomes $\bcC_k(\cdot)$ instead of $\bcC(\cdot)$ with a subscript $k$ added to $\bcC$.} $\bcC_k(\cdot)$), and broadcasts the result 
to its neighbors. Then, agent $k$ updates the \emph{local} compression error vector $\bz_{k,i}$ according to~\eqref{eq: quantization error 1} and  performs the reconstruction on each received vector by first decoding it to obtain $\bdelta_{\ell,i}$, and then computing the predictor $\bphi_{\ell,i}$ according to~\eqref{eq: reconstruction} {where $\bdelta_{\ell,i}$ is scaled by the aforementioned damping parameter $\zeta$}. Observe that implementing the compression step in Algorithm~\ref{alg: Quantized decentralized approach} requires storing the previous compression error $\bz_{k,i-1}$ and the previous predictors $\{\bphi_{\ell,i-1}\}_{\ell\in\cN_k}$ by agent $k$.  The compression step is followed by the \emph{combination} step~\eqref{eq: combination} where agent $k$ combines the reconstructed vectors $\{\bphi_{\ell,i}\}$ using the combination coefficients $\{A_{k\ell}\}$ and a \emph{mixing} parameter $\gamma\in(0,1]$. The resulting vector $\bw_{k,i}$ is the estimate of $w^o_k$, the $k$-th subvector of $\cw^o$ in~\eqref{eq: network constrained problem}, at agent~$k$ and iteration~$i$. As we will see in Sec.~\ref{sec: Stochastic performance analysis}, and as for the damping coefficient $\zeta$, the mixing parameter~$\gamma$ in the combination step~\eqref{eq: combination}  can also be used to control the algorithm stability.  A block diagram  illustrating the implementation of   the DEF-ATC {diffusion approach} at agent $k$ is provided in Fig.~\ref{fig: data settings}.

\begin{figure}
\begin{center}
\includegraphics[scale=0.45]{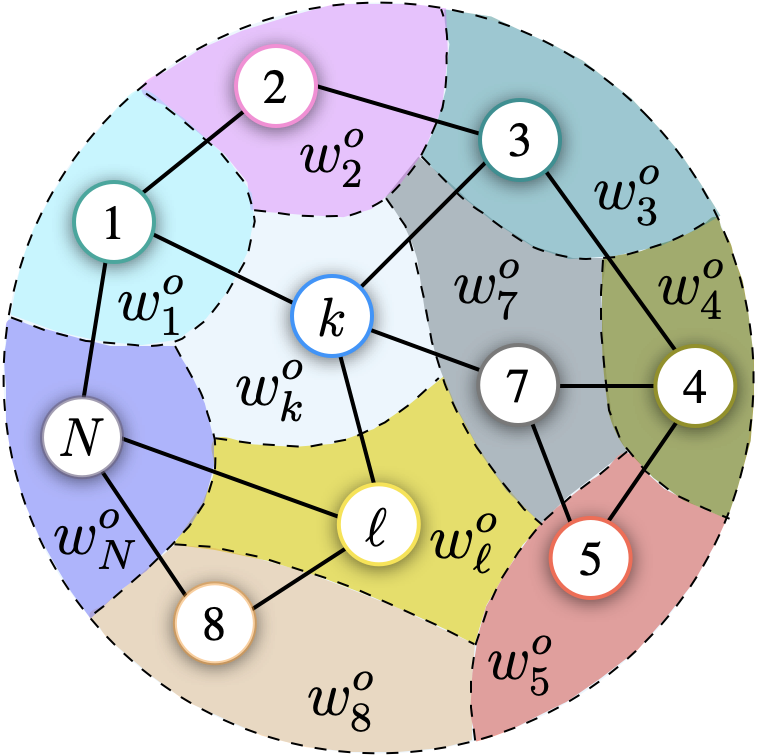}\qquad
\includegraphics[scale=0.26]{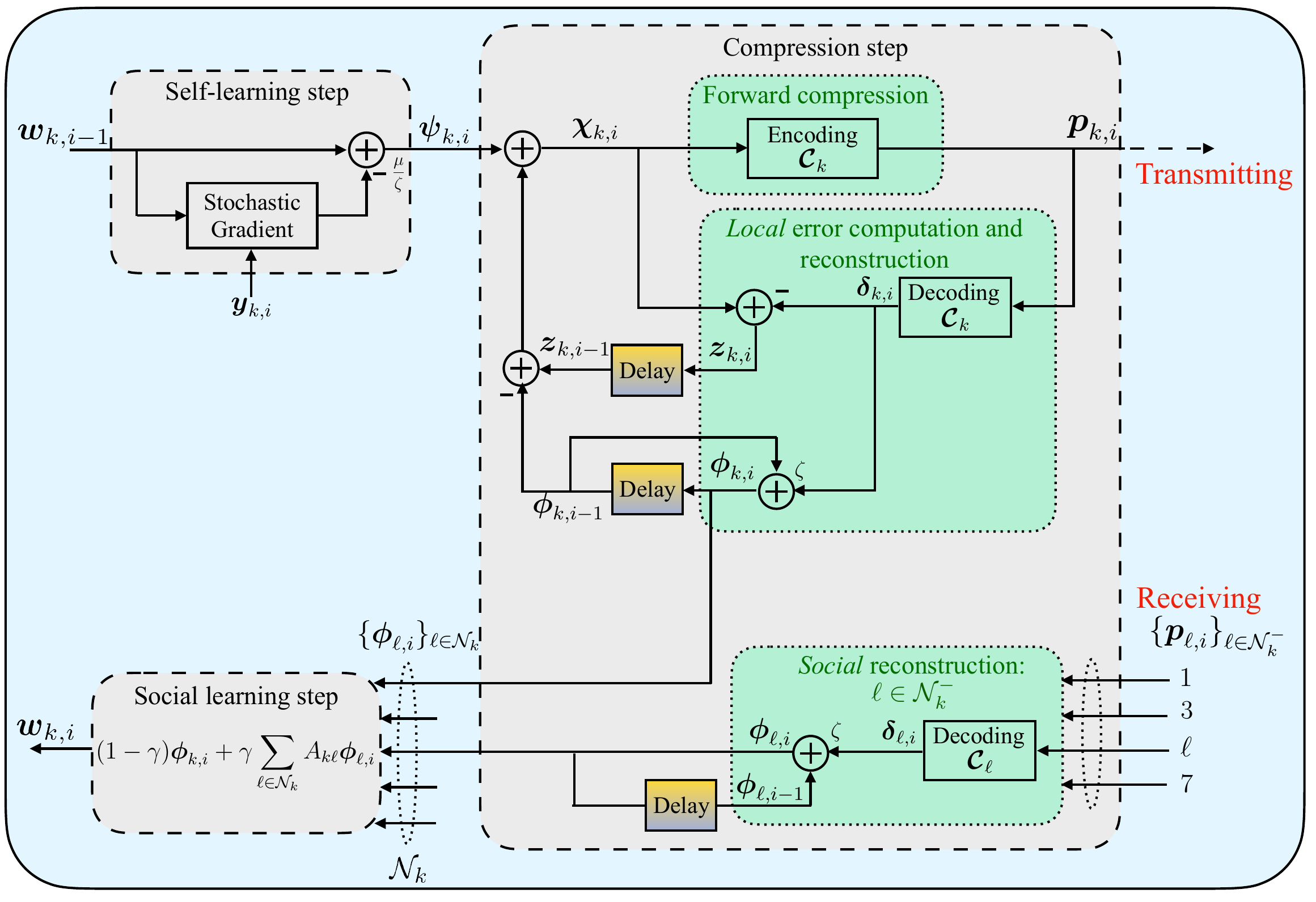}
\caption{\textit{(Left)} An illustration of a multitask network~\cite{nassif2020multitask,nassif2023quantization}. The objective at agent $k$ is to estimate $w^o_k$ (of dimension $M_k\times 1$), the $k$-th subvector of $\cw^o$ in~\eqref{eq: network constrained problem}. In this example, the neighborhood set of agent $k$ is given by $\cN_k=\{1,k,3,\ell,7\}$. \textit{(Right)} The implementation of the DEF-ATC {diffusion approach} listed in Alg.~\ref{alg: Quantized decentralized approach} at agent $k$. The set $\cN_k^{-}$ is the  neighborhood set of agent $k$, excluding $k$ itself. The compression step consists of three sub-steps: $i)$ the \emph{forward compression} step where agent $k$ encodes the error-compensated difference $\boldsymbol{\chi}_{k,i}=\bpsi_{k,i}-\bphi_{k,i-1}+\bz_{k,i-1}$ and sends the resulting vector ${\bp}_{k,i}$ (sequence of symbols or bits) to its neighbors; $ii)$  the \emph{local error computation and reconstruction} step where agent $k$ decodes the local vector  ${\bp}_{k,i}$ to obtain $\bdelta_{k,i}=\bcC_k(\bpsi_{k,i}-\bphi_{k,i-1}+\bz_{k,i-1})$, updates the local compression error vector $\bz_{k,i}$ according to~\eqref{eq: quantization error 1} and the local predictor $\bphi_{k,i}$ according to~\eqref{eq: step 2}; and $iii)$ the \emph{(social) reconstruction} step where agent $k$ receives the encoded vectors  $\{{\bp}_{\ell,i}\}_{\ell\in\cN_k^{-}}$  from its neighbors, decodes them to obtain $\{\bdelta_{\ell,i}\}_{\ell\in\cN_k^{-}}$, and then updates the predictors $\{\bphi_{\ell,i}\}_{\ell\in\cN_k^{-}}$ according to~\eqref{eq: reconstruction}. The resulting vectors {$\bphi_{k,i}$ and $\{\bphi_{\ell,i}\}_{\ell\in\cN_k^{-}}$} are then used in the social learning step~\eqref{eq: combination}. Observe that implementing the compression step requires storing the previous compression error $\bz_{k,i-1}$ and the previous estimates $\{\bphi_{\ell,i-1}\}_{\ell\in\cN_k}$ by agent $k$.}
\label{fig: data settings}
\end{center}
\end{figure}

\begin{algorithm}[t]
 \KwInput{initializations $\bw_{k,0}=0$, $\bphi_{k,0}=0$, and $\bz_{k,0}=0$, small step-size $\mu$, damping coefficient $\zeta\in(0,1]$, mixing parameter $\gamma\in(0,1]$, combination matrix $\cA$ satisfying~\eqref{eq: condition A}.}
\For{$i=1,2,\ldots,$ on the $k$-th node}
{   
\textbf{Adapt:} update $\bw_{k,i-1}$ according to: \begin{equation}\label{eq: adaptation step}
\bpsi_{k,i}=\bw_{k,i-1}-\frac{\mu}{\zeta}\widehat{\nabla_{w_k}J_k}(\bw_{k,i-1})
\end{equation}

\medskip

\textbf{Compress and broadcast:} \\
 $\bullet$ encode the error compensated difference $\bpsi_{k,i}-\bphi_{k,i-1}+\bz_{k,i-1}$ using {a bounded-distortion compression operator $\bcC_k(\cdot)$} and broadcast the result {$\bp_{k,i}$} to {the} neighbors $\cN_k$\\
  $\bullet$ upon receiving the compressed messages {$\{\bp_{\ell,i}\}$}  from neighbors $\ell\in\cN_k$, {first decode} them to obtain $\{\bdelta_{\ell,i}=\bcC_{\ell}(\bpsi_{\ell,i}-\bphi_{\ell,i-1}+\bz_{\ell,i-1})\}_{\ell\in\cN_k}$, and then {compute} $\{\bphi_{\ell,i}\}_{\ell\in\cN_k}$ according~to:
 \begin{equation}\label{eq: reconstruction}\bphi_{\ell,i}=\bphi_{\ell,i-1}+\zeta\bdelta_{\ell,i},\qquad \ell\in\cN_k\end{equation}
 $\bullet$  update the local compression error: \begin{equation}\label{eq: quantization error 1}{\bz_{k,i}=(\bpsi_{k,i}-\bphi_{k,i-1}+\bz_{k,i-1})-\bdelta_{k,i}}\end{equation}\\
  
\medskip

\textbf{Combine:} Update local model according to:\begin{equation}\label{eq: combination}\bw_{k,i}=(1-\gamma)\bphi_{k,i}+\gamma\sum_{\ell\in\cN_k}A_{k\ell}\bphi_{\ell,i}\end{equation}
}
 \caption{DEF-ATC (differential error feedback - adapt then combine) {diffusion strategy for solving~\eqref{eq: network constrained problem}}}
 \label{alg: Quantized decentralized approach}
\end{algorithm}

{For} the sake of convenience, we {rewrite} Algorithm~\ref{alg: Quantized decentralized approach} in the following compact form:
\begin{subequations}
\label{eq: compact algorithm form}
\begin{empheq}[left=\empheqlbrace]{align}
\bpsi_{k,i}&=\bw_{k,i-1}-\frac{\mu}{\zeta}\widehat{\nabla_{w_k}J_k}(\bw_{k,i-1})\label{eq: step 1}\\
\bphi_{k,i}&=\bphi_{k,i-1}+\zeta\bcC_{k}(\bpsi_{k,i}-\bphi_{k,i-1}+\bz_{k,i-1})\label{eq: step 2}\\
\bw_{k,i}&=(1-\gamma)\bphi_{k,i}+\gamma\sum_{\ell\in\cN_k}A_{k\ell}\bphi_{\ell,i}\label{eq: step 3}
\end{empheq}
\end{subequations}
where the compression error $\bz_{k,i}$ is updated according to:
\begin{equation}
\label{eq: quantization error}
\bz_{k,i}=(\bpsi_{k,i}-\bphi_{k,i-1}+\bz_{k,i-1})-\bcC_{k}(\bpsi_{k,i}-\bphi_{k,i-1}+\bz_{k,i-1}).
\end{equation}
 Observe that, in the absence of compression (i.e., when the operator $\bcC_k(\cdot)$ in~\eqref{eq: step 2} and~\eqref{eq: quantization error} is replaced by the identity operator and the parameters $\zeta$ and $\gamma$ in~\eqref{eq: step 2} and~\eqref{eq: step 3}, respectively, are set to 1) we {recover} the diffusion ATC approach~\eqref{eq: decentralized learning approach}. Therefore, Algorithm~\ref{alg: Quantized decentralized approach} can be seen as a communication-efficient variant of the Adapt-Then-Combine (ATC) approach. To mitigate the negative impact of compression, the {DEF-ATC approach} uses \emph{differential quantization} and \emph{error-feedback} in step~\eqref{eq: step 2}. Differential quantization consists of compressing  differences of the form $\bpsi_{k,i}-\bphi_{k,i-1}$ and  transmitting them,  instead of communicating compressed versions of the estimates $\bpsi_{k,i}$. Error feedback, on the other hand, consists  of locally storing the compression error $\bz_{k,i}$ (i.e., the difference between the input and output of the compression operator), and incorporating it back into the next iteration. {In Remark~1 further ahead, we explain the role of the compression error $\bz_{k,i}$ and how its introduction helps mitigate the accumulation of errors over time.}
 
\section{Mean-square-error and bit rate stability analysis}
\label{sec: Stochastic performance analysis}

\subsection{Modeling assumptions}
\label{subsec: Modeling assumptions}
In this section, we analyze strategy~\eqref{eq: compact algorithm form} with a matrix $\cA$ satisfying~\eqref{eq: condition A}  by examining the average squared distance between $\bw_{k,i}$ and $w^o_k$, namely, $\expec\|w^o_k-\bw_{k,i}\|^2$, under the following assumptions on the {risks} $\{J_k(\cdot)\}$, the gradient noise processes $\{\bs_{k,i}(\cdot)\}$ defined by~\cite{sayed2014adaptation}:
\begin{equation}
\label{eq: gradient noise process}
\bs_{k,i}(w)\triangleq \nabla_{w_k}J_k(w)-\widehat{\nabla_{w_k}J_k}(w),
\end{equation}
and the compression operators $\{\bcC_k(\cdot)\}$.

\begin{assumption}{\emph{\textbf{(Conditions on individual and aggregate {risks})}}.}
\label{assump: assumption of the individual costs}
The individual {risks} $J_k(w_k)$ are assumed to be twice differentiable {and convex} such that:
\begin{equation}
\label{eq: individual costs}
\lambda_{k,\min}I_{M_k}\leq \nabla^2_{w_k}J_k(w_k)\leq \lambda_{k,\max}I_{M_k},
\end{equation}
{where $\lambda_{k,\min}\geq 0$} for $k=1,\ldots,K$. It is further assumed that, for any $\{w_k\in\mathbb{R}^{M_k}\}$, the  individual {risks}  satisfy:
\begin{equation}
\label{eq: hessian}
0<\lambda_{\min} I_{P}\leq \cU^\top\emph{\diag}\left\{\nabla^2_{w_k}J_k(w_k)\right\}_{k=1}^K\cU\leq \lambda_{\max} I_{P},
\end{equation}
for some positive parameters $\lambda_{\min}\leq \lambda_{\max}$.\hfill\qed
\end{assumption}
\noindent As explained in~\cite{nassif2020adaptation}, condition~\eqref{eq: hessian} ensures that problem~\eqref{eq: network constrained problem} has a unique minimizer $\cw^o$.

\begin{assumption}{\emph{\textbf{(Conditions on gradient noise)}}.}
\label{assump: gradient noise}
The gradient noise process defined in~\eqref{eq: gradient noise process} satisfies for $k=1,\ldots,K$:
\begingroup
\allowdisplaybreaks\begin{align}
\expec\left[\bs_{k,i}(\bw_{k,i-1})|\{{\bphi_{\ell,i-1},\bz_{\ell,i-1}}\}_{\ell=1}^K\right]&=0,\label{eq: expectation of the gradient noise}\\
\expec\left[\|\bs_{k,i}(\bw_{k,i-1})\|^2|\{{\bphi_{\ell,i-1},\bz_{\ell,i-1}}\}_{\ell=1}^K\right]&\leq\beta^2_{s,k}\|{w^o_k-\bw_{k,i-1}}\|^2+\sigma^2_{s,k},\label{eq: expectation squared of the gradient noise}
\end{align}
\endgroup
for some $\beta^2_{s,k}\geq 0$ and $\sigma^2_{s,k}\geq 0$.\hfill\qed
\end{assumption}
\noindent As explained in~\cite{sayed2014adaptation,sayed2014adaptive,sayed2013diffusion}, these conditions are satisfied  by many  {risk}  functions of interest in learning and adaptation such as quadratic and  regularized logistic {costs}.  Condition~\eqref{eq: expectation of the gradient noise} states that the gradient vector approximation should be unbiased conditioned on the {iterates generated at the previous time instant}. Condition~\eqref{eq: expectation squared of the gradient noise} states that the second-order moment of the gradient noise should get smaller for better estimates, since it is bounded by the squared norm of the iterate.

\begin{assumption}{\emph{\textbf{(Conditions on compression operators)}.}}
\label{assump: quantization noise}
In step~\eqref{eq: step 2} of the {DEF-ATC strategy}, each agent $k$ at time $i$ applies to the error compensated difference $\bchi_{k,i}=\bpsi_{k,i}-\bphi_{k,i-1}+\bz_{k,i-1}$ a bounded-distortion compression operator $\bcC_k(\cdot)$ {(see Definition~\ref{def: bounded distortion})} with compression noise parameters $\beta^2_{c,k}$ and  $\sigma^2_{c,k}$. It is assumed that given the past history,   the randomized compression mechanism depends only on the quantizer input $\bchi_{k,i}$. Consequently, from~\eqref{eq: bounded quantization noise variance condition}, we get:
\begin{align}
\expec\left[\|\bchi_{k,i}-\bcC_k(\bchi_{k,i})\|^2|\boldh_i\right]&=\expec\left[\|\bchi_{k,i}-\bcC_k(\bchi_{k,i})\|^2|\bchi_{k,i}\right]\leq \beta^2_{c,k}\|\bchi_{k,i}\|^2+\sigma^2_{c,k},\label{eq: expectation squared of the quantization noise random}
\end{align}
where  $\boldh_i$ is the  vector collecting all iterates generated by~\eqref{eq: compact algorithm form} before the quantizer is applied to $\bchi_{k,i}$, namely, 
$\Big\{
\{\bphi_{\ell,j}\}_{j=1}^{i-1},\{\bpsi_{\ell,j}\}_{j=1}^{i},\{\bz_{\ell,j}\}_{j=1}^{i-1}\Big\}_{\ell=1}^K$
.\hfill\qed
\end{assumption}

\subsection{Network error vector recursion}
 In the following, we derive a useful recursion that allows to examine the time-evolution across the network of the error dynamics relative to the reference vector  $\cw^o=\col\{w^o_k\}_{k=1}^K$ defined in~\eqref{eq: network constrained problem}.  Let $\bwt_{k,i}=w^o_k-\bw_{k,i}$, $\bpsit_{k,i}=w^o_k-\bpsi_{k,i}$, and $\bphit_{k,i}=w^o_k-\bphi_{k,i}$. Using~\eqref{eq: gradient noise process}  and the mean-value theorem~\cite[pp.~24]{polyak1987introduction},~\cite[Appendix~D]{sayed2014adaptation}, we can express the stochastic gradient vector appearing in~\eqref{eq: step 1} as follows:
\begin{equation}
\label{eq: stochastic gradient vector-mean-value}
\widehat{\nabla_{w_k}J_k}(\bw_{k,i-1})=-\bH_{k,i-1}\bwt_{k,i-1}+b_k-\bs_{k,i}(\bw_{k,i-1}){,}
\end{equation}
where  $\bH_{k,i-1}\triangleq \int_{0}^1\nabla^2_{w_k}J_k(w^o_k-t\bwt_{k,i-1})dt$ and $b_k\triangleq\nabla_{w_k}J_k(w^o_k)$.  By subtracting $w^o_k$ from both sides of~\eqref{eq: step 1}, {by using~\eqref{eq: stochastic gradient vector-mean-value},} and by introducing the following network quantities:
\begingroup
\allowdisplaybreaks\begin{align}
b&\triangleq\col\left\{b_1,\ldots,b_K\right\},\label{eq: equation for b}\\
\bs_i&\triangleq\col\left\{\bs_{1,i}(\bw_{1,i-1}),\ldots,\bs_{K,i}(\bw_{K,i-1})\right\},\label{eq: equation for s}\\
\bcH_{i-1}&\triangleq\diag\left\{\bH_{1,i-1},\ldots,\bH_{K,i-1}\right\},\label{eq: equation for H}\\
\bcwt_{i-1}&\triangleq\col\{\bwt_{1,i-1},\ldots,\bwt_{K,i-1}\},\label{eq: collection of the network error vector}
\end{align}
\endgroup
we can show that the network error vector $\bpsit_i=\col\{\bpsit_{k,i}\}_{k=1}^K$ evolves according to:
\begin{equation}
\bpsit_i=\left(I_M-\frac{\mu}{\zeta}\bcH_{i-1}\right)\bcwt_{i-1}-\frac{\mu}{\zeta}\bs_i+\frac{\mu}{\zeta} b.\label{eq: network error vector psi}
\end{equation}
By subtracting $w^o_k$  from both sides of~\eqref{eq: step 3}, by replacing $w^o_k$ by $(1-\gamma) w^o_k+\gamma w^o_k$, and by using  $w^o_k=\sum_{\ell\in\cN_k}A_{k\ell}w^o_{\ell}$~\cite[Sec. III-B]{nassif2020adaptation}, we obtain:
\begin{equation}
\label{eq: evolution of wtk i}
\begin{split}
\bwt_{k,i}&=(1-\gamma) \bphit_{k,i}+\gamma\sum_{\ell\in\cN_k}A_{k\ell}\bphit_{\ell,i}.
\end{split}
\end{equation}
From~\eqref{eq: evolution of wtk i}, we {find} {that the} network error vector $\bcwt_{i-1}$ in~\eqref{eq: collection of the network error vector} evolves according to:
\begin{equation}
\bcwt_{i-1}=(1-\gamma)\bphit_{i-1}+\gamma\cA\bphit_{i-1}=\cA'\bphit_{i-1},\label{eq: wt in terms of phi}
\end{equation} 
where 
\begin{align}
\bphit_{i}&\triangleq\col\{\bphit_{1,i},\ldots,\bphit_{K,i}\}{,}\label{eq: collection of phi tilde}\\
\cA'&\triangleq(1-\gamma) I_M+\gamma\cA{.}\label{eq: equation for A'}
\end{align}  
By  subtracting $w^o_k$ from both sides of~\eqref{eq: step 2} and by adding and subtracting $w^o_k$ to the difference $\bpsi_{k,i}-\bphi_{k,i-1}$, we can write:
\begin{equation}
\label{eq: bphit_k}
\bphit_{k,i}=\bphit_{k,i-1}- \zeta\bcC_k(\bphit_{k,i-1}-\bpsit_{k,i}+\bz_{k,i-1}).
\end{equation}
 Now, by adding and subtracting  $\zeta(\bphit_{k,i-1}-\bpsit_{k,i}+\bz_{k,i-1})$ to the RHS of the above equation, we obtain:
\begin{equation}
\label{eq: bphit_k in terms of the quantization noise}
\bphit_{k,i}=(1-\zeta)\bphit_{k,i-1}+\zeta\bpsit_{k,i}+\zeta(\bz_{k,i}-\bz_{k,i-1}),
\end{equation}
in terms of the compression error vector $\bz_{k,i}$ defined in~\eqref{eq: quantization error}. By combining~\eqref{eq: network error vector psi},~\eqref{eq: wt in terms of phi}, and~\eqref{eq: bphit_k in terms of the quantization noise}, we  conclude that the network  error vector $\bphit_i$ in~\eqref{eq: collection of phi tilde} evolves according to the following dynamics
:
\begin{equation}
\label{eq: network weight error vector}
\bphit_i=\bcB_{i-1}\bphit_{i-1}-\mu\bs_i+\mu b+\left(\bz_i-\bz_{i-1}\right),
\end{equation} 
where
\begin{align}
\bcB_{i-1}&\triangleq (1-\zeta)I_M+\zeta\left(I_M-\frac{\mu}{\zeta}\bcH_{i-1}\right)\cA',\label{eq: equation for B}\\
\bz_i&\triangleq\zeta\col\left\{ \bz_{k,i}\right\}_{k=1}^K.\label{eq: equation for z}
\end{align}

\noindent\textbf{Remark 1 (Temporal filtering of the compression error):} A direct consequence of feeding back the error in the compression step~\eqref{eq: step 2} is {to subtract} the compression error from previous instants in recursion~\eqref{eq: network weight error vector}, thereby allowing for a correction mechanism\footnote{To see this, we can simply remove the error feedback mechanism from the approach~\eqref{eq: compact algorithm form} by replacing the compression step~\eqref{eq: step 2} by $\bphi_{k,i}=\bphi_{k,i-1}+\zeta\bcC_{k}(\bpsi_{k,i}-\bphi_{k,i-1})$ and derive the network  error vector $\bphit_i$ in~\eqref{eq: collection of phi tilde} by following similar arguments as in~\eqref{eq: stochastic gradient vector-mean-value}--\eqref{eq: equation for z}. Instead of~\eqref{eq: network weight error vector}, we {would arrive at the} following dynamics:
\begin{equation}
\label{eq: network weight error vector without error feedback}
\bphit_i=\bcB_{i-1}\bphit_{i-1}-\mu\bs_i+\mu b+\bz_i,
\end{equation}
where {we obtain in~\eqref{eq: network weight error vector without error feedback} the instantaneous noise vector $\bz_i$ instead of the difference vector $\bz_i-\bz_{i-1}$ as in~\eqref{eq: network weight error vector}.}}. This correction helps {mitigate} the accumulation of errors over time, leading to  improved network performance. \hfill $\blacksquare$

As the presentation will reveal, the study of the network behavior in the presence of error feedback is a challenging task since, in addition to analyzing the dynamics of the network error vector $\bphit_i$, we need to examine how the compression error~\eqref{eq: equation for z}, which is fed back into the network through the compression step~\eqref{eq: step 2}, affects its behavior. When all is said and done, the results will help clarify the effect of network topology, step-size $\mu$, damping coefficient~$\zeta$, mixing parameter $\gamma$, gradient (through  $\{\beta^2_{s,k},\sigma^2_{s,k}\}$) and compression   (through  $\{\beta^2_{c,k},\sigma^2_{c,k}\}$) noise processes on the network mean-square-error stability and performance, and will provide insights into the design of effective compression operators for decentralized learning. 

\subsection{Mean-square-error stability}
\label{subsec: Mean-square-error stability}
The mean-square-error analysis {will be carried out} by first establishing the boundedness of $\limsup_{i\rightarrow\infty}\expec\|\bphit_i-\bz_i\|^2$, and then using {relation~\eqref{eq: wt in terms of phi} and Holder's and  Jensen's inequalities} to deduce boundedness of $\limsup_{i\rightarrow\infty}\expec\|\bcwt_i\|^2$. Therefore, in the following, we first study the stability of the network error vector $\bphit_i^z$ defined as $\bphit_i^z\triangleq\bphit_i-\bz_i$ and evolving according to:
\begin{equation}
\label{eq: network weight error vector 2}
\boxed{\bphit_i^z=\bcB_{i-1}\bphit^z_{i-1}-\mu\bs_i+\mu b-(I_M-\bcB_{i-1})\bz_{i-1}}
\end{equation} 
The above identity can be found by adding and subtracting the term $\bcB_{i-1}\bz_{i-1}$ to the RHS of~\eqref{eq: network weight error vector}. {We analyze the stability of recursion~\eqref{eq: network weight error vector 2} by first transforming it into a more convenient form (shown later in~\eqref{eq: evolution of the centralized recursion 1} and~\eqref{eq: evolution of the centralized recursion 3})} using the Jordan canonical {decomposition~\cite{horn2012matrix}} of the matrix $\cA'$ defined in~\eqref{eq: equation for A'}. To exploit the eigen-structure of  $\cA'$, we first recall that a matrix $\cA$ satisfying the conditions in~\eqref{eq: condition A} (for a full-column rank semi-unitary matrix $\cU$) has a Jordan decomposition of the form $\cA=\cV_{\epsilon}\Lambda_{\epsilon}\cV_{\epsilon}^{-1}$ with~\cite[Lemma~2]{nassif2020adaptation}:
\begin{equation}
\label{eq: jordan decomposition of A}
\cV_{\epsilon}=\left[\begin{array}{c|c}
\cU&\cV_{R,\epsilon}
\end{array}\right],~
\Lambda_{\epsilon}=\left[\begin{array}{c|c}
I_P&0\\
\hline
0&\cJ_{\epsilon}
\end{array}\right],~
\cV_{\epsilon}^{-1}=\left[\begin{array}{c}
\cU^\top\\
\hline
\cV_{L,\epsilon}^{\top}
\end{array}\right],
\end{equation} 
where $\cJ_{\epsilon}$ is a Jordan matrix with  eigenvalues (which may be complex but have magnitude less than one) on the diagonal and $\epsilon>0$  on the super-diagonal~\cite[Lemma~2]{nassif2020adaptation},\cite[pp.~510]{sayed2014adaptation}. The parameter $\epsilon$ is chosen small enough to ensure  $\rho(\cJ_{\epsilon})+\epsilon\in(0,1)$~\cite{nassif2020adaptation}. Consequently, the matrix $\cA'$ in~\eqref{eq: equation for A'} has a Jordan decomposition of the form $\cA'=\cV_{\epsilon}\Lambda'_{\epsilon}\cV_{\epsilon}^{-1}$ where:
\begin{equation}
\label{eq: jordan decomposition of A'}
\Lambda'_{\epsilon}=\left[\begin{array}{c|c}
I_P&0\\
\hline
0&\cJ'_{\epsilon}
\end{array}\right], \quad{\text{with }}\cJ'_{\epsilon}\triangleq (1-\gamma) I_{M-P}+\gamma\cJ_{\epsilon}.
\end{equation} 

By multiplying both sides of~\eqref{eq: network weight error vector 2} from the left by $\cV_{\epsilon}^{-1}$ in~\eqref{eq: jordan decomposition of A},  we obtain the transformed iterates and variables:
\begin{align}
\cV_{\epsilon}^{-1}\bphit_i^z&=\left[
\begin{array}{c}
\cU^\top\bphit^z_i\\
\cV_{L,\epsilon}^{\top}\bphit^z_i
\end{array}
\right]\triangleq\left[
\begin{array}{c}
\bphib^z_i\\
\bphic^z_i
\end{array}
\right],\label{eq: transformed variable phi definition}\\
\cV_{\epsilon}^{-1}\bs_i&=\left[
\begin{array}{c}
\cU^\top\bs_i\\
\cV_{L,\epsilon}^{\top}\bs_i
\end{array}
\right]\triangleq\left[
\begin{array}{c}
\bsb_i\\
\bsc_i
\end{array}
\right],\label{eq: transformed variable s definition}\\
\cV_{\epsilon}^{-1}b&=\left[
\begin{array}{c}
\cU^\top b\\
\cV_{L,\epsilon}^{\top}b
\end{array}
\right]\triangleq\left[
\begin{array}{c}
0\\
\widecheck{b}
\end{array}
\right],\label{eq: bias transformed vector}\\
\cV_{\epsilon}^{-1}\bz_{i-1}&=\left[
\begin{array}{c}
\cU^\top \bz_{i-1}\\
\cV_{L,\epsilon}^{\top}\bz_{i-1}
\end{array}
\right]\triangleq\left[
\begin{array}{c}
\bzb_{i-1}\\
\bzc_{i-1}
\end{array}
\right],\label{eq: quantization transformed vector}
\end{align}
where  in~\eqref{eq: bias transformed vector}  we used the fact that $\cU^\top b=0$ as shown in~\cite[Sec. III-B]{nassif2020adaptation}. In particular, the transformed components $\bphib^z_{i}$ and ${\bphic}^z_{i}$ evolve according to the recursions:
\begin{align}
\bphib^z_{i}&=(I_P-\mu\bcD_{11,i-1})\bphib^z_{i-1}-\mu\bcD_{12,i-1}\bphic^z_{i-1}-\mu\bsb_{i}-\mu\bcD_{11,i-1}\bzb_{i-1}-\mu\bcD_{12,i-1}\bzc_{i-1}\label{eq: evolution of the centralized recursion 1}\\
{\bphic}^z_{i}&=(\cJ''_{\epsilon}-\mu\bcD_{22,i-1})\bphic^z_{i-1}-\mu\bcD_{21,i-1}\bphib^z_{i-1}+\mu\widecheck{b}-\mu\bsc_{i}-\mu\bcD_{21,i-1}\bzb_{i-1}-\left(\zeta(I-\cJ_{\epsilon}')+\mu\bcD_{22,i-1}\right)\bzc_{i-1}\label{eq: evolution of the centralized recursion 3}
\end{align}
where 
\begin{align}
\bcD_{11,i-1}&\triangleq\cU^{\top}\bcH_{i-1}\cU,\label{eq: definition of D11 1}\\
 \bcD_{12,i-1}&\triangleq\cU^{\top}\bcH_{i-1}\cV_{R,\epsilon}\cJ_{\epsilon}',\\
\bcD_{21,i-1}&\triangleq \cV_{L,\epsilon}^{\top}\bcH_{i-1}\cU,\\
\bcD_{22,i-1}&\triangleq\cV_{L,\epsilon}^{\top}\bcH_{i-1}\cV_{R,\epsilon}\cJ_{\epsilon}',\label{eq: definition of D22 1}\\
\cJ''_{\epsilon}&\triangleq(1-\zeta)I_{M-P}+\zeta\cJ'_{\epsilon}.\label{eq: definition of J'' epsilon}
\end{align}

\begin{theorem}{\emph{\textbf{(Mean-square-error stability)}.}} 
\label{theorem: Network mean-square-error stability}Consider a network of $K$ agents running the \emph{DEF-ATC} {diffusion approach} (listed in Algorithm~\ref{alg: Quantized decentralized approach}) {to solve problem~\eqref{eq: network constrained problem}} 
under Assumptions~\ref{assump: assumption of the individual costs},~\ref{assump: gradient noise}, and~\ref{assump: quantization noise}, with a matrix $\cA$ satisfying~\eqref{eq: condition A}. 
In the absence of the relative compression noise term (i.e., $\beta^2_{c,k}=0$, $\forall k$), let the damping and mixing parameters be such that $\zeta=\gamma=1$. In the presence of the relative compression noise {(i.e., at least one $\beta^2_{c,k}$ is positive for some agent $k$)}, let $\zeta\in(0,1]$ and $\gamma\in(0,1]$ be such that the two following conditions are satisfied:
\begin{equation}
\label{eq: mixing parameter condition theorem}
{0<\gamma\zeta<\frac{1-(\rho(\cJ_{\epsilon})+\epsilon)}{4v_1^2v_2^2\beta^2_{c,\max}(\rho(I-\cJ_{\epsilon})+\epsilon)^2},}
\end{equation}
and
\begin{equation}
\label{eq: mixing parameter condition theorem 2}
\begin{split}
{\gamma\zeta\frac{(\rho(I-\cJ_{\epsilon})+\epsilon)^2}{1-(\rho(\cJ_{\epsilon})+\epsilon)}  +\zeta^2\beta_{c,\max}^2v_1^2v_2^2\left(1+\left((1+\gamma)-\gamma(\rho(\cJ_{\epsilon})+\epsilon)\right)^2\right)<\frac{1}{2},}
\end{split}
\end{equation}
where $v_1=\|\cV_{\epsilon}^{-1}\|$, $v_2=\|\cV_{\epsilon}\|$, and ${\beta}_{c,\max}^2\triangleq\max_{1\leq k\leq K}\{{\beta}^2_{c,k}\}$. Then, {for sufficiently small step-size~$\mu$, the network is mean-square-error stable, and it holds that:}
\begin{equation}
\limsup_{i\rightarrow\infty}\expec\|\bphit^z_i\|^2=\kappa \mu+\overline{\sigma}^2_c O(1),\label{eq: steady state mean square error}
\end{equation}
where $\overline{\sigma}^2_c=\sum_{k=1}^K{\sigma}^2_{c,k}$. The constant $\kappa$ is positive, independent of the step-size $\mu$ and the compression noise terms $\{\beta^2_{c,k},\sigma^2_{c,k}\}$, and is given by $\kappa=v_1^2v_2^2\frac{\overline{\sigma}^2_s}{\sigma_{11}}$ with $\overline{\sigma}^2_s=\sum_{k=1}^K\sigma_{s,k}^2$, and $\sigma_{11}$ {is} some positive constant {resulting from the {derivation} of inequality~\eqref{eq: centralized error vector recursion inequality biased} in Appendix~\ref{app: Mean-square-error analysis}}. 
Moreover, by choosing compression schemes with $\sigma^2_{c,k}\propto \mu^{1+\varepsilon}$ (where the symbol $\propto$ hides a proportionality constant independent of~$\mu$) and $\varepsilon\in(0,1]$, {we obtain}:
\begin{equation}
{\limsup_{i\rightarrow\infty}\expec\|\bphit^z_i\|^2=\kappa \mu+O(\mu^{1+{\varepsilon}}).}\label{eq: steady state mean square error phitilde}
\end{equation} 
{It then holds that:}
\begin{equation}
\limsup_{i\rightarrow\infty}\expec\|\bcwt_i\|^2=\kappa\mu+O(\mu^{1+\frac{\varepsilon}{2}}),\label{eq: steady state mean square error w}
\end{equation} 
from which we  conclude that:
\begin{equation}
\label{eq: steady state mean square error w 1}
{\lim_{\mu\rightarrow 0}\limsup_{i\rightarrow\infty}\frac{1}{\mu}\expec\|w^o_k-\bw_{k,i}\|^2}=\kappa,
\end{equation} 
for $k=1,\ldots,K$.
\end{theorem}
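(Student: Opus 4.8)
The plan is to analyze the transformed coupled recursions \eqref{eq: evolution of the centralized recursion 1}--\eqref{eq: evolution of the centralized recursion 3} for $(\bphib^z_i,\bphic^z_i)$ together with a companion energy bound for the fed-back compression error $\bz_i$, assemble these into a single vector inequality $u_i\le\Gamma(\mu)u_{i-1}+g(\mu)$ (entrywise) with a nonnegative matrix $\Gamma(\mu)$, show $\rho(\Gamma(\mu))<1$ for small $\mu$ under \eqref{eq: mixing parameter condition theorem}--\eqref{eq: mixing parameter condition theorem 2}, and then read off the limiting fixed point. First I would square both sides of \eqref{eq: evolution of the centralized recursion 1} and \eqref{eq: evolution of the centralized recursion 3}, condition on $\boldh_i$, and use that $\bsb_i,\bsc_i$ are zero-mean given the past while every other right-hand-side term is past-measurable, so the gradient-noise cross terms vanish and only $\mu^2\expec\|\bsb_i\|^2,\mu^2\expec\|\bsc_i\|^2$ survive, bounded through \eqref{eq: expectation squared of the gradient noise} and \eqref{eq: wt in terms of phi} by $O(\mu^2)(\expec\|\bphit_{i-1}\|^2+\overline{\sigma}^2_s)$. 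For the slow block I would use $\|I_P-\mu\bcD_{11,i-1}\|^2\le 1-\mu\sigma_{11}$ --- which holds for small $\mu$ because $\bcD_{11,i-1}\ge\lambda_{\min}I_P$ by \eqref{eq: hessian} --- together with a Jensen split of weight $O(\mu)$, producing a $(1-\tfrac12\mu\sigma_{11})$-contraction plus an $O(\mu)$-weighted remainder collecting $\expec\|\bphic^z_{i-1}\|^2$ and the compression-error energies; for the fast block I would use $\|\cJ''_\epsilon-\mu\bcD_{22,i-1}\|\le\rho(\cJ''_\epsilon)+O(\mu)<1$ (recall from \eqref{eq: definition of J'' epsilon} that $\rho(\cJ''_\epsilon)\le 1-\gamma\zeta(1-\rho(\cJ_\epsilon)-\epsilon)$) to obtain a genuine contraction, the $\bzc_{i-1}$ term carrying the coefficient $\zeta(I-\cJ'_\epsilon)+\mu\bcD_{22,i-1}$ of norm $O(\gamma\zeta)$ since $I-\cJ'_\epsilon=\gamma(I-\cJ_\epsilon)$.

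The second ingredient is the companion bound on the compression error. From the bounded-distortion property \eqref{eq: expectation squared of the quantization noise random}, $\expec\|\bz_i\|^2\le\zeta^2\big(\beta^2_{c,\max}\expec\|\bchi_i\|^2+\overline{\sigma}^2_c\big)$, and the crucial structural fact is that the compressor input, rewritten in error coordinates through the adaptation and combination steps, takes the form $\bchi_i=\gamma(I-\cA)\bphit_{i-1}+\tfrac1\zeta\bz_{i-1}+O(\mu)$; since $(I-\cA)\cU=0$, the $\text{Range}(\cU)$ component of $\bphit_{i-1}$ is annihilated and $(I-\cA)\bphit_{i-1}=\cV_{R,\epsilon}(I-\cJ_\epsilon)\bphic_{i-1}$ is controlled by the fast component alone. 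This yields a self-referential inequality for $\expec\|\bchi_i\|^2$ with self-coefficient below one (this is where the bound on $\zeta^2\beta^2_{c,\max}v_1^2v_2^2$ in \eqref{eq: mixing parameter condition theorem 2} enters, via the factor $\tfrac1\zeta$ multiplying $\bz_{i-1}$) and with cross-coupling into the fast mode of order $\gamma^2(\rho(I-\cJ_\epsilon)+\epsilon)^2$ --- exactly the quantities appearing in \eqref{eq: mixing parameter condition theorem} --- so that in steady state $\expec\|\bchi_i\|^2=O(\mu^2)+O(\overline{\sigma}^2_c)$ and hence $\expec\|\bz_i\|^2=O(\mu^2)+O(\overline{\sigma}^2_c)$.

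Assembling the three inequalities for $u_i=\col\{\expec\|\bphib^z_i\|^2,\expec\|\bphic^z_i\|^2,\expec\|\bz_i\|^2\}$ yields $u_i\le\Gamma(\mu)u_{i-1}+g(\mu)$ with $\Gamma(\mu)\ge 0$ whose spectral radius is strictly below one for small $\mu$ --- the slow block contributes $1-O(\mu)$, the fast and compression-error blocks are contractive, and the off-diagonal couplings are dominated precisely by \eqref{eq: mixing parameter condition theorem}--\eqref{eq: mixing parameter condition theorem 2} --- which establishes mean-square-error stability, and passing to $i\to\infty$ bounds $\limsup_i u_i$ by $(I-\Gamma(\mu))^{-1}g(\mu)$. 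Extracting the leading order, the slow mode dominates: its forcing is $\mu^2\expec\|\bsb_i\|^2$, of order $\mu^2\overline{\sigma}^2_s$, against decay $\tfrac12\mu\sigma_{11}$, so $\limsup_i\expec\|\bphib^z_i\|^2=\kappa\mu+O(\mu^{1+\varepsilon})$ with $\kappa=v_1^2v_2^2\overline{\sigma}^2_s/\sigma_{11}$, while the fast and compression-error energies are $O(\mu^2)+O(\overline{\sigma}^2_c)$; combining these via $\|\bphit^z_i\|^2=\|\cU\bphib^z_i+\cV_{R,\epsilon}\bphic^z_i\|^2$ (the cross term being $O(\sqrt\mu)\cdot O(\mu^{(1+\varepsilon)/2})$ by Cauchy--Schwarz) gives \eqref{eq: steady state mean square error}, and the design choice $\sigma^2_{c,k}\propto\mu^{1+\varepsilon}$ gives \eqref{eq: steady state mean square error phitilde}. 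For \eqref{eq: steady state mean square error w} I would use $\bcwt_{i-1}=\cA'\bphit_{i-1}=\cA'(\bphit^z_{i-1}+\bz_{i-1})$, the Jordan form \eqref{eq: jordan decomposition of A'} of $\cA'$, and a Jensen split of weight $\mu^{\varepsilon/2}$: the $\Theta(\sqrt\mu)$ slow part is preserved up to $O(\mu^{1+\varepsilon/2})$, while the fast component and $\bz_{i-1}$ (both of energy $O(\mu^{1+\varepsilon})$) enter only through cross terms of order $\mu^{1+\varepsilon/2}$. Finally \eqref{eq: steady state mean square error w 1} follows from \eqref{eq: steady state mean square error w} by separability, $\|\bcwt_i\|^2=\sum_{k=1}^K\|\bwt_{k,i}\|^2$, together with a matching lower bound on the slow component obtained from the same recursion, after dividing by $\mu$ and letting $\mu\to 0$.

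The step I expect to be the main obstacle is the companion analysis of the fed-back compression error $\bz_i$: unlike ordinary additive quantization noise it re-enters the dynamics, so it cannot be handled as an exogenous perturbation, and two things must be gotten right simultaneously --- (i) the feedback must not destabilize, which is exactly what conditions \eqref{eq: mixing parameter condition theorem}--\eqref{eq: mixing parameter condition theorem 2} secure by keeping the $\bz$-to-$\bz$ and $\bz$-to-fast-mode gains below one with explicit constants $v_1,v_2,\rho(I-\cJ_\epsilon),\epsilon$; and (ii) in steady state $\bz_i$ must be of strictly higher order than $\mu$ so as not to contaminate the leading coefficient $\kappa$, which rests entirely on the differential-quantization structure, namely that the compressor input lives, to leading order, in the range of $I-\cA$ and therefore only sees the $O(\mu)$ fast-mode content once the $\Theta(\sqrt\mu)$ slow subspace is annihilated. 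Making all of this rigorous requires carrying the Jordan decomposition \eqref{eq: jordan decomposition of A}--\eqref{eq: jordan decomposition of A'} through every estimate rather than relying on crude operator-norm bounds.
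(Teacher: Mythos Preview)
Your proposal is correct and follows essentially the same route as the paper's proof: transform via the Jordan basis of $\cA'$, derive a coupled $3\times 3$ nonnegative recursion for $(\expec\|\bphib^z_i\|^2,\expec\|\bphic^z_i\|^2,\expec\|\cV_\epsilon^{-1}\bz_i\|^2)$, exploit the structural fact that the compressor input lies to leading order in the range of $I-\cA$ (so the $\Theta(\sqrt\mu)$ slow component is annihilated), verify $\rho(\Gamma)<1$ via column sums---which is precisely where \eqref{eq: mixing parameter condition theorem}--\eqref{eq: mixing parameter condition theorem 2} enter---and read off the fixed point. The only minor deviations are bookkeeping (the paper tracks $\expec\|\cV_\epsilon^{-1}\bz_i\|^2$ rather than $\expec\|\bchi_i\|^2$ or $\expec\|\bz_i\|^2$, and handles the fast block via a Jensen split with weight $\|\cJ''_\epsilon\|$ rather than bounding $\|\cJ''_\epsilon-\mu\bcD_{22,i-1}\|$ directly), and your remark that \eqref{eq: steady state mean square error w 1} as an equality requires a matching lower bound is well taken---the paper's argument, like yours, in fact only delivers the upper bound.
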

\begin{proof} See Appendix~\ref{app: Mean-square-error analysis}.
\end{proof}

While expressions~\eqref{eq: steady state mean square error}--\eqref{eq: steady state mean square error w 1} in Theorem~\ref{theorem: Network mean-square-error stability} {reveal} the influence of the \emph{step-size} $\mu$, the \emph{compression noise} (captured by $\{\overline{\sigma}^2_{c},\beta^2_{c,\max}\}$), and the \emph{gradient noise} (captured by $\overline{\sigma}^2_s$) on the steady-state mean-square error, expressions~\eqref{eq: mixing parameter condition theorem} and~\eqref{eq: mixing parameter condition theorem 2} reveal the influence of the \emph{relative compression noise} term (captured by $\beta^2_{c,\max}$) on the network stability, and how this influence can be mitigated by properly choosing the damping coefficient $\zeta$ and the mixing parameter~$\gamma$.  One main conclusion stemming from Theorem~\ref{theorem: Network mean-square-error stability} (expression~\eqref{eq: steady state mean square error}) is that the mean-square-error contains two terms. {The first term is $\kappa \mu$ where $\kappa$ is a constant independent of the compression noise $\{\beta^2_{c,k},\sigma^2_{c,k}\}$, but depends on the gradient noise $\{\sigma^2_{s,k}\}$. This term, which we refer to as the \emph{gradient noise term}, is classically encountered in the uncompressed case~\cite{nassif2020adaptation}. The second {factor}  is an $O(1)$ term that is proportional to the quantizers' absolute noise components $\{\sigma^2_{c,k}\}$.} 
 \emph{Interestingly}, by choosing compression schemes with $\sigma^2_{c,k}\propto \mu^{1+\varepsilon}$ and $\varepsilon\in(0,1]$, {for sufficiently small step-sizes $\mu$ we} obtain {$\limsup_{i\rightarrow\infty}\expec\|w^o_k-\bw_{k,i}\|^2\approx\kappa \mu$}. This result is reassuring since it implies that  the impact of the compression noise can be minimized to the point where it only affects higher-order terms of the step-size. Consequently, the primary noise influencing the learning process will be the gradient noise, which is consistent with the classical {results} observed in the uncompressed case studied in~\cite{nassif2020adaptation}.

{While result~\eqref{eq: steady state mean square error w 1} is appealing, it is not sufficient to characterize the DEF-ATC {diffusion approach}. To fully characterize a decentralized strategy endowed with a compression mechanism, it is essential to consider the \emph{learning-communication tradeoff}. In other words, we need {to assess also} how the design choice $\sigma^2_{c,k}\propto \mu^{1+\varepsilon}$ impacts the amount of communication {resources} (e.g., quantization bits). For instance, consider} 
the probabilistic uniform quantizer from Table~\ref{table: examples of quantizers}. For this scheme, setting $\sigma^2_{c,k}\propto \mu^{1+\varepsilon}$ is equivalent to requiring the quantization {step}  $\Delta$ to be proportional to $ \mu^{\frac{1+\varepsilon}{2}}$. Thus, while small values of $\sigma^2_{c,k}$  imply small compression {errors in view of~\eqref{eq: bounded quantization noise variance condition}}, they  might in principle require large bit rates. Moreover, as $\mu\rightarrow0$, the quantization {step  $\Delta$}  becomes very small, potentially leading to an unbounded bit rate increase. It becomes therefore important to find a quantization scheme that achieves the same performance as the uncompressed case, i.e., {$\limsup_{i\rightarrow\infty}\expec\|w^o_k-\bw_{k,i}\|^2\approx\kappa \mu$}, while guaranteeing a finite   bit rate  as $\mu\rightarrow 0$. {In the next theorem, we show that the DEF-ATC {diffusion approach} equipped with the variable-rate coding scheme from~\cite{lee2021finite},~\cite[Sec.~II]{nassif2023quantization} achieves both objectives.}
\subsection{Bit rate stability}
\label{subsec: bit rate}
We first assume that the top-$c_k$ quantizer  in Definition~\ref{def: definition of the top-c quantizer} (with a subscript $k$ added to $c$ to highlight the fact that the compression characteristics can vary {across agents}) is used at each iteration $i$ and agent $k$. We then recall that the quantizer input  is given by the error compensated difference $\bchi_{k,i}=\bpsi_{k,i}-\bphi_{k,i-1}+\bz_{k,i-1}$, and assume that the probabilistic ANQ scheme\footnote{The probabilistic uniform rule (Table~\ref{table: examples of quantizers}, row~$2$) can be obtained from the ANQ rule by letting $\omega\rightarrow 0$~\cite{nassif2023quantization}.} (Table~\ref{table: examples of quantizers}, row~$3$) is employed at the output of the  top-$c_k$ sparsifier. Consequently, from~\eqref{eq: bit rate general formula}, the bit rate at agent $k$ and iteration $i$ is given by:
\begin{equation}
\label{eq: bit rate general at agent k}
{{r}_{k,i}}=\log_2(3)\sum_{j\in\cI_{c_{k,i}}}\Big(1+\expec\Big[\left\lceil\log_2(|\boldsymbol{n}([\bchi_{k,i}]_j)|+1)\right\rceil\Big]\Big)+(M_k-c_k)\log_2(3),
\end{equation}
where $[\bchi_{k,i}]_j$ denotes the $j$-th entry of $\bchi_{k,i}$, $\cI_{c_{k,i}}$ is the set containing the indices of the $c_k$ largest-magnitude components of $\bchi_{k,i}$, and $(M_k-c_k)\log_2(3)$ is the number of bits required to encode the $M_k-c_k$ zero components at the output of the top-$c_k$ sparsifier\footnote{Encoding the $0$ input using the variable-rate coding scheme from~\cite{lee2021finite},~\cite[Sec.~II]{nassif2023quantization} requires {only a} parsing symbol
, i.e., $\log_2(3)$ bits. {Instead of encoding the zero components, agent $k$ can send the location of the $c_k$ largest-magnitude components. In this case, the term $(M_k-c_k)\log_2(3)$ is replaced by $c_k\lceil \log_2(M_k)\rceil$. This alternative solution does not affect the main conclusions of Sec.~\ref{subsec: bit rate}.}}. 
\begin{theorem}{\emph{\textbf{(Bit rate stability)}.}} 
\label{theorem: rate stability} Assume that each agent $k$ employs the top-$c_k$ quantizer (see Definition~\ref{def: definition of the top-c quantizer}) with the probabilistic ANQ scheme. Assume further that the design parameters of the compression operators are chosen such that:
\begin{equation}
\label{eq: condition of the theorem 2}
\omega_k=t, \qquad \eta_k\propto \mu^{\frac{1+\varepsilon}{2}},
\end{equation}
where $t$ is a constant independent of $\mu$ and $0<\varepsilon\leq 1$. First, under conditions~\eqref{eq: condition of the theorem 2}, we have:
\begin{equation}
\label{eq: sigma square}
\sigma^2_{c,k}\propto \mu^{1+\varepsilon}.
\end{equation}
Second, {in the steady state,} the average number of bits at agent $k$ stays bounded as $\mu\rightarrow 0$, namely, 
\begin{equation}
\label{eq: rate stability result}
\limsup_{i\rightarrow\infty}\,{r}_{k,i}=O(1).
\end{equation}
\end{theorem}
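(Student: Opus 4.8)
The plan is to establish the two conclusions~\eqref{eq: sigma square} and~\eqref{eq: rate stability result} in turn; the first is a short consequence of Lemma~\ref{lemma: Property of the top-$c$ quantizer}, while the second rests on a single nontrivial ingredient, a steady-state bound on the size of the quantizer input $\bchi_{k,i}=\bpsi_{k,i}-\bphi_{k,i-1}+\bz_{k,i-1}$. For the absolute-noise scaling: with the probabilistic ANQ scheme playing the role of $\bcQ_k(\cdot)$ in the top-$c_k$ construction, Table~\ref{table: examples of quantizers} gives $\beta^2_{q,k}=2\omega_k^2$ and $\sigma^2_{q,k}\propto\eta_k^2$. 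Under~\eqref{eq: condition of the theorem 2} the choice $\omega_k=t$ makes $\beta^2_{q,k}=2t^2$, and hence the normalization $\alpha_k=1/(1+\beta^2_{q,k})$ in~\eqref{eq: top-c quantizer}, independent of $\mu$; Lemma~\ref{lemma: Property of the top-$c$ quantizer} then yields $\sigma^2_{c,k}=\alpha_k^2\sigma^2_{q,k}\propto\eta_k^2\propto\mu^{1+\varepsilon}$, which is~\eqref{eq: sigma square}. Note also that $\beta^2_{c,k}=1-(1-\beta^2_{q,k})c_k/L<1$ in this regime, a fact used below.

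\emph{Steady-state size of the quantizer input.} The key estimate is
\[
\limsup_{i\rightarrow\infty}\expec\|\bchi_{k,i}\|^2=O(\mu^{1+\varepsilon}),\qquad k=1,\ldots,K,
\]
which can be read off from (or re-derived with) the coupled analysis behind Theorem~\ref{theorem: Network mean-square-error stability}. Writing $\bchi_{k,i}=(\bphit_{k,i-1}-\bpsit_{k,i})+\bz_{k,i-1}$ (cf.~\eqref{eq: bphit_k}) and using~\eqref{eq: network error vector psi}--\eqref{eq: wt in terms of phi} together with $\cU^\top\cA'=\cU^\top$ and $\cU^\top b=0$, the $\cU$-projection of $\bphit_{i-1}-\bpsit_i$ collapses to an $O(\mu)$ quantity (equal to $(\mu/\zeta)\cU^\top\bcH_{i-1}\cA'\bphit_{i-1}+(\mu/\zeta)\bsb_i$), while its complementary, fast-mixing component is controlled through~\eqref{eq: evolution of the centralized recursion 3} by $\expec\|\bphic_{i-1}\|^2$. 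Invoking the steady-state bounds established in Appendix~\ref{app: Mean-square-error analysis} in the proof of Theorem~\ref{theorem: Network mean-square-error stability} --- where, under $\sigma^2_{c,k}\propto\mu^{1+\varepsilon}$, one has $\expec\|\bphit_i\|^2=O(\mu)$ and $\expec\|\bz_i\|^2=O(\mu^{1+\varepsilon})$ in the limit --- gives $\expec\|\bphic_{i-1}\|^2=O(\mu^2)+O(\expec\|\bz_{i-1}\|^2)$, hence $\expec\|\bphit_{k,i-1}-\bpsit_{k,i}\|^2=O(\mu^2)+O(\expec\|\bz_{i-1}\|^2)$, and combining this with $\bchi_{k,i}=(\bphit_{k,i-1}-\bpsit_{k,i})+\bz_{k,i-1}$ and the bounded-distortion inequality~\eqref{eq: bounded quantization noise variance condition} --- equivalently $\expec\|\bz_{k,i}\|^2\le\beta^2_{c,k}\expec\|\bchi_{k,i}\|^2+\sigma^2_{c,k}$ --- yields the claimed $O(\mu^{1+\varepsilon})$ bound. (If one prefers not to cite the intermediate $\bz$-bound, the same conclusion follows by closing the scalar recursion $Q_i\le O(\mu^2)+\theta\,Q_{i-1}+O(\overline{\sigma}^2_c)$ for $Q_i\triangleq\sum_k\expec\|\bchi_{k,i}\|^2$, whose contraction factor $\theta$ --- essentially $\zeta^2\beta^2_{c,\max}$ up to the geometric factors $v_1^2v_2^2$ --- is forced below $1/2$ by conditions~\eqref{eq: mixing parameter condition theorem}--\eqref{eq: mixing parameter condition theorem 2}.) In either case $\expec|[\bchi_{k,i}]_j|\le\sqrt{\expec\|\bchi_{k,i}\|^2}=O(\mu^{(1+\varepsilon)/2})$ in steady state, for every coordinate $j$.

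\emph{From the innovation size to the bit rate.} For the ANQ rule the quantization index obeys $|\boldsymbol{n}(x_j)|\le \ln\!\big(1+(\omega_k/\eta_k)|x_j|\big)\big/\big(2\ln(\omega_k+\sqrt{1+\omega_k^2})\big)+2$, so $\lceil\log_2(|\boldsymbol{n}(x_j)|+1)\rceil\le 1+g_k(|x_j|)$ with $g_k(u)\triangleq\log_2\!\big(\ln(1+(\omega_k/\eta_k)u)\big/(2\ln(\omega_k+\sqrt{1+\omega_k^2}))+3\big)$. The map $u\mapsto g_k(u)$ is concave and increasing on $[0,\infty)$, being a composition of the concave increasing $u\mapsto\ln(1+(\omega_k/\eta_k)u)$, an affine increasing map, and $\log_2$; thus Jensen's inequality applied to $g_k$ together with $\expec|[\bchi_{k,i}]_j|\le\sqrt{\expec\|\bchi_{k,i}\|^2}$ gives $\expec\big[\lceil\log_2(|\boldsymbol{n}([\bchi_{k,i}]_j)|+1)\rceil\big]\le 1+g_k\!\big(\sqrt{\expec\|\bchi_{k,i}\|^2}\big)$. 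By the previous step and~\eqref{eq: condition of the theorem 2}, $(\omega_k/\eta_k)\sqrt{\expec\|\bchi_{k,i}\|^2}=O(\mu^{-(1+\varepsilon)/2})\cdot O(\mu^{(1+\varepsilon)/2})=O(1)$ as $\mu\rightarrow0$, and since $\omega_k=t$ is a fixed constant the denominator $2\ln(t+\sqrt{1+t^2})$ is a positive constant; hence $g_k(\sqrt{\expec\|\bchi_{k,i}\|^2})=O(1)$. Substituting into~\eqref{eq: bit rate general at agent k}, and recalling that the sum there runs over only $c_k$ coordinates while $M_k-c_k$ is a fixed integer, we conclude $\limsup_{i\rightarrow\infty}r_{k,i}\le\log_2(3)\,c_k\big(1+O(1)\big)+(M_k-c_k)\log_2(3)=O(1)$, i.e.,~\eqref{eq: rate stability result}.

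\emph{Main obstacle.} The delicate point is the steady-state estimate on $\bchi_{k,i}$. A priori $\bphit_{k,i-1}$ is only $O(\sqrt{\mu})$ in root-mean-square, so without exploiting the cancellation in the $\cU$-direction between $\bphit_{k,i-1}$ and $\bpsit_{k,i}$ one would obtain merely $\expec\|\bchi_{k,i}\|^2=O(\mu)$, which is \emph{not} enough: it would leave a diverging $O(\log_2\log_2(1/\mu))$ contribution in the bit rate. Pinning down that the effective innovation is in fact of size $O(\mu^{1+\varepsilon})=O(\eta_k^2)$ --- and, for a self-contained argument, closing the resulting self-referential inequality for $Q_i$ with a contraction factor strictly below one, for which the $\zeta,\gamma$ constraints inherited from Theorem~\ref{theorem: Network mean-square-error stability} are exactly what is needed --- is where the bulk of the effort lies and where the machinery of Appendix~\ref{app: Mean-square-error analysis} is reused.
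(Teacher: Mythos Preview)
Your proof is correct and follows essentially the same route as the paper: derive $\sigma^2_{c,k}\propto\mu^{1+\varepsilon}$ from Lemma~\ref{lemma: Property of the top-$c$ quantizer}, show $\limsup_i\expec\|\bchi_{k,i}\|^2=O(\mu^{1+\varepsilon})$ by reusing the coupled estimates from Appendix~\ref{app: Mean-square-error analysis}, and then bound each summand of~\eqref{eq: bit rate general at agent k} by a concave function of $\sqrt{\expec\|\bchi_{k,i}\|^2}$ so that the ratio $(\omega_k/\eta_k)\sqrt{\expec\|\bchi_{k,i}\|^2}$ stays $O(1)$. The paper is terser in two places: for the innovation bound it simply takes the $\limsup$ of the already-derived inequality~\eqref{eq: centralized error vector recursion inequality on sum of  delta 1 biased} (rather than re-deriving the $\cU$-direction cancellation you spell out), and for the per-summand bit bound it invokes~\cite[Theorem~2]{nassif2023quantization} instead of your explicit Jensen/concavity argument---but the content is the same. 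One small slip: since the ANQ scheme is unbiased, Lemma~\ref{lemma: Property of the top-$c$ quantizer} gives $\beta^2_{c,k}=1-\frac{c_k}{M_k(1+\beta^2_{q,k})}$, not the biased formula $1-(1-\beta^2_{q,k})c_k/M_k$ you wrote; this does not affect anything downstream, as you only need $\beta^2_{c,k}<1$ independent of $\mu$.
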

\begin{proof}
See Appendix~\ref{app: rate stability}.
\end{proof}

The bit rate stability result~\eqref{eq: rate stability result} can be explained by considering again the uniform quantization rule (Table~\ref{table: examples of quantizers}, row~$2$) which, as explained previously, requires setting $\Delta\propto \mu^{\frac{1+\varepsilon}{2}}$ in order to guarantee {that} $\sigma^2_{c,k}\propto \mu^{1+\varepsilon}$. The result~\eqref{eq: appendix c} reveals {that the input to the compression operators of the DEF-ATC strategy in~\eqref{eq: step 2}, namely, the}  \emph{error compensated difference} $\bchi_{k,i}=\bpsi_{k,i}-\bphi_{k,i-1}+\bz_{k,i-1}${, under~\eqref{eq: sigma square}}  is on the order of $\mu^{\frac{1+\varepsilon}{2}}$ at steady-state. This means that as  {$\mu\rightarrow 0$}, the quantizer resolution $\Delta\propto \mu^{\frac{1+\varepsilon}{2}}$ decreases, but in proportion to the  \emph{effective} range of the quantizers' inputs. Theorem~\ref{theorem: rate stability}  reveals the adaptability of  the variable-rate scheme, ensuring that even as the quantization becomes increasingly precise (as $\mu\rightarrow 0$), the {DEF-ATC strategy} can still maintain a finite expected bit rate, which is crucial for efficient data transmission.
\section{Simulation results}
\label{sec: simulation results}
In this section, we first illustrate the theoretical results of Theorems~\ref{theorem: Network mean-square-error stability} and~\ref{theorem: rate stability}. Then, we illustrate the benefit of the top-$c$ quantizer over other quantizers, particularly those that quantize a vector element-wise, without prioritizing the $c$ most important components.   In the third part, we compare DEF-ATC to state-of-the-art baselines in various scenarios, including those beyond the small step-size regime. {The first three parts} focus on solving single-task optimization problems of the form~\eqref{eq: consensus formulation}. {The} last part illustrates the performance of the DEF-ATC approach when used to solve multitask estimation problems with overlapping parameter vectors.
\begin{figure}
\begin{center}
\includegraphics[scale=0.22]{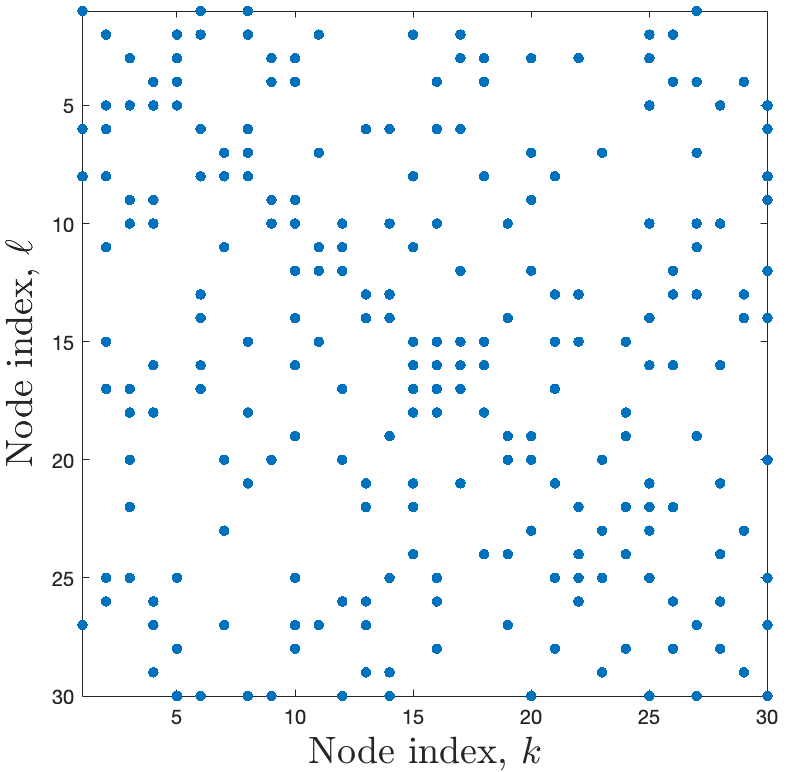}\qquad
\includegraphics[scale=0.22]{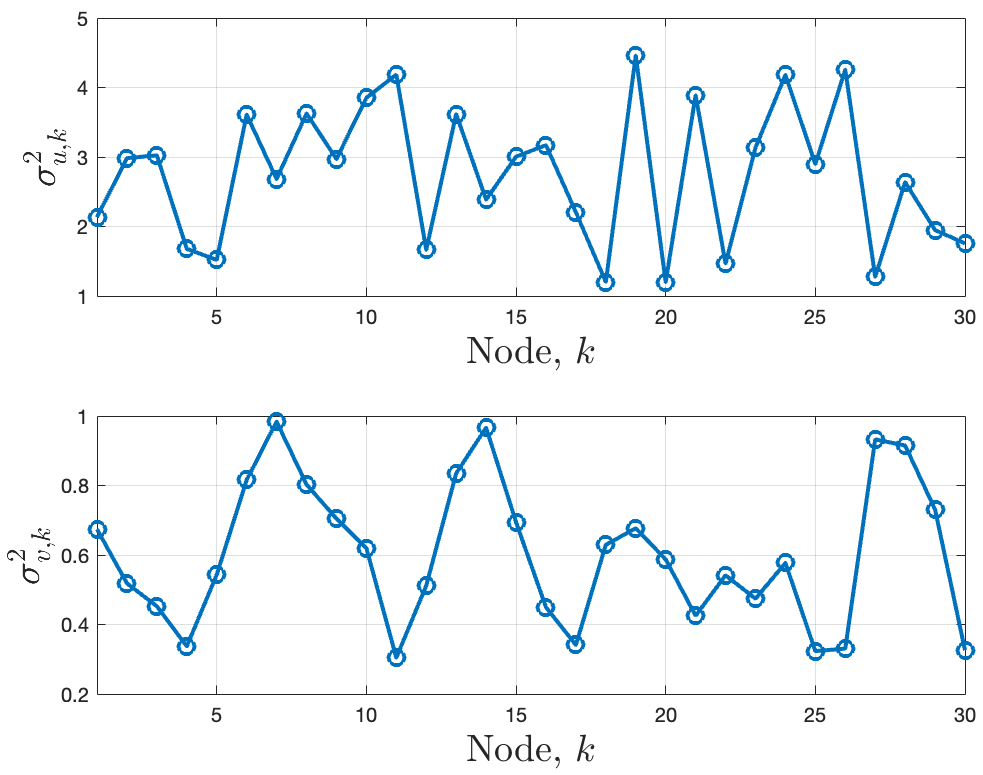}
\caption{Experimental setup. \emph{(Left)} Communication link matrix.  \emph{(Right)} Regression and noise variances.
}
\label{fig: network settings}
\end{center}
\end{figure}

We consider a network of $K=30$ nodes  with the communication link matrix shown in Fig.~\ref{fig: network settings} \emph{(left)}, where the $(k,\ell)$-th entry is equal to $1$ if there is a link between $k$ and $\ell$ and is 0 otherwise. 
Each agent is subjected to streaming data $\{\bd_k(i),\bu_{k,i}\}$ assumed to satisfy a linear regression model of the form $\bd_k(i)=\bu_{k,i}^\top w^\star_k+\bv_k(i)$ for some ${{M_c}}\times 1$ vector $w^\star_k$ with $\bv_k(i)$ denoting a zero-mean measurement noise and ${{M_c}}=10$. A mean-square-error {risk} of the form $J_k(w_k)=\expec|\bd_k(i)-\bu_{k,i}^{\top}w_k|^2$ is associated with each agent $k$. The processes $\{\bu_{k,i},\bv_k(i)\}$ are assumed to be zero-mean Gaussian with: $i)$ $\expec\bu_{k,i}\bu_{\ell,i}^\top=R_{u,k}=\sigma^2_{u,k}I_{{M_c}}$ if $k=\ell$ and $0$ otherwise; $ii)$ $\expec\bv_{k}(i)\bv_{\ell}(i)=\sigma^2_{v,k}$ if $k=\ell$ and $0$ otherwise; and $iii)$  $\bu_{k,i}$ and $\bv_{k}(i)$ are independent of each other. The variances  $\sigma^2_{u,k}$ and $\sigma^2_{v,k}$ are shown in Fig.~\ref{fig: network settings} \emph{(right)}. 
{Throughout Sec.~\ref{sec: simulation results}, we assume that all agents employ the same compression rule, i.e., $\bcC_k=\bcC$ $\forall k$. We use the terminology ``top-$c$ \textsf{{\footnotesize{quantizer-name}}}" to refer to the top-$c$ quantizer of Definition~\ref{def: definition of the top-c quantizer} where,  as compression scheme $\bcQ$, we use \textsf{{\footnotesize{quantizer-name}}}.  For instance, ``top-$4$ probabilistic ANQ" is the quantizer obtained by applying the probabilistic ANQ scheme at the output of the top-$4$ sparsifier.} 

\subsection{Illustrating the theoretical findings}
\label{subsec: Illustrating the theoretical findings}
{In this {section and the following Secs.~\ref{subsec: Top-$k$ with quantization outperforms other compression rules} and~\ref{subsec: Performance w.r.t. state-of-the-art baselines}}, we assume that agents have a common model parameter   $w^\star_k=w^o$ $\forall k$.}  The model $w^o$ is generated by normalizing to unit norm a randomly generated Gaussian vector, with zero mean and unit variance. To promote consensus (i.e., {to} solve problem~\eqref{eq: consensus formulation} or, equivalently,~\eqref{eq: network constrained problem} with $\cU=\frac{1}{\sqrt{K}}(\mathds{1}_K\otimes I_{{M_c}})$), we run Alg.~\ref{alg: Quantized decentralized approach} using a combination matrix of the form $\cA=A\otimes I_{{M_c}}$, where $A$ is generated according to the Metropolis rule~\cite[Chap. 8]{sayed2014adaptation}. 

In Fig.~\ref{fig: variable step-size} \emph{(left)}, we report the network mean-square-deviation (MSD) learning curves:
\begin{equation}
\label{eq: definition of instanteneous MSD}
\text{MSD}(i)=\frac{1}{K}\sum_{k=1}^K\expec\|w^o_k-\bw_{k,i}\|^2,
\end{equation}
for $3$ different values of the step-size $\mu$.  The results are averaged over $100$ Monte-Carlo runs.  For each value of the step-size, we run  Alg.~\ref{alg: Quantized decentralized approach}  for $4$ different choices of the compression operator $\bcC$: $i)$ top-$4$ sparsifier,  $ii)$ top-$4$ QSGD  (Table~\ref{table: examples of quantizers}, row $6$, $s=2$), $iii)$ top-$4$ probabilistic uniform (Table~\ref{table: examples of quantizers}, row $2$, $\Delta=\mu$), and $iv)$  top-$4$ probabilistic ANQ (Table~\ref{table: examples of quantizers}, row $3$, $\omega=0.5$, $\eta=\mu$). We set $\gamma=\zeta=0.9$. As it can be observed, despite compression, the DEF-ATC approach achieves a performance that is almost identical to the uncompressed ATC approach (which can be obtained from Alg.~\ref{alg: Quantized decentralized approach} by setting $\gamma=\zeta=1$ and replacing the compression operator by identity). We further observe that, in steady-state, the network MSD increases by approximately 3 dB when $\mu$ goes from $\mu_0$ to $2\mu_0$. This means that the performance is on the order of $\mu$, as expected from Theorem~\ref{theorem: Network mean-square-error stability} since in the
simulations the absolute noise component is such that $\sigma^2_{c,k}\propto \mu^2$. For the top-$4$ probabilistic uniform and ANQ quantizers, we report  in Fig.~\ref{fig: variable step-size} \emph{(right)} the average number of bits per node, per component, computed according~to:
\begin{equation}
\label{eq: definition of instanteneous rate}
R(i)=\frac{1}{K}\sum_{k=1}^K\frac{1}{M_k}r_{k,i},
\end{equation}
where $r_{k,i}$ is the bit rate given by~\eqref{eq: bit rate general at agent k}, which is associated with the encoding of the error compensated difference vector $\bchi_{k,i}=\bpsi_{k,i}-\bphi_{k,i-1}+\bz_{k,i-1}$ transmitted by agent $k$ at iteration $i$. As it can be observed, for the three different values of the step-size, we approximately obtain the same finite average number of bits in steady-state (approximately 2.4 bits/component/iteration are required on average in steady-state when the top-$4$ probabilistic ANQ quantizer is used). From Table~\ref{table: examples of quantizers} (row $7$), the top-$4$ sparsifier would require an average of $\frac{4(32+4)}{10}= 14.4$ bits/node/component/iteration\footnote{Note that we replaced $B_{\text{HP}}$ by 32 since we are performing the experiments on {MATLAB 2022a} which uses $32$ bits to represent a floating number in single-precision.}, which is almost {six times} higher than the one obtained in steady-state when the probabilistic ANQ is used.  This is expected since the top-$4$ sparsifier  requires encoding the $4$ largest magnitude components of the input with very high precision. On the other hand, the  top-$4$ QSGD {(Table~\ref{table: examples of quantizers}, row $6$, $s=2$)}, which requires encoding the norm of the input with high precision, would need an average of $\frac{32+10+10}{10}=5.2$ bits/node/component/iteration, which is almost {two times} higher than the one obtained for the probabilistic ANQ. 

\begin{figure}
\begin{center}
\includegraphics[scale=0.22]{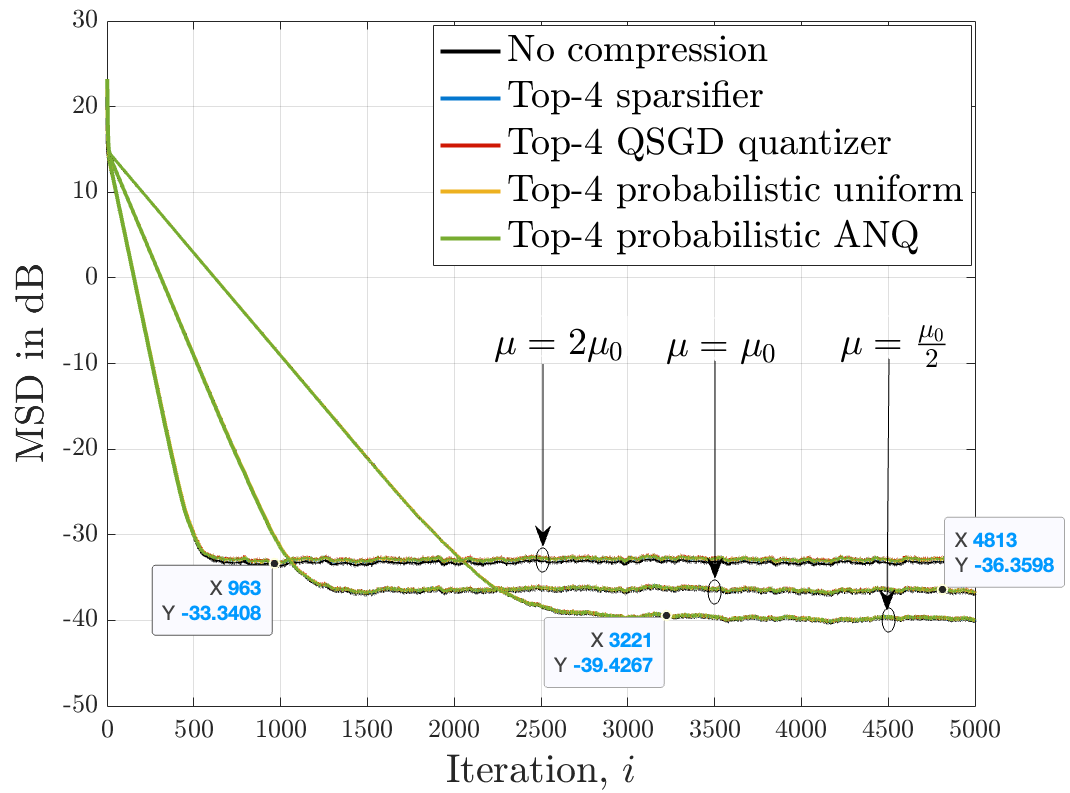}\quad
\includegraphics[scale=0.22]{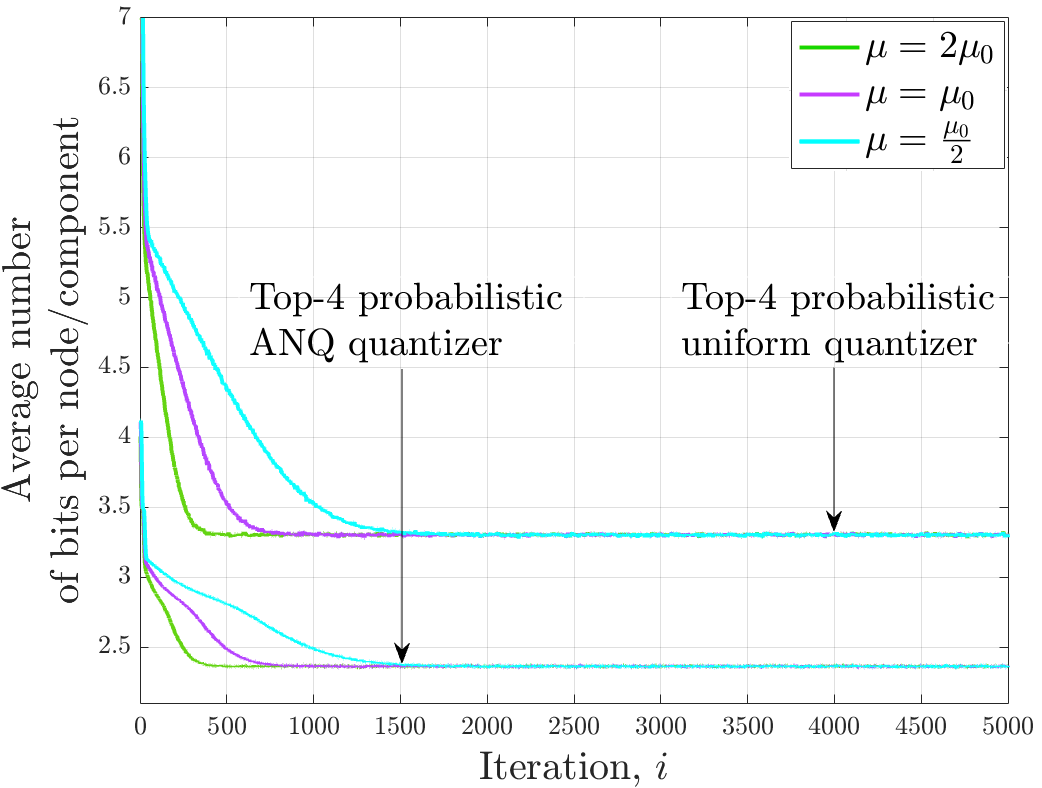}
\caption{Network performance w.r.t. $w^o$ for three different values of the step-size ($\mu_0=0.001$). \textit{(Left)}  Evolution of the MSD learning curves. \textit{(Right)} Evolution of the average number of bits per node, per component, when  the variable-rate probabilistic uniform and ANQ schemes are used at the output of the top-$4$ sparsifier to encode the error compensated difference $\bchi_{k,i}=\bpsi_{k,i}-\bphi_{k,i-1}+\bz_{k,i-1}$ in~\eqref{eq: step 2}.}
\label{fig: variable step-size}
\end{center}
\end{figure}

Evaluating the performance of a learning approach requires considering both the \emph{attained learning error} (MSD)  and the associated \emph{bit expense}. Therefore, in the following, we focus on reporting rate-distortion (RD) curves, where {the bit budget quantifies the rate} and the MSD quantifies the distortion. 

\subsection{Top-$c$ quantization outperforms other compression rules}
\label{subsec: Top-$k$ with quantization outperforms other compression rules}

We report in Fig.~\ref{fig: rate-distortion curve for probabilistic uniform} the RD curves of the DEF-ATC approach with probabilistic uniform and top-$4$ probabilistic uniform quantization. We set $\mu=0.001$ and $\gamma=\zeta=0.9$. Each point of the rate-distortion curve 
corresponds to one value of the parameter $\varepsilon$, which determines the quantization step $\Delta=\mu^{\frac{1+\varepsilon}{2}}$. In the example, we selected $25$ values of $\varepsilon$ linearly spaced in the interval $[10^{-3},1]$. For each value of $\Delta$ (i.e., each point of the curve), the resulting MSD (distortion) and average number of bits/node/component (rate) were obtained by averaging the instantaneous mean-square-deviation $\text{MSD}(i)$ in~\eqref{eq: definition of instanteneous MSD} and averaging the number of bits $R(i)$ in~\eqref{eq: definition of instanteneous rate} over $100$ samples after convergence of the algorithm (the expectations in~\eqref{eq: definition of instanteneous MSD}  and~\eqref{eq: definition of instanteneous rate} are estimated empirically over $100$ Monte Carlo runs). The trade-off between rate and distortion can be observed from Fig.~\ref{fig: rate-distortion curve for probabilistic uniform}, namely, as the rate decreases, the distortion increases, and vice versa. For comparison purposes, we illustrate in Fig.~\ref{fig: rate-distortion curve for probabilistic uniform}  the  distortion (horizontal dashed line) of the uncompressed ATC approach {obtained by averaging $\text{MSD}(i)$ in~\eqref{eq: definition of instanteneous MSD} over $100$ samples after convergence. We also illustrate the specific $\log_2(3)$ bit rate (vertical dashed line) corresponding to minimum number of bits possible for the considered scheme, namely, the variable-rate coding scheme from~\cite{lee2021finite},~\cite[Sec.~II]{nassif2023quantization}. Under this scheme, each component is appended with a parsing symbol, and the $0$ value is encoded as an empty element. Thus, the minimum number of bits/component would correspond to sending only one symbol per component.} As it can be observed, top-$4$ {probabilistic uniform} is more efficient than probabilistic uniform, namely,  it approaches the uncompressed performance (low distortion) at a lower bit rate compared to probabilistic uniform quantization.

\begin{figure}
\begin{center}
\includegraphics[scale=0.25]{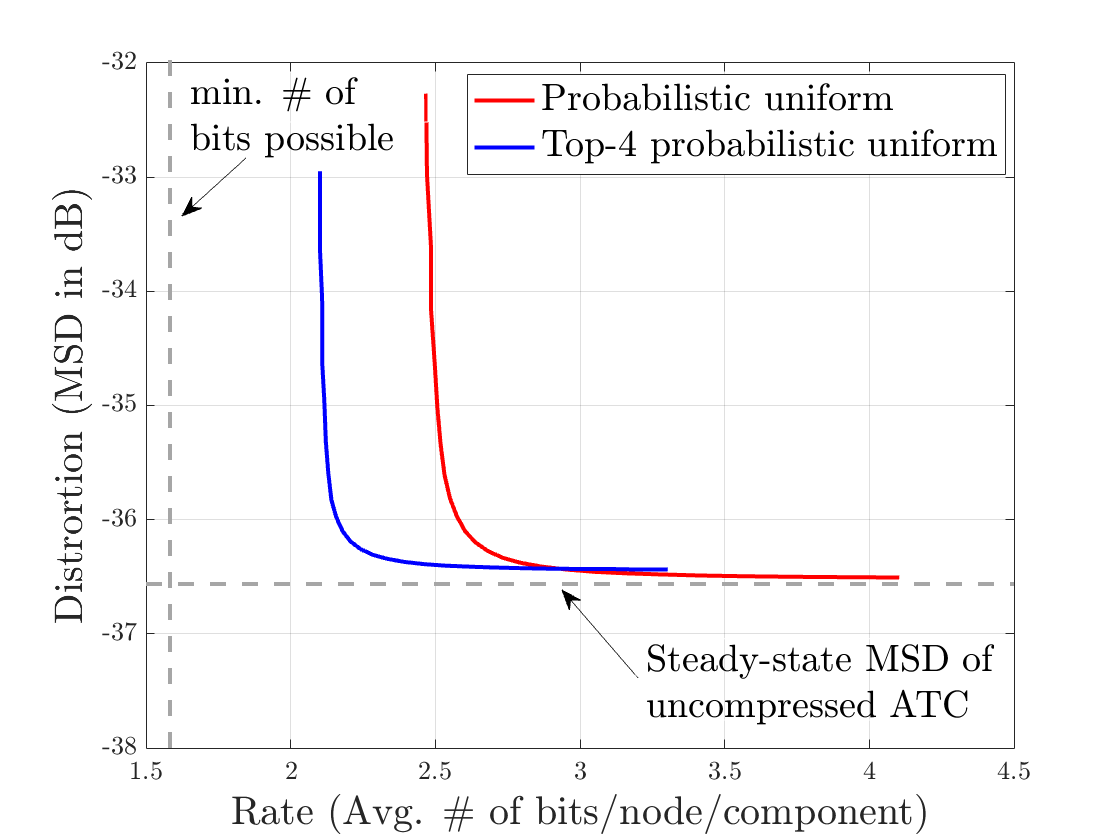}
\caption{Rate-distortion curves of the DEF-ATC approach with probabilistic uniform and top-$4$ probabilistic uniform {quantization}.}
\label{fig: rate-distortion curve for probabilistic uniform}
\end{center}
\end{figure}

\subsection{Performance w.r.t. state-of-the-art baselines}
\label{subsec: Performance w.r.t. state-of-the-art baselines}
 \begin{figure}
\begin{center}
\includegraphics[scale=0.22]{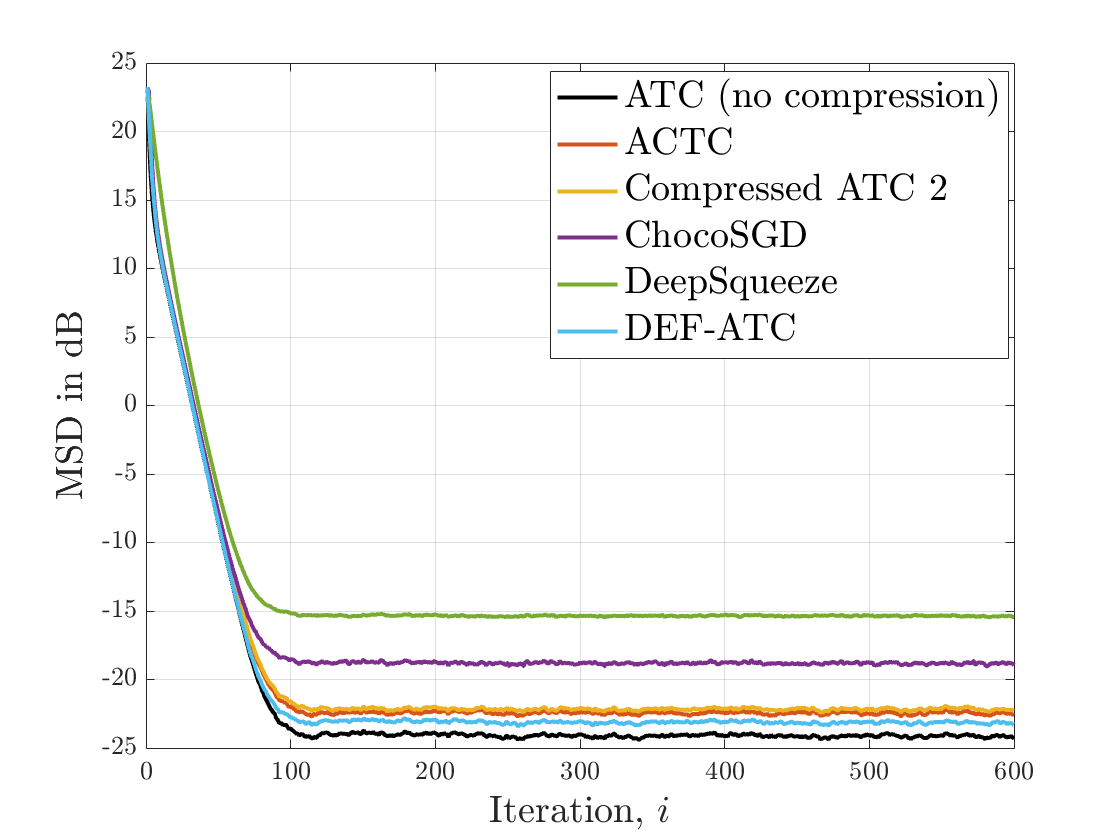}\quad
\includegraphics[scale=0.22]{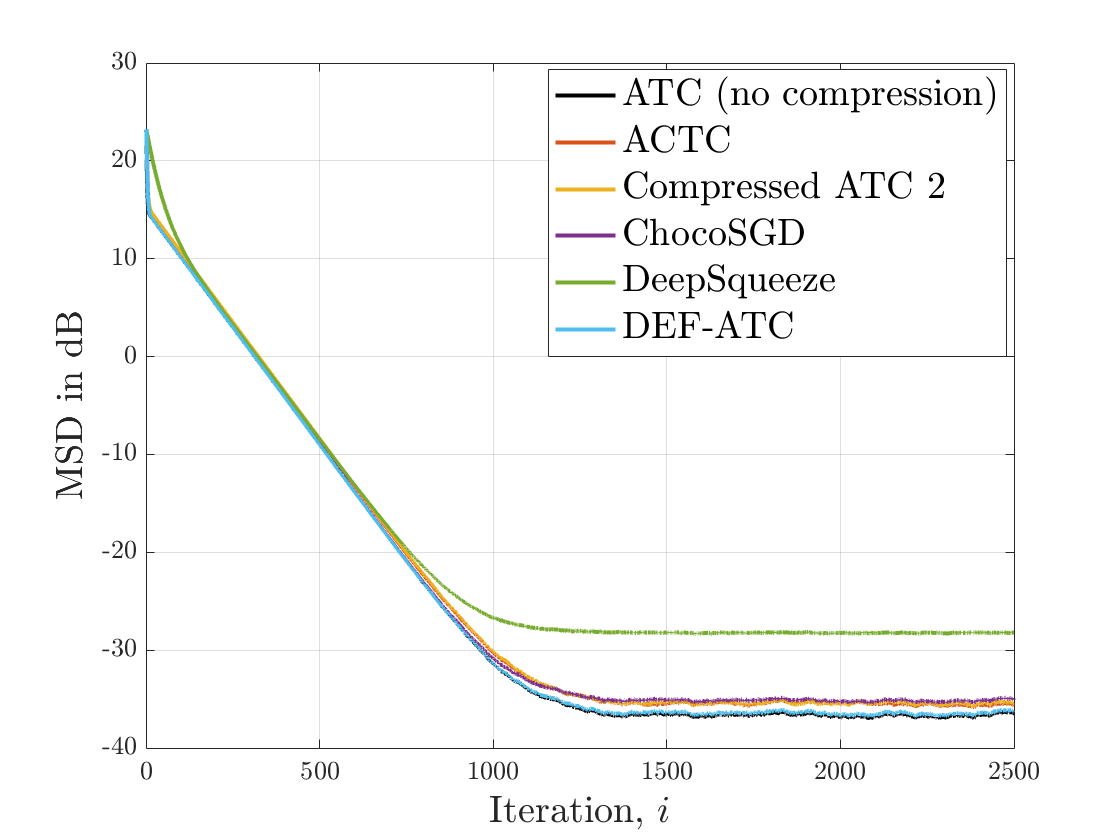}\\
\includegraphics[scale=0.22]{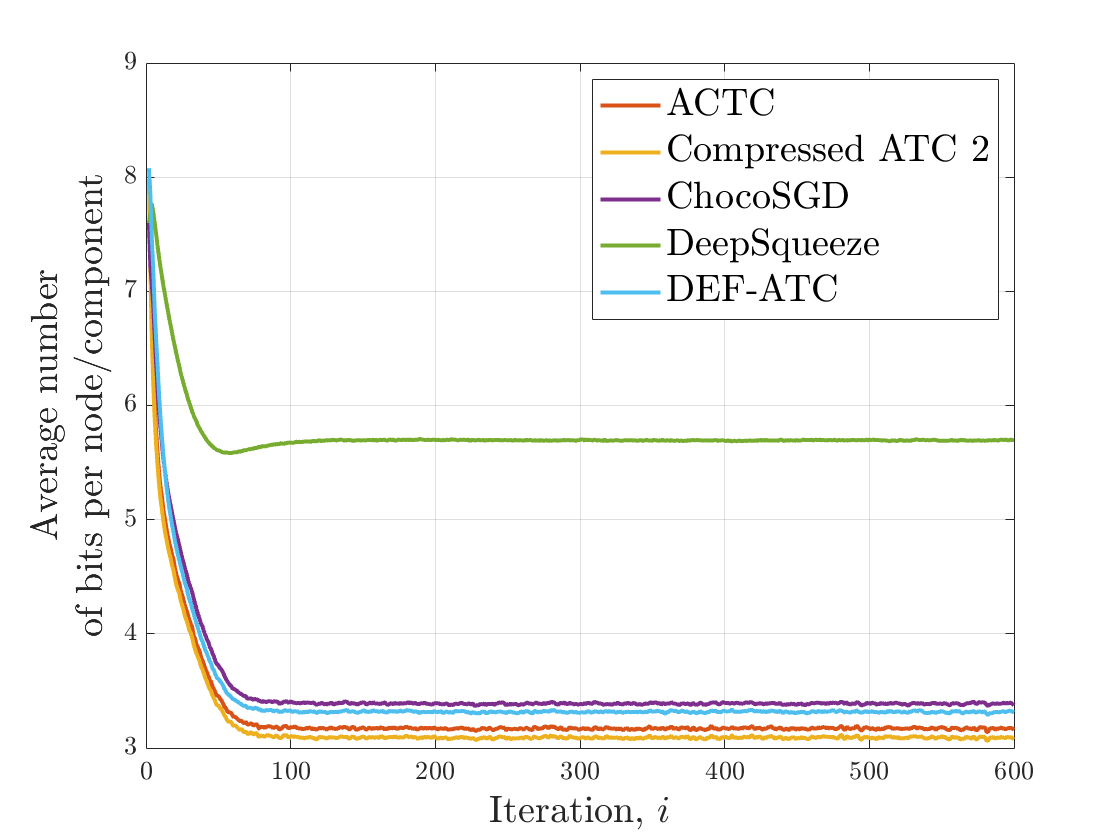}\quad
\includegraphics[scale=0.22]{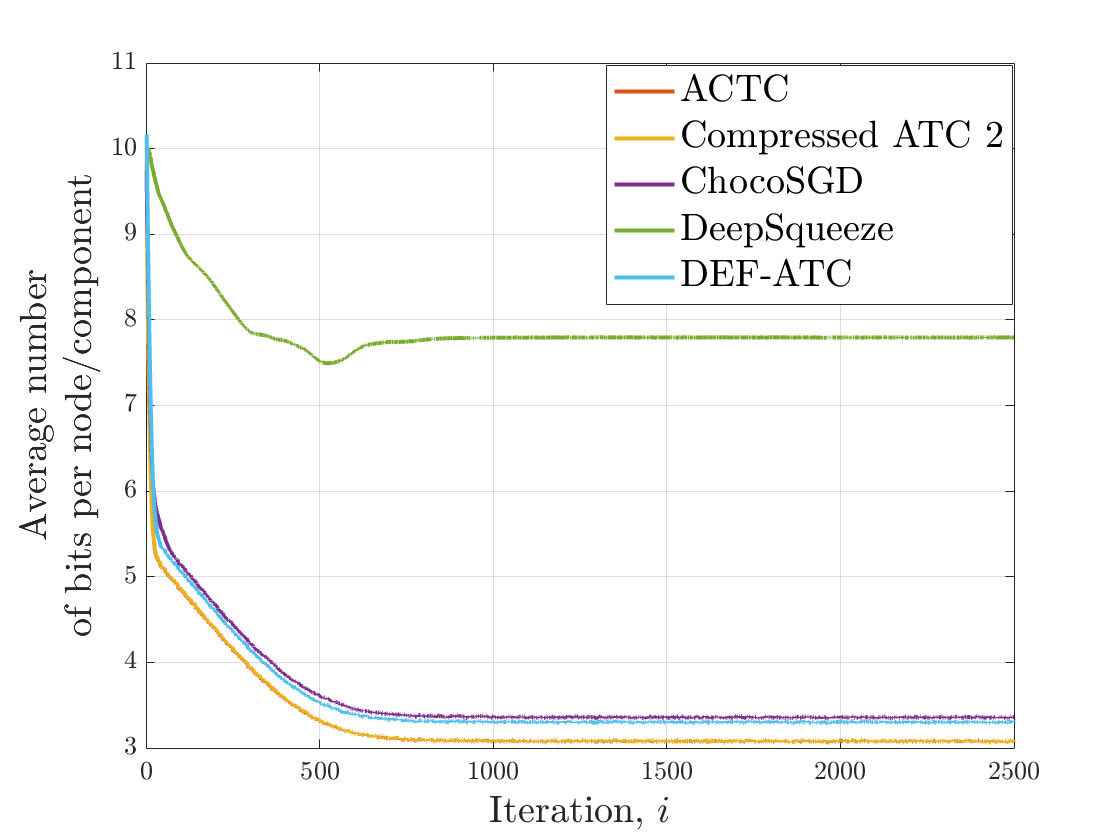}
\caption{DEF-ATC performance w.r.t. to state-of-the-art baselines. Evolution of the  MSD  and average number of bits/node/component (when top-$4$ probabilistic uniform  with variable-rate encoding scheme is used) for different values of the step-size. \textit{(Left)} $\mu=0.01$. \textit{(Right)} $\mu=0.001$.}
\label{fig: MSD sota comparisons}
\end{center}
\end{figure}
In this part, we compare the DEF-ATC {diffusion} to the following approaches: $i)$ ChocoSGD~\cite{koloskova2019decentralized}, $ii)$ DeepSqueeze~\cite{tang2019deepsqueeze}, $iii)$ diffusion ACTC~\cite{carpentiero2022adaptive}, and $iv)$ compressed diffusion ATC approach  (which we refer to as the ``compressed ATC $2$'')~\cite{nassif2023quantization}. We assume that all agents employ the top-$4$ probabilistic uniform quantizer with the variable rate encoding scheme from~\cite{lee2021finite},~\cite[Sec.~II]{nassif2023quantization}. In Fig.~\ref{fig: MSD sota comparisons}, we report the network MSD learning curves with the corresponding bit rates, for 2 different values of the step-size $\mu=0.01$ (\emph{left}) and $\mu=0.001$ (\emph{right}). The results are averaged over $100$ Monte-Carlo runs. We set the quantization parameter $\Delta=\mu$. For the ACTC~\cite{carpentiero2022adaptive} and compressed ATC $2$~\cite{nassif2023quantization} approaches (which were originally designed to handle unbiased probabilistic compression), {we used a step-size $\mu=0.0125$ and  $\mu=0.00125$, in place of $\mu=0.01$ and $\mu=0.001$, respectively, in order to ensure that they have the same learning rate as the uncompressed ATC approach. This configuration ensures that all approaches are compared at the same learning rate}. 
The other parameters of the baselines approaches are set  as follows: $i)$ ChocoSGD: $\gamma=0.9$,  $ii)$ DeepSqueeze: (consensus parameter) $\eta=0.05$ when $\mu=0.01$ and  $\eta=0.2$ when $\mu=0.001$, $iii)$ ACTC: $\zeta=0.9$, $iv)$ compressed ATC $2$: $\gamma=0.9$, and $v)$ DEF-ATC: $\gamma=\zeta=0.9$. {While the space of algorithms' hyperparameters  ($\gamma$, $\zeta$, etc.)  is explored in the next experiment, it is worth noting that the values chosen in the current experiment ensure a stable compressed strategy with the lowest MSD level.} As it can be observed from the MSD learning curves, the DEF-ATC approach tends to outperform state-of-the-art baselines in various step-size regimes. In order to identify which method provides better compression efficiency for a given level of distortion, we report in Fig.~\ref{fig: rate-distortion sota comparisons} the RD curves of the different approaches. Before analyzing the results, it is noteworthy that the bit rate curves reported in Fig.~\ref{fig: MSD sota comparisons} indicate that the DeepSqueeze approach {tends to require a larger number of bits as the step-size decreases. Thus, for very fine quantization (i.e., small step-sizes since $\Delta=\mu$), the number of bits required tends to grow, making it impractical for use in such scenarios. By noting that DeepSqueeze does not employ differential quantization, the increase in bit rate with decreasing quantization step is expected. In fact, in this case, the input values at the compressor remain large (in particular, their effective range does not scale with the step-size $\mu$), leading to an  increase in the bit rate with decreasing quantization step.} 


 \begin{figure}
\begin{center}
\includegraphics[scale=0.22]{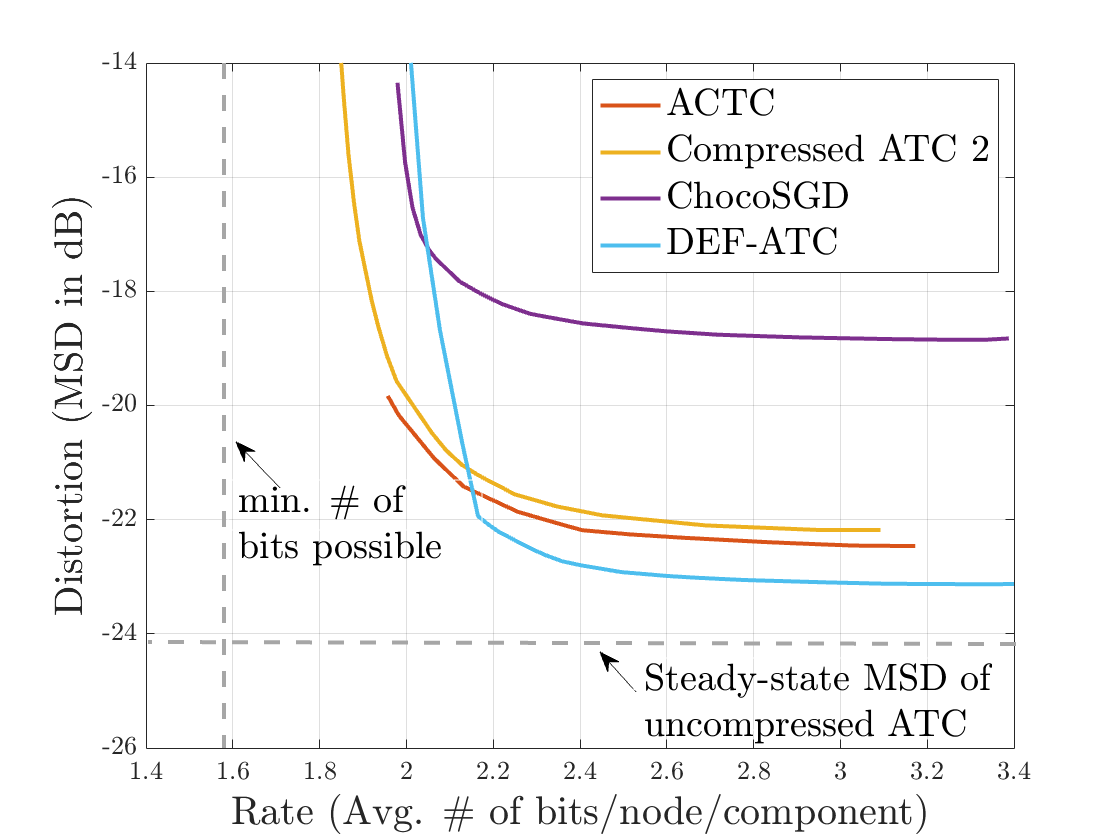}\quad
\includegraphics[scale=0.22]{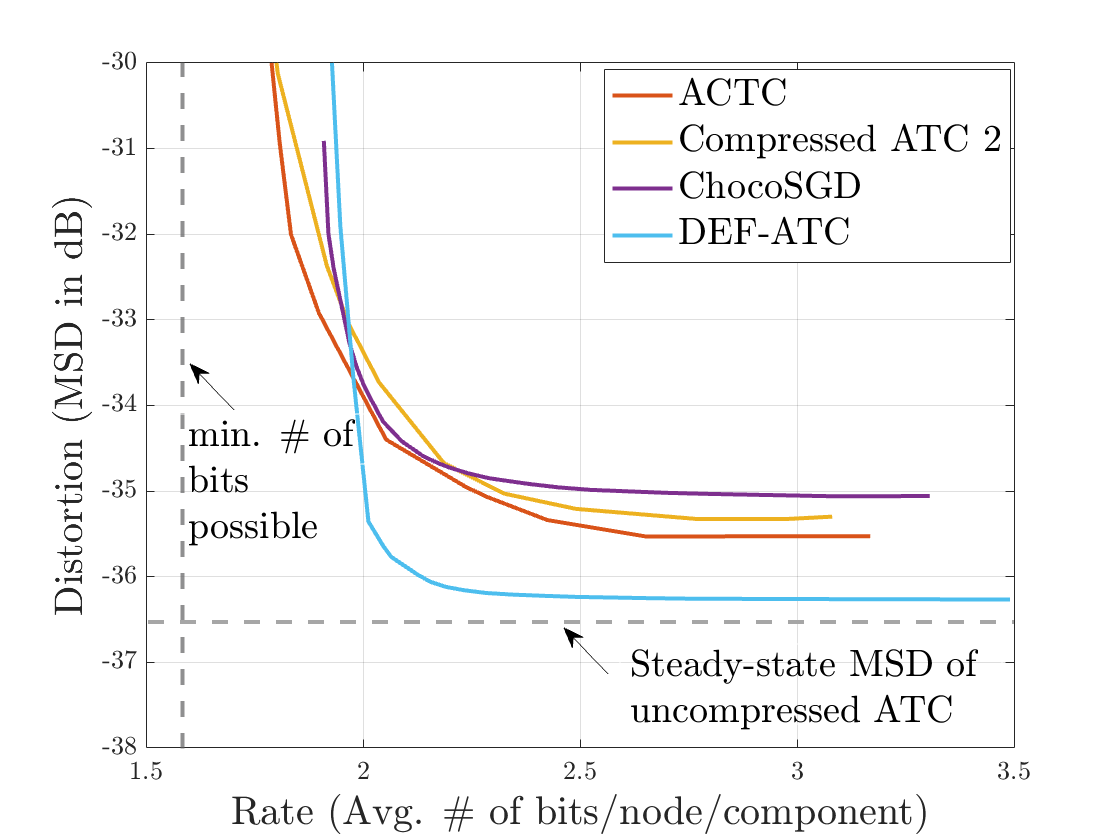}
\caption{Rate-distortion curves of the DEF-ATC and state-of-the-art baselines approaches for different values of the step-size. The top-$4$ probabilistic uniform quantizer with variable-rate encoding scheme is used. \textit{(Left)} $\mu=0.01$. \textit{(Right)} $\mu=0.001$.  }
\label{fig: rate-distortion sota comparisons}
\end{center}
\end{figure}
For each approach, the process for generating the rate-distortion curves in Fig.~\ref{fig: rate-distortion sota comparisons} consists of three main steps. First, we select  a \emph{set of algorithm's hyperparameters} ($\gamma$, $\zeta$, etc.). In particular, for ACTC~\cite{carpentiero2022adaptive}, we select $10$ values of the damping coefficient $\zeta$ uniformly spaced in the interval $[0.1,0.9]$. For compressed ATC $2$~\cite{nassif2023quantization} (and ChocoSGD~\cite{koloskova2019decentralized}), we select $10$ values of the mixing parameter $\gamma$ uniformly spaced in the interval $[0.1,0.9]$. For the DEF-ATC approach, we create two sets of $5$ uniformly spaced values in the interval $[0.1,0.9]$ for the coefficients $\zeta$ and $\gamma$, and then consider all possible pairs from these sets. In the second step, and for each hyperparameter setting (i.e., each element in the sets of step~1), we generate the RD curve by following the same method as in Sec.~\ref{subsec: Top-$k$ with quantization outperforms other compression rules}, namely, we vary the quantization step according to $\Delta=\mu^{\frac{1+\varepsilon}{2}}$, where $25$ values of $\varepsilon$ linearly spaced in the interval $[10^{-3},1]$ are chosen. For each value of $\Delta$, distortion and rate are evaluated by averaging over $100$ samples after convergence (expectations are computed empirically over $50$ Monte Carlo runs). Each RD curve then represents the performance of the algorithm under a specific choice of hyperparameters. In the last step, we generate and report in Fig.~\ref{fig: rate-distortion sota comparisons} the optimal RD curve given by the convex hull of the empirical curves collected in step $2$. This process allows us to identify the best possible performance trade-offs by varying the algorithms' hyperparameters ($\gamma$, $\zeta$, etc.) and the compression parameters, namely, $\varepsilon$. Two learning step-size regimes are considered, namely, $\mu=0.01$ (\emph{left} plot) and $\mu=0.001$ (\emph{right} plot). By exploring the hyperparameter space and by considering different step-size regimes, the results show that the DEF-ATC approach can still achieve the closest performance to the uncompressed approach with a relatively small number of bits (approximately $2.2$--$2.4$ bits/component/iteration are required on average in steady-state), outperforming state-of-the-art baselines.

\subsection{Beyond single-task estimation}
\label{subsec: Beyond classical consensus or single-task estimation}
To illustrate the effectiveness of the DEF-ATC approach in solving general optimization problems of the form~\eqref{eq: network constrained problem}, we conduct an experiment in which agents seek consensus on certain components of their estimates while seeking partial consensus on others. {In particular, we} assume that we have $5$ {connected\footnote{A group of nodes is said to be connected if there is a path between every pair of nodes.}} groups of agents, namely, $\cG_1=\{1,\ldots, 15\}$, $\cG_2=\{16,\ldots, 30\}$, $\cG_3=\{1,\ldots,10\}$, $\cG_4=\{11,\ldots,20\}$, and $\cG_5=\{21,\ldots,30\}${, and that}   
 the model parameter vector $w^\star_k$ at agent $k$ is of the form $w^\star_k=w^{\bullet}_k+\Delta_{k,i}$, where $\Delta_{k,i}$ is a $10\times 1$ vector with each component randomly generated from the Gaussian distribution, with zero mean and variance $0.1$. The vectors $\{w^{\bullet}_k\}$ are generated in such a way that the first $5$ components are common across the network, the components  $6,7,8$ are separately common for agents in $\cG_1$ and $\cG_2$, and the last two components are separately common for agents in groups $\cG_3,\cG_4$, and $\cG_5$. Then, we choose the constraints in~\eqref{eq: network constrained problem} (i.e., the matrix $\cU$) in order to enforce \emph{global} consensus on the first $5$ components of the estimates and \emph{partial} consensus on the remaining components. The partial consensus is as follows. Agents in $\cG_1$ should converge to a consensus on components $6$--$8$, and agents in $\cG_2$ should also converge to a consensus on components $6$--$8$, independently from the first group. For the remaining two components $9$ and $10$, consensus is enforced within each of the groups $\cG_3,\cG_4$, and $\cG_5$. The matrix $\cA$ satisfying the conditions in~\eqref{eq: condition A} and having the same {sparsity structure as the link matrix in Fig.~\ref{fig: network settings} \emph{(left)}} is found by following the same approach as in
~\cite{nassif2020adaptation}.We assume that all agents employ the top-$4$ probabilistic uniform quantizer. In Fig.~\ref{fig: variable step-size overlapping} \emph{(left)} and \emph{(middle)}, we report the network MSD~\eqref{eq: definition of instanteneous MSD} and average number of bits/node/component~\eqref{eq: definition of instanteneous rate} for $3$ different values of the step-size $\mu$.  The results are averaged over $100$ Monte-Carlo runs. We set $\gamma=\zeta=0.9$ and the quantizer parameter $\Delta=\mu$.  As in Sec.~\ref{subsec: Illustrating the theoretical findings}, we observe, in the small step-size regime, that the DEF-ATC achieves the same performance (which is on the order of $\mu$) as the uncompressed ATC approach, and  is able to maintain a finite bit rate when the step-size approaches zero. This illustrates the effectiveness of DEF-ATC in handling different problem settings, beyond  traditional single-task estimation. In Fig.~\ref{fig: variable step-size overlapping} \emph{(right)}, we report the RD curves  of DEF-ATC with probabilistic uniform and top-$4$ probabilistic uniform quantization. These curves have been generated in the same way as those in Fig.~\ref{fig: rate-distortion curve for probabilistic uniform}. The results show that quantizing only the highest magnitude components of a vector, as opposed to the entire vector, can reduce the number of bits required while maintaining a low level of distortion.

\begin{figure}
\begin{center}
\includegraphics[scale=0.17]{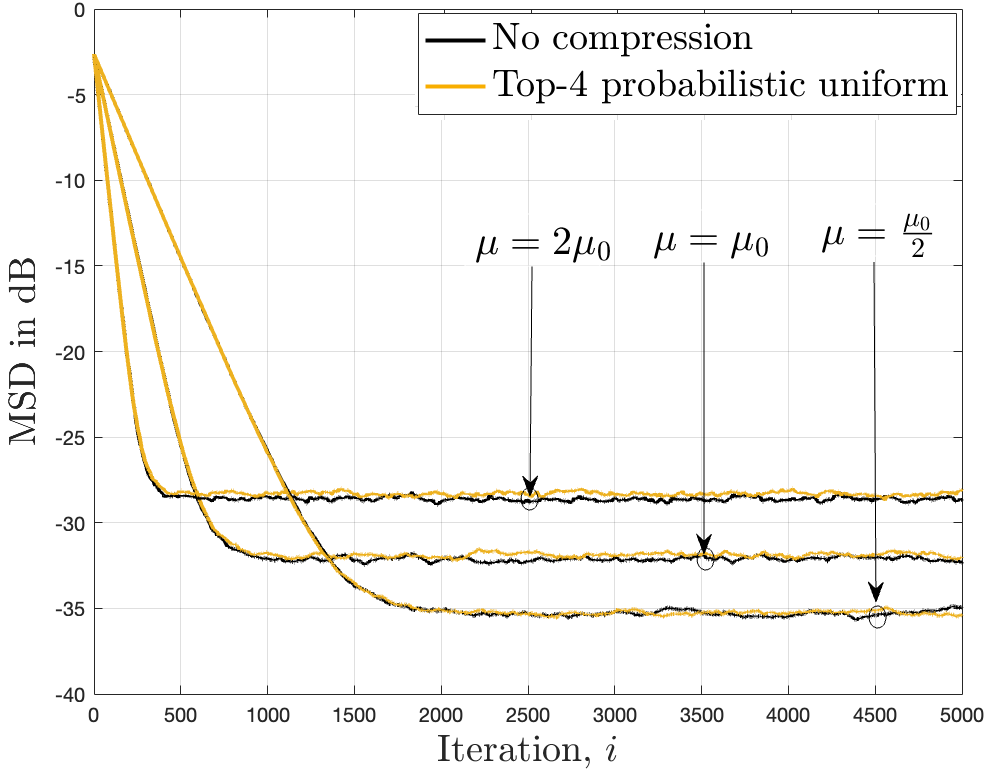}
\includegraphics[scale=0.17]{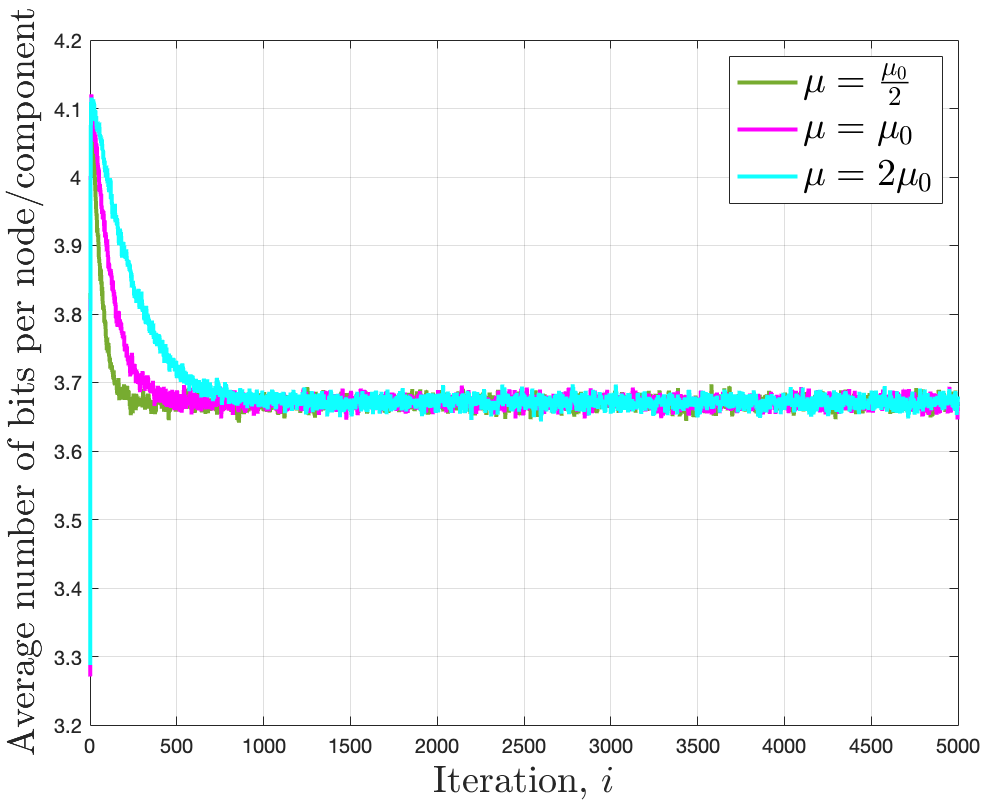}
\includegraphics[scale=0.17]{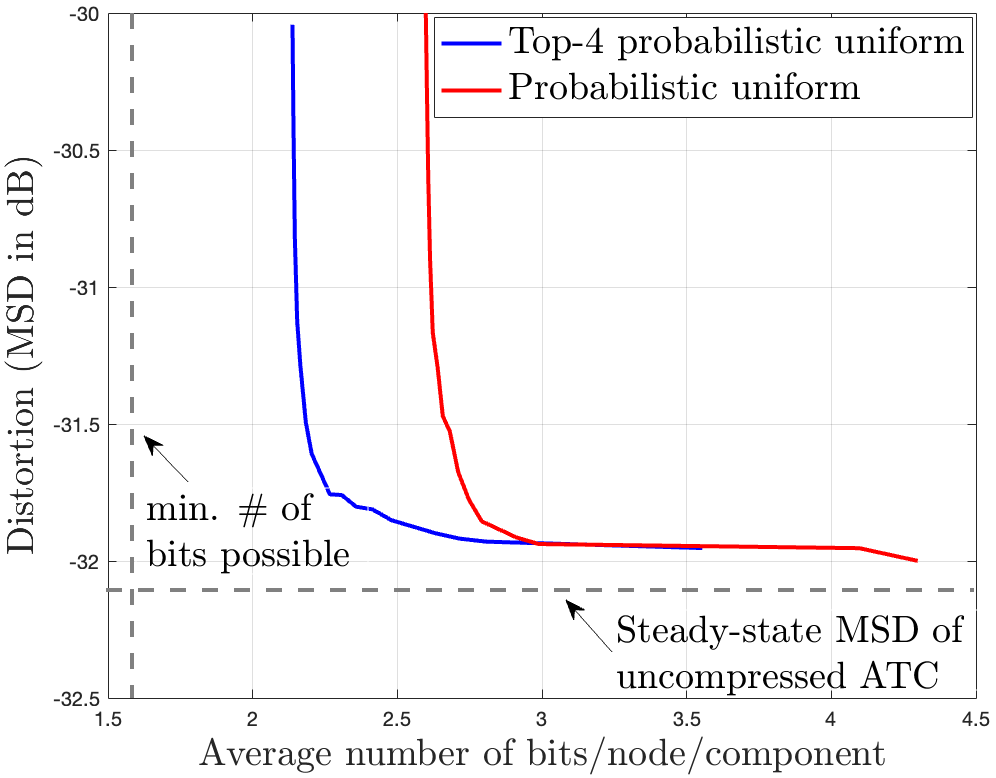}
\caption{Beyond single-task estimation. \textit{(Left)} Network MSD learning curves w.r.t. $w^o_k$ in~\eqref{eq: network constrained problem} for three different values of the step-size ($\mu_0=0.001$). \textit{(Middle)} Evolution of the average number of bits/node/component when  the variable-rate scheme is used. \textit{(Right)} Rate-distortion curves of the DEF-ATC (when $\mu=\mu_0$) with probabilistic uniform and top-$4$ probabilistic uniform quantization. }
\label{fig: variable step-size overlapping}
\end{center}
\end{figure}

\section{Conclusion}
In this work, we presented an approach for solving decentralized learning problems where agents have individual {risks} to minimize subject to subspace constraints that require the minimizers across the network to lie in low-dimensional subspaces. This constrained formulation includes consensus or single-task optimization as {a} special  case, and allows for more general task relatedness models such as multitask smoothness and  coupled optimization. To reduce the communication cost among agents, we incorporated compression into the decentralized approach by employing \emph{differential quantization} at the agent level to compress the iterates before communicating them to neighbors. {In addition},  we implemented in the learning approach an \emph{error-feedback} mechanism
, which consists of incorporating  the compression  error into subsequent steps. We then showed that, under some general conditions on the compression noise, and for sufficiently small step-sizes~$\mu$, the resulting communication-efficient strategy is stable both in  terms of  mean-square error and  average bit rate. The results {showed} that, in the small step-size regime, the iterates generated by the decentralized communication-efficient approach achieve the same  performance as the decentralized  baseline full-precision approach where no communication compression is performed. Simulations illustrated   the theoretical findings and the effectiveness of the approach.

\begin{appendices}
\section{Proof of Lemma~\ref{lemma: Property of the top-$c$ quantizer}}
\label{app: Property of the top-c quantizer}
Let $\cI'_c$ denote the complement set of $\cI_c$. 
Since all the components $\{x_j,j\in\cI_c\}$ are in magnitude greater than or equal to the components $\{x_j,j\in\cI'_c\}$, we can write:
\begin{equation}
\frac{1}{c}\sum_{j\in\cI_c}x_j^2\geq \frac{1}{L}\sum_{j=1}^Lx_j^2,
\end{equation}
from which we conclude the following useful identity:
\begin{equation}
\label{eq: intermediate equation for the sparsifier}
\sum_{j\in\cI_c}x_j^2\geq  \frac{c}{L}\|x\|^2.
\end{equation}
Now, to show that the top-$c$ quantizer is a bounded-distortion compression operator with parameters given by~\eqref{eq: bounded distortion compression operator parameters}, we can manipulate the compression error as follows:
\begin{align}
\expec\|x-\bcC(x)\|^2=\expec\|x-\alpha\bcQ(\cS(x))\|^2&=\expec\|x-\alpha\cS(x)+\alpha(\cS(x)-\bcQ(\cS(x)))\|^2\notag\\
&\overset{\text{(a)}}{=}\|x-\alpha\cS(x)\|^2+\alpha^2\expec\|\cS(x)-\bcQ(\cS(x))\|^2\notag\\
&\leq\|x-\alpha\cS(x)\|^2+\alpha^2\beta^2_q\|\cS(x)\|^2+\alpha^2\sigma^2_q\notag\\
&=(1-\alpha)^2\sum_{j\in\cI_c}x_j^2+\sum_{j\in\cI'_c}\|x_j\|^2+\alpha^2\beta^2_q\|\cS(x)\|^2+\alpha^2\sigma^2_q\notag\\
&=\left((1-\alpha)^2+\alpha^2\beta^2_q\right)\sum_{j\in\cI_c}x_j^2+\sum_{j\in\cI'_c}\|x_j\|^2+\alpha^2\sigma^2_q\notag\\
&=\|x\|^2-\left(1-\left((1-\alpha)^2+\alpha^2\beta^2_q\right)\right)\sum_{j\in\cI_c}\|x_j\|^2+\alpha^2\sigma^2_q\notag\\
&\overset{\eqref{eq: intermediate equation for the sparsifier}}\leq\left(1-\left(1-\left((1-\alpha)^2+\alpha^2\beta^2_q\right)\right)\frac{c}{L}\right)\|x\|^2+\alpha^2\sigma^2_q,
\end{align}
where in (a) we used the fact that $\alpha\langle x-\alpha\cS(x),\expec[\cS(x)-\bcQ(\cS(x))]\rangle=0$ in both cases: biased ($\alpha=1$) and unbiased ($\alpha=\frac{1}{1+\beta^2_q}$) quantizer $\bcQ(\cdot)$.


\section{Mean-square-error analysis}
\label{app: Mean-square-error analysis}
We consider the transformed iterates $\bphib^z_{i}$ and $\bphic^z_{i}$ in~\eqref{eq: evolution of the centralized recursion 1} and~\eqref{eq: evolution of the centralized recursion 3}, respectively.  Computing the second-order moment of both sides of~\eqref{eq: evolution of the centralized recursion 1}, we get:
\begin{equation}
\expec\|\bphib^z_i\|^2=\expec\|(I_P-\mu\bcD_{11,i-1})\bphib^z_{i-1}-\mu\bcD_{12,i-1}\bphic^z_{i-1}-\mu\bcD_{11,i-1}\bzb_{i-1}-\mu\bcD_{12,i-1}\bzc_{i-1}\|^2+\mu^2\expec\|\bsb_{i}\|^2,\label{eq: centralized error recursion biased}
\end{equation}
where,  from Assumption~\ref{assump: gradient noise} on the gradient noise processes, we used the fact that:
\begin{align}
\expec[\bx_{i-1}^{\top}\bsb_{i}]&=\expec\left[\expec\left[\bx_{i-1}^{\top}\bsb_{i}\Big|\{{\bphi_{\ell,i-1},\bz_{\ell,i-1}}\}_{\ell=1}^K\right]\right]=\expec\left[\bx_{i-1}^{\top}\expec\left[\bsb_{i}\Big|\{{\bphi_{\ell,i-1},\bz_{\ell,i-1}}\}_{\ell=1}^K\right]\right]=0
\end{align} 
with $\bx_{i-1}=(I_P-\mu\bcD_{11,i-1})\bphib^z_{i-1}-\mu\bcD_{12,i-1}\bphic^z_{i-1}-\mu\bcD_{11,i-1}\bzb_{i-1}-\mu\bcD_{12,i-1}\bzc_{i-1}$. Using similar arguments, we can also show that:
\begin{align}
\expec\|\bphic^z_i\|^2=&\expec\|(\cJ''_{\epsilon}-\mu\bcD_{22,i-1})\bphic^z_{i-1}-\mu\bcD_{21,i-1}\bphib^z_{i-1}+\mu\widecheck{b}-\mu\bcD_{21,i-1}\bzb_{i-1}-\left(\zeta(I-\cJ_{\epsilon}')+\mu\bcD_{22,i-1}\right)\bzc_{i-1}\|^2\notag\\
&\qquad\qquad\qquad+\mu^2\expec\|\bsc_{i}\|^2.\label{eq: non centralized error  recursion biased}
\end{align}
By using similar arguments as those used to  establish inequalities~(119) and~(124) in~\cite[Appendix D]{nassif2020adaptation}, we can show that:
\begin{equation}
\label{eq: centralized error vector recursion inequality biased}
\begin{split}
\expec\|\bphib^z_i\|^2&\leq(1-\mu\sigma_{11})\expec\|\bphib^z_{i-1}\|^2+\frac{\mu}{\sigma_{11}}\expec\|\bcD_{12,i-1}\bphic^z_{i-1}+\bcD_{11,i-1}\bzb_{i-1}+\bcD_{12,i-1}\bzc_{i-1}\|^2+\mu^2\expec\|\bsb_{i}\|^2\\
&\leq(1-\mu\sigma_{11})\expec\|\bphib^z_{i-1}\|^2+\frac{3\mu\sigma^2_{12}}{\sigma_{11}}\expec\|\bphic^z_{i-1}\|^2+3\mu\sigma_{11}\expec\|\bzb_{i-1}\|^2+\frac{3\mu\sigma^2_{12}}{\sigma_{11}}\expec\|\bzc_{i-1}\|^2+\mu^2\expec\|\bsb_{i}\|^2\\
&\leq(1-\mu\sigma_{11})\expec\|\bphib^z_{i-1}\|^2+\frac{3\mu\sigma^2_{12}}{\sigma_{11}}\expec\|\bphic^z_{i-1}\|^2+\left(3\mu\sigma_{11}+\frac{3\mu\sigma^2_{12}}{\sigma_{11}}\right)\expec\|\cV_{\epsilon}^{-1}\bz_{i-1}\|^2+{\mu^2}\expec\|\bsb_{i}\|^2,
\end{split}
\end{equation}
and 
\begin{equation}
\label{eq: centralized error vector recursion inequality 2 biased}
\begin{split}
\expec\|\bphic^z_i\|^2&\leq\|\cJ''_{\epsilon}\|\expec\|\bphic^z_{i-1}\|^2+\frac{2\mu^2}{1-\|\cJ''_{\epsilon}\|}\expec\|\bcD_{22,i-1}\bphic^z_{i-1}+\bcD_{21,i-1}\bphib^z_{i-1}-\widecheck{b}+\bcD_{21,i-1}\bzb_{i-1}+\bcD_{22,i-1}\bzc_{i-1}\|^2+\\
&\qquad \frac{2\zeta^2\|I-\cJ_{\epsilon}'\|^2}{1-\|\cJ''_{\epsilon}\|}\expec\|\bzc_{i-1}\|^2+\mu^2\expec\|\bsc_{i}\|^2\\
&\leq\left(\|\cJ''_{\epsilon}\|+\frac{10\mu^2\sigma_{22}^2}{1-\|\cJ''_{\epsilon}\|}\right)\expec\|\bphic^z_{i-1}\|^2+\left(\frac{10\mu^2\sigma_{21}^2}{1-\|\cJ''_{\epsilon}\|}\right)\expec\|\bphib^z_{i-1}\|^2+\left(\frac{10\mu^2}{1-\|\cJ''_{\epsilon}\|}\right)\|\widecheck{b}\|^2+\\
&\qquad\left(\frac{10\mu^2\sigma_{21}^2}{1-\|\cJ''_{\epsilon}\|}\right)\expec\|\bzb_{i-1}\|^2+\left(\frac{2\zeta^2\|I-\cJ_{\epsilon}'\|^2}{1-\|\cJ''_{\epsilon}\|}+\frac{10\mu^2\sigma_{22}^2}{1-\|\cJ''_{\epsilon}\|}\right)\expec\|\bzc_{i-1}\|^2+\mu^2\expec\|\bsc_{i}\|^2\\
&\leq\left(\|\cJ''_{\epsilon}\|+\frac{10\mu^2\sigma_{22}^2}{1-\|\cJ''_{\epsilon}\|}\right)\expec\|\bphic^z_{i-1}\|^2+\left(\frac{10\mu^2\sigma_{21}^2}{1-\|\cJ''_{\epsilon}\|}\right)\expec\|\bphib^z_{i-1}\|^2+\left(\frac{10\mu^2}{1-\|\cJ''_{\epsilon}\|}\right)\|\widecheck{b}\|^2+\\
&\qquad\left(\frac{2{\zeta^2}\|I-\cJ_{\epsilon}'\|^2}{1-\|\cJ''_{\epsilon}\|}+\frac{10\mu^2(\sigma_{22}^2+\sigma_{21}^2)}{1-\|\cJ''_{\epsilon}\|}\right)\expec\|\cV_{\epsilon}^{-1}\bz_{i-1}\|^2+\mu^2\expec\|\bsc_{i}\|^2,
\end{split}
\end{equation}
for some positive constant $\sigma_{11}$ and non-negative constants $\sigma_{12},\sigma_{21}$, and $\sigma_{22}$ independent of $\mu$, and where we used the fact {that the} $2-$induced matrix norm of the block diagonal matrix $\cJ''_{\epsilon}$ in~\eqref{eq: definition of J'' epsilon} satisfies $\|\cJ''_{\epsilon}\|\in(0,1)$. In fact, from~\eqref{eq: jordan decomposition of A'}, we can re-write $\cJ''_{\epsilon}$ in the following form: 
\begin{equation}
\label{eq: jordan decomposition of A"}
\cJ''_{\epsilon}=(1-\gamma\zeta)I_{M-P}+\gamma\zeta\cJ_{\epsilon}.
\end{equation}
By following similar arguments as in~\cite[pp.~516--517]{sayed2014adaptation}, we can first show that  $\cJ''_{\epsilon}$ in~\eqref{eq: jordan decomposition of A"}  satisfies:
\begin{equation}
\label{eq: bounding square of j prime biased}
\|\cJ_{\epsilon}''\|^2\leq(\rho(\cJ''_{\epsilon})+\gamma\zeta\epsilon)^2.
\end{equation}
From~\eqref{eq: jordan decomposition of A"}, we can also show that: 
\begin{equation}
\label{eq: equation to bound rho J prime biased}
\rho(\cJ''_{\epsilon})\leq(1-\gamma\zeta)+\gamma\zeta\rho(\cJ_{\epsilon}).
\end{equation} 
Using the fact that $\rho(\cJ_{\epsilon})\in(0,1)$ from~\cite[Lemma~2]{nassif2020adaptation} and the fact that $\gamma\zeta\in(0,1]$, we obtain $\rho(\cJ''_{\epsilon})\in(0,1)$.  Moreover, since $\rho(\cJ''_{\epsilon})+\gamma\zeta\epsilon$ is non-negative, by replacing~\eqref{eq: equation to bound rho J prime biased} into~\eqref{eq: bounding square of j prime biased},  we obtain:
\begin{equation}
\label{eq: spectral norm of J''}
\begin{split}
\|\cJ_{\epsilon}''\|&\leq(1-\gamma\zeta)+\gamma\zeta\rho(\cJ_{\epsilon})+\gamma\zeta\epsilon\\
&=1-\gamma\zeta(1-\rho(\cJ_{\epsilon})-\epsilon).
\end{split}
\end{equation}
This identity will be used in the subsequent analysis. Returning to the result~\eqref{eq: centralized error vector recursion inequality 2 biased}, and as it can be seen from~\eqref{eq: bias transformed vector}, $\widecheck{b}=\cV_{L,\epsilon}^\top b$ depends on $b$ in~\eqref{eq: equation for b}, which is defined in terms of the gradients $\{\nabla_{w_k}J_k(w^o_k)\}$.  Since the costs $J_k(w_k)$ are twice differentiable, then $\|b\|^2$ is bounded and we obtain $\|\widecheck{b}\|^2=O(1)$.

For the gradient noise terms, by following similar arguments as in~\cite[Chapter 9]{sayed2014adaptation},~\cite[Appendix D]{nassif2020adaptation} and by using Assumption~\ref{assump: gradient noise}, we can show that:
\begin{equation}
\label{eq: gradient noise term bound 0 biased}
\expec\|\bsb_i\|^2+\expec\|\bsc_i\|^2=\expec\|\cV_{\epsilon}^{-1}\bs_i\|^2\leq v_1^2\beta_{s,\max}^2 \expec\|\bcwt_{i-1}\|^2+ v_1^2\overline{\sigma}^2_s,
\end{equation}
where $v_1=\|\cV_{\epsilon}^{-1}\|$, $\beta_{s,\max}^2=\max_{1\leq k\leq K}\beta_{s,k}^2$, and $\overline{\sigma}^2_s=\sum_{k=1}^K\sigma^2_{s,k}$. 
 Using expression~\eqref{eq: wt in terms of phi}, the fact that $\bphit_{i-1}=\bphit^z_{i-1}+\bz_{i-1}$, and the Jordan decomposition of the matrix $\cA'$ in~\eqref{eq: jordan decomposition of A'}, we obtain:
\begin{align}
\expec\|\bsb_i\|^2+\expec\|\bsc_i\|^2&\leq v_1^2\beta_{s,\max}^2 \expec\|\cA'(\bphit^z_{i-1}+\bz_{i-1})\|^2+ v_1^2\overline{\sigma}^2_s\notag\\
&\leq v_1^2\beta_{s,\max}^2 \expec\|\cV_{\epsilon}\Lambda'(\cV_{\epsilon}^{-1}\bphit^z_{i-1}+\cV_{\epsilon}^{-1}\bz_{i-1})\|^2+ v_1^2\overline{\sigma}^2_s\notag\\
&\overset{\text{(a)}}\leq v_1^2\beta_{s,\max}^2 v_2^2(\expec\|\cV_{\epsilon}^{-1}\bphit^z_{i-1}+\cV_{\epsilon}^{-1}\bz_{i-1}\|^2)+ v_1^2\overline{\sigma}^2_s\notag\\
&\leq 2v_1^2\beta_{s,\max}^2 v_2^2(\expec\|\bphib^z_{i-1}\|^2+\expec\|\bphic^z_{i-1}\|^2)+2v_1^2\beta_{s,\max}^2 v_2^2\expec\|\cV_{\epsilon}^{-1}\bz_{i-1}\|^2+ v_1^2\overline{\sigma}^2_s,\label{eq: gradient noise term bound biased}
\end{align}
where $v_2=\|\cV_{\epsilon}\|$. {In step (a) we used the sub-multiplicative property of norms and the fact that the $2-$induced matrix norm of the block diagonal matrix $\Lambda'$ in~\eqref{eq: jordan decomposition of A'} is equal to $1$}. Using the bound~\eqref{eq: gradient noise term bound biased} into~\eqref{eq: centralized error vector recursion inequality biased} and~\eqref{eq: centralized error vector recursion inequality 2 biased}, we obtain:
\begin{equation}
\label{eq: centralized error vector recursion inequality with bounding gradient noise biased}
\begin{split}
\expec\|\bphib^z_i\|^2\leq&\left(1-\mu\sigma_{11}+2\mu^2v_1^2\beta_{s,\max}^2 v_2^2\right)\expec\|\bphib^z_{i-1}\|^2+\left(\frac{3\mu\sigma^2_{12}}{\sigma_{11}}+2\mu^2v_1^2\beta_{s,\max}^2 v_2^2\right)\expec\|\bphic^z_{i-1}\|^2+\\
& \left(3\mu\sigma_{11}+\frac{3\mu\sigma^2_{12}}{\sigma_{11}}+2\mu^2v_1^2\beta_{s,\max}^2 v_2^2\right)\expec\|\cV_{\epsilon}^{-1}\bz_{i-1}\|^2+\mu^2v_1^2\overline{\sigma}^2_s,
\end{split}
\end{equation}
and
\begin{equation}
\label{eq: centralized error vector recursion inequality 2 with bounding gradient noise biased}
\begin{split}
\expec\|\bphic^z_i\|^2\leq&\left(\|\cJ''_{\epsilon}\|+\frac{10\mu^2\sigma_{22}^2}{1-\|\cJ''_{\epsilon}\|}+2\mu^2v_1^2\beta_{s,\max}^2 v_2^2\right)\expec\|\bphic^z_{i-1}\|^2+\left(\frac{10\mu^2\sigma_{21}^2}{1-\|\cJ''_{\epsilon}\|}+2\mu^2v_1^2\beta_{s,\max}^2 v_2^2\right)\expec\|\bphib^z_{i-1}\|^2+\\
&\left(\frac{2\zeta^2\|I-\cJ_{\epsilon}'\|^2}{1-\|\cJ''_{\epsilon}\|}+\frac{10\mu^2(\sigma_{22}^2+\sigma_{21}^2)}{1-\|\cJ''_{\epsilon}\|}+2\mu^2v_1^2\beta_{s,\max}^2 v_2^2\right)\expec\|\cV_{\epsilon}^{-1}\bz_{i-1}\|^2+\left(\frac{10\mu^2}{1-\|\cJ''_{\epsilon}\|}\right)\|\widecheck{b}\|^2+\mu^2v_1^2\overline{\sigma}^2_s.
\end{split}
\end{equation}

Now, for the quantization noise vector $\bz_i$ in~\eqref{eq: equation for z}, we have:
\begin{equation}
\label{eq: expression 1 biased}
\begin{split}
\expec\|\cV_{\epsilon}^{-1}\bz_i\|^2\leq \zeta^2v_1^2\left(\sum_{k=1}^K\expec\|\bz_{k,i}\|^2\right).
\end{split}
\end{equation}
From~\eqref{eq: quantization error} and Assumption~\ref{assump: quantization noise}, and since $\bpsi_{k,i}-\bphi_{k,i-1}=\bphit_{k,i-1}-\bpsit_{k,i}$, we can write:
\begin{equation}
\label{eq: expression 3 biased}
\begin{split}
\expec\|\bz_{k,i}\|^2\leq{\beta}^2_{c,k}\expec\|\bphit_{k,i-1}-\bpsit_{k,i}+\bz_{k,i-1}\|^2+{\sigma}^2_{c,k},
\end{split}
\end{equation}
and, therefore,
\begin{equation}
\label{eq: expression 4 biased}
\begin{split}
{\sum_{k=1}^K\expec\|\bz_{k,i}\|^2}\leq {\beta}_{c,\max}^2\expec\|\bphit_{i-1}-\bpsit_{i}+\bz_{i-1}\|^2+\overline{\sigma}^2_c,
\end{split}
\end{equation}
where ${\beta}_{c,\max}^2=\max_{1\leq k\leq K}\{{\beta}^2_{c,k}\}$ and $\overline{\sigma}^2_c=\sum_{k=1}^K{\sigma}^2_{c,k}$. Since the analysis is facilitated by transforming the network vectors into the Jordan decomposition basis of the matrix $\cA'$, we proceed by noting that the term ${\sum_{k=1}^K\expec\|\bz_{k,i}\|^2}$ can be bounded as follows:
\begin{equation}
\label{eq: expression 6 biased}
\begin{split}
{\sum_{k=1}^K\expec\|\bz_{k,i}\|^2}&\overset{\eqref{eq: expression 4 biased}}\leq  {\beta}_{c,\max}^2\expec\|\cV_{\epsilon}\cV_{\epsilon}^{-1}(\bphit_{i-1}-\bpsit_{i}+\bz_{i-1})\|^2+ \overline{\sigma}^2_c\\
&~\leq {\beta}_{c,\max}^2\|\cV_{\epsilon}\|^2\expec\|\cV_{\epsilon}^{-1}(\bphit_{i-1}-\bpsit_{i}+\bz_{i-1})\|^2+ \overline{\sigma}^2_c\\
&~= v_2^2 {\beta}_{c,\max}^2\expec\|\cV_{\epsilon}^{-1}(\bphit_{i-1}-\bpsit_{i}+\bz_{i-1})\|^2+\overline{\sigma}^2_c\\
&~= v_2^2 {\beta}_{c,\max}^2\left(\expec\|\overline{\bchi}_{i}\|^2+\expec\|\widecheck{\bchi}_i\|^2\right)+\overline{\sigma}^2_c,
\end{split}
\end{equation}
where
\begin{align}
\overline{\bchi}_{i}&\triangleq\cU^{\top}(\bphit_{i-1}-\bpsit_{i}+\bz_{i-1}),\\
\widecheck{\bchi}_i&\triangleq\cV_{L,\epsilon}^{\top}(\bphit_{i-1}-\bpsit_{i}+\bz_{i-1}).
\end{align}
Therefore, by combining~\eqref{eq: expression 1 biased} and~\eqref{eq: expression 6 biased}, we obtain:
\begin{equation}
\label{eq: expression 10 biased}
\begin{split}
\expec\|\cV_{\epsilon}^{-1}\bz_i\|^2\leq \zeta^2v_1^2v_2^2{\beta}_{c,\max}^2[\expec\|\overline{\bchi}_{i}\|^2+\expec\|\widecheck{\bchi}_i\|^2]+\zeta^2v_1^2\overline{\sigma}^2_c.
\end{split}
\end{equation}
We focus now on deriving the recursions that allow to examine the time-evolution of the transformed vectors $\overline{\bchi}_{i}$ and $\widecheck{\bchi}_i$. Subtracting $\bphit_{i-1}$  from both sides of~\eqref{eq: network error vector psi}, {adding $\bz_{i-1}$,} and using~\eqref{eq: wt in terms of phi}, we can write:
\begin{align}
\bphit_{i-1}-\bpsit_{i}+\bz_{i-1}&=\left(I_M-\left(I_M-\frac{\mu}{\zeta}\bcH_{i-1}\right)\cA'\right)\bphit_{i-1}+\frac{\mu}{\zeta}\bs_i-\frac{\mu}{\zeta} b+\bz_{i-1}\notag\\
&\overset{\text{(a)}}=\left(I_M-\left(I_M-\frac{\mu}{\zeta}\bcH_{i-1}\right)\cA'\right)\bphit^z_{i-1}+\frac{\mu}{\zeta}\bs_i-\frac{\mu}{\zeta} b+\left(2I_M-\left(I_M-\frac{\mu}{\zeta}\bcH_{i-1}\right)\cA'\right)\bz_{i-1}, \label{eq: expression 5 biased}
\end{align}
where in (a) we used the fact that $\bphit_{i-1}=\bphit^z_{i-1}+\bz_{i-1}$. By multiplying both sides of~\eqref{eq: expression 5 biased} by $\cV_{\epsilon}^{-1}$ and by using~\eqref{eq: transformed variable phi definition}--\eqref{eq: quantization transformed vector},~\eqref{eq: definition of D11 1}--\eqref{eq: definition of D22 1}, and the Jordan decomposition of the matrix $\cA'$, we obtain:
\begin{equation}
\label{eq: expression 8 biased}
\begin{split}
\left[\begin{array}{c}
\overline{\bchi}_{i}\\
\widecheck{\bchi}_i
\end{array}
\right]=&\left[\begin{array}{cc}
\frac{\mu}{\zeta}\bcD_{11,i-1}&\frac{\mu}{\zeta}\bcD_{12,i-1}\\
\frac{\mu}{\zeta}\bcD_{21,i-1}&I_{M-P}-\cJ'_{\epsilon}+\frac{\mu}{\zeta}\bcD_{22,i-1}
\end{array}\right]\left[\begin{array}{c}
\bphib^z_{i-1}\\
\bphic^z_{i-1}
\end{array}
\right]+\frac{\mu}{\zeta}\left[\begin{array}{c}
\bsb_i\\
\bsc_{i}
\end{array}
\right]-\frac{\mu}{\zeta}\left[\begin{array}{c}
0\\
\widecheck{b}
\end{array}
\right]+\\
&\left[\begin{array}{cc}
I_P+\frac{\mu}{\zeta}\bcD_{11,i-1}&\frac{\mu}{\zeta}\bcD_{12,i-1}\\
\frac{\mu}{\zeta}\bcD_{21,i-1}&2I_{M-P}-\cJ'_{\epsilon}+\frac{\mu}{\zeta}\bcD_{22,i-1}
\end{array}\right]\left[\begin{array}{c}
\bzb_{i-1}\\
\bzc_{i-1}
\end{array}
\right].
\end{split}
\end{equation}
Again, by using similar arguments as those used to  establish inequalities (119) and (124) in~\cite[Appendix D]{nassif2020adaptation}, we can {verify} that:
\begin{equation}
\label{eq: centralized error vector recursion inequality on delta biased}
\begin{split}
\expec\|\overline{\bchi}_{i}\|^2&=\expec\left\|(I_P+\frac{\mu}{\zeta}\bcD_{11,i-1})\bzb_{i-1}+\frac{\mu}{\zeta}\bcD_{12,i-1}\bzc_{i-1}+\frac{\mu}{\zeta}\bcD_{11,i-1}\bphib^z_{i-1}+\frac{\mu}{\zeta}\bcD_{12,i-1}\bphic^z_{i-1}\right\|^2+\frac{\mu^2}{\zeta^2}\expec\|\bsb_{i}\|^2\\
&\leq2(1+\frac{\mu}{\zeta}\sigma_{11})^2\expec\|\bzb_{i-1}\|^2+\frac{2\mu^2}{\zeta^2}\expec\|\bcD_{12,i-1}\bzc_{i-1}+\bcD_{11,i-1}\bphib^z_{i-1}+\bcD_{12,i-1}\bphic^z_{i-1}\|^2+\frac{\mu^2}{\zeta^2}\expec\|\bsb_{i}\|^2\\
&\leq2(1+\frac{\mu}{\zeta}\sigma_{11})^2\expec\|\bzb_{i-1}\|^2+\frac{6\mu^2\sigma^2_{12}}{\zeta^2}\expec\|\bzc_{i-1}\|^2+\frac{6\mu^2\sigma^2_{11}}{\zeta^2}\expec\|\bphib^z_{i-1}\|^2+\frac{6\mu^2\sigma^2_{12}}{\zeta^2}\expec\|\bphic^z_{i-1}\|^2+\frac{\mu^2}{\zeta^2}\expec\|\bsb_{i}\|^2\\
&\overset{\text{(a)}}\leq\left(2(1+\frac{\mu}{\zeta}\sigma_{11})^2+\frac{6\mu^2\sigma^2_{12}}{\zeta^2}\right)\expec\|\cV_{\epsilon}^{-1}\bz_{i-1}\|^2+\frac{6\mu^2\sigma^2_{11}}{\zeta^2}\expec\|\bphib^z_{i-1}\|^2+\frac{6\mu^2\sigma^2_{12}}{\zeta^2}\expec\|\bphic^z_{i-1}\|^2+\frac{\mu^2}{\zeta^2}\expec\|\bsb_{i}\|^2,
\end{split}
\end{equation}
and
\begin{equation}
\label{eq: centralized error vector recursion inequality on delta check biased}
\begin{split}
\expec\|\widecheck{\bchi}_{i}\|^2&=\expec\left\|(2I-\cJ'_{\epsilon})\bzc_{i-1}+\frac{\mu}{\zeta}\bcD_{22,i-1}\bzc_{i-1}+\frac{\mu}{\zeta}\bcD_{21,i-1}\bzb_{i-1}+\frac{\mu}{\zeta}\bcD_{21,i-1}\bphib^z_{i-1}+(I-\cJ'_{\epsilon})\bphic^z_{i-1}+\frac{\mu}{\zeta}\bcD_{22,i-1}\bphic^z_{i-1}-\frac{\mu}{\zeta}\widecheck{b}\right\|^2\\
&\qquad+\frac{\mu^2}{\zeta^2}\expec\|\bsc_{i}\|^2\\
&\leq2\|2I-\cJ'_{\epsilon}\|^2\expec\|\bzc_{i-1}\|^2+\frac{4\mu^2}{\zeta^2}\expec\|\bcD_{22,i-1}\bzc_{i-1}+\bcD_{21,i-1}\bzb_{i-1}+\bcD_{21,i-1}\bphib^z_{i-1}+\bcD_{22,i-1}\bphic^z_{i-1}-\widecheck{b}\|^2+\\
&\qquad 4\|I-\cJ'_{\epsilon}\|^2\expec\|\bphic^z_{i-1}\|^2+\frac{\mu^2}{\zeta^2}\expec\|\bsc_{i}\|^2\\
&\leq\left(2\|2I-\cJ'_{\epsilon}\|^2+\frac{20\mu^2\sigma^2_{22}}{\zeta^2}\right)\expec\|\bzc_{i-1}\|^2+\frac{20\mu^2\sigma^2_{21}}{\zeta^2}\expec\|\bzb_{i-1}\|^2+\frac{20\mu^2\sigma^2_{21}}{\zeta^2}\expec\|\bphib^z_{i-1}\|^2+\frac{20\mu^2}{\zeta^2}\|\widecheck{b}\|^2+\\
&\qquad\left(4\|I-\cJ'_{\epsilon}\|^2+\frac{20\mu^2\sigma^2_{22}}{\zeta^2}\right)\expec\|\bphic^z_{i-1}\|^2+\frac{\mu^2}{\zeta^2}\expec\|\bsc_{i}\|^2\\
&\overset{\text{(b)}}\leq\left(2\|2I-\cJ'_{\epsilon}\|^2+\frac{20\mu^2(\sigma^2_{22}+\sigma^2_{21})}{\zeta^2}\right)\expec\|\cV_{\epsilon}^{-1}\bz_{i-1}\|^2+\frac{20\mu^2\sigma^2_{21}}{\zeta^2}\expec\|\bphib^z_{i-1}\|^2+\frac{20\mu^2}{\zeta^2}\|\widecheck{b}\|^2+\\
&\qquad\left(4\|I-\cJ'_{\epsilon}\|^2+\frac{20\mu^2\sigma^2_{22}}{\zeta^2}\right)\expec\|\bphic^z_{i-1}\|^2+\frac{\mu^2}{\zeta^2}\expec\|\bsc_{i}\|^2.
\end{split}
\end{equation}
{In steps (a) and (b) we used the fact that the norm of the components $\{\bzb_{i-1},\bzc_{i-1}\}$ is smaller than the norm of the transformed vector $\cV_{\epsilon}^{-1}\bz_{i-1}$.} By combining expressions~\eqref{eq: centralized error vector recursion inequality on delta biased} and~\eqref{eq: centralized error vector recursion inequality on delta check biased}, we obtain:
\begin{equation}
\label{eq: centralized error vector recursion inequality on sum of  delta biased}
\begin{split}
\expec\|\overline{\bchi}_{i}\|^2+\expec\|\widecheck{\bchi}_{i}\|^2\leq&\left(2\left(1+\frac{\mu}{\zeta}\sigma_{11}\right)^2+2\|2I-\cJ'_{\epsilon}\|^2+\frac{6\mu^2\sigma^2_{12}}{\zeta^2}+\frac{20\mu^2(\sigma^2_{22}+\sigma^2_{21})}{\zeta^2}\right)\expec\|\cV_{\epsilon}^{-1}\bz_{i-1}\|^2+\\
&\left(\frac{6\mu^2\sigma^2_{11}}{\zeta^2}+\frac{20\mu^2\sigma^2_{21}}{\zeta^2}\right)\expec\|\bphib^z_{i-1}\|^2+\left(4\|I-\cJ'_{\epsilon}\|^2+\frac{6\mu^2\sigma^2_{12}}{\zeta^2}+\frac{20\mu^2\sigma^2_{22}}{\zeta^2}\right)\expec\|\bphic^z_{i-1}\|^2+\\
&\frac{20\mu^2}{\zeta^2}\|\widecheck{b}\|^2+\frac{\mu^2}{\zeta^2}\left(\expec\|\bsb_{i}\|^2+\expec\|\bsc_{i}\|^2\right),
\end{split}
\end{equation}
and by using the bound~\eqref{eq: gradient noise term bound biased} in~\eqref{eq: centralized error vector recursion inequality on sum of  delta biased}, we  get:
\begin{equation}
\label{eq: centralized error vector recursion inequality on sum of  delta 1 biased}
\begin{split}
\expec\|\overline{\bchi}_{i}\|^2+\expec\|\widecheck{\bchi}_{i}\|^2\leq&\left(2(1+\frac{\mu}{\zeta}\sigma_{11})^2+2\|2I-\cJ'_{\epsilon}\|^2+\frac{6\mu^2\sigma^2_{12}}{\zeta^2}+\frac{20\mu^2(\sigma^2_{22}+\sigma^2_{21})}{\zeta^2}+2\frac{\mu^2}{\zeta^2}v_1^2\beta_{s,\max}^2 v_2^2\right)\expec\|\cV_{\epsilon}^{-1}\bz_{i-1}\|^2\\
&+\left(\frac{6\mu^2\sigma^2_{11}}{\zeta^2}+\frac{20\mu^2\sigma^2_{21}}{\zeta^2}+2\frac{\mu^2}{\zeta^2}v_1^2\beta_{s,\max}^2 v_2^2\right)\expec\|\bphib^z_{i-1}\|^2+\\
&\left(4\|I-\cJ'_{\epsilon}\|^2+\frac{6\mu^2\sigma^2_{12}}{\zeta^2}+\frac{20\mu^2\sigma^2_{22}}{\zeta^2}+2\frac{\mu^2}{\zeta^2}v_1^2\beta_{s,\max}^2 v_2^2\right)\expec\|\bphic^z_{i-1}\|^2+\frac{20\mu^2}{\zeta^2}\|\widecheck{b}\|^2+\frac{\mu^2}{\zeta^2} v_1^2\overline{\sigma}^2_s.
\end{split}
\end{equation}
Finally, by using~\eqref{eq: centralized error vector recursion inequality on sum of  delta 1 biased} in~\eqref{eq: expression 10 biased}, we find the following inequality that describes the evolution of the compression error vector~$\bz_i$:
\begin{equation}
\label{eq: expression 20 biased}
\begin{split}
&\expec\|\cV_{\epsilon}^{-1}\bz_i\|^2\leq\\
&{\beta}_{c,\max}^2v_1^2v_2^2\left(2\zeta^2(1+\frac{\mu}{\zeta}\sigma_{11})^2+2\zeta^2\|2I-\cJ'_{\epsilon}\|^2+6\mu^2\sigma^2_{12}+20\mu^2(\sigma^2_{22}+\sigma^2_{21})+2\mu^2v_1^2\beta_{s,\max}^2 v_2^2\right)\expec\|\cV_{\epsilon}^{-1}\bz_{i-1}\|^2\\
&+{\beta}_{c,\max}^2v_1^2v_2^2\left(6\mu^2\sigma^2_{11}+20\mu^2\sigma^2_{21}+2\mu^2v_1^2\beta_{s,\max}^2 v_2^2\right)\expec\|\bphib^z_{i-1}\|^2\\
&+{\beta}_{c,\max}^2v_1^2v_2^2\left(4\zeta^2\|I-\cJ'_{\epsilon}\|^2+6\mu^2\sigma^2_{12}+20\mu^2\sigma^2_{22}+2\mu^2v_1^2\beta_{s,\max}^2 v_2^2\right)\expec\|\bphic^z_{i-1}\|^2\\
&+20{\beta}_{c,\max}^2v_1^2v_2^2\mu^2\|\widecheck{b}\|^2+\mu^2 {\beta}_{c,\max}^2v_2^2v_1^4\overline{\sigma}^2_s+\zeta^2v_1^2\overline{\sigma}^2_c.
\end{split}
\end{equation}

From~\eqref{eq: centralized error vector recursion inequality with bounding gradient noise biased},~\eqref{eq: centralized error vector recursion inequality 2 with bounding gradient noise biased}, {and~\eqref{eq: expression 20 biased}}, we finally find that {$\expec\|\bphib^z_i\|^2$, $\expec\|\bphic^z_i\|^2$, and $\expec\|\cV_{\epsilon}^{-1}\bz_i\|^2$} are coupled and recursively bounded as:
\begin{equation}
\label{eq: single inequality recursion text new biased}
\left[\begin{array}{c}
\expec\|\bphib^z_i\|^2\\
\expec\|\bphic^z_i\|^2\\
\expec\|\cV_{\epsilon}^{-1}\bz_i\|^2
\end{array}\right]\preceq\Gamma\left[\begin{array}{c}
\expec\|\bphib^z_{i-1}\|^2\\
\expec\|\bphic^z_{i-1}\|^2\\
\expec\|\cV_{\epsilon}^{-1}\bz_{i-1}\|^2
\end{array}\right]+\left[\begin{array}{c}
l\\
m\\
n
\end{array}\right],
\end{equation}
where $\Gamma$ is the  $3\times 3$ matrix given by:
\begin{equation}
\Gamma=\left[\begin{array}{ccc}
a&b&c\\
d&e&f\\
g&h&j
\end{array}\right],\label{eq: general form for gamma biased}
\end{equation}
with
\begin{align}
a&\triangleq 1-\mu\sigma_{11}+2\mu^2 v_1^2\beta_{s,\max}^2 v_2^2=1-\mu\sigma_{11}+O(\mu^2),\label{eq: constant a definition new biased}\\
b&\triangleq \frac{3\mu\sigma_{12}^2}{\sigma_{11}}+2\mu^2v_1^2\beta_{s,\max}^2 v_2^2=O(\mu),\\
c&\triangleq 3\mu\sigma_{11}+\frac{3\mu\sigma_{12}^2}{\sigma_{11}}+2\mu^2v_1^2\beta_{s,\max}^2 v_2^2=O(\mu),\\
d&\triangleq \frac{10\mu^2\sigma_{21}^2}{1-\|\cJ''_{\epsilon}\|}+2\mu^2v_1^2\beta_{s,\max}^2 v_2^2=O(\mu^2),\\
e&\triangleq \|\cJ''_{\epsilon}\|+ \frac{10\mu^2\sigma_{22}^2}{1-\|\cJ''_{\epsilon}\|}+2\mu^2v_1^2\beta_{s,\max}^2 v_2^2=\|\cJ''_{\epsilon}\|+O(\mu^2),\\
f&\triangleq \frac{2\zeta^2 \|I-\cJ'_{\epsilon}\|^2}{1-\|\cJ''_{\epsilon}\|}+\frac{10\mu^2(\sigma_{22}^2+\sigma_{21}^2)}{1-\|\cJ''_{\epsilon}\|}+2\mu^2v_1^2\beta_{s,\max}^2 v_2^2=\frac{2\zeta^2 \|I-\cJ'_{\epsilon}\|^2}{1-\|\cJ''_{\epsilon}\|}+O(\mu^2),\\
g&\triangleq {\beta}_{c,\max}^2v_1^2v_2^2\left(6\mu^2 \sigma^2_{11}+20\mu^2 \sigma^2_{21}+2\mu^2 v_1^2\beta_{s,\max}^2 v_2^2\right)=O(\mu^2 ),\\
h&\triangleq 
{\beta}_{c,\max}^2v_1^2v_2^2\left(4\zeta^2\|I-\cJ'_{\epsilon}\|^2+6\mu^2 \sigma^2_{12}+20\mu^2 \sigma^2_{22}+2\mu^2 v_1^2\beta_{s,\max}^2 v_2^2\right)=4\zeta^2{\beta}_{c,\max}^2v_1^2v_2^2\|I-\cJ'_{\epsilon}\|^2+O(\mu^2 ),\\
j&\triangleq  2{\beta}_{c,\max}^2v_1^2v_2^2\left(\zeta^2(1+\frac{\mu}{\zeta}\sigma_{11})^2+\zeta^2\|2I-\cJ'_{\epsilon}\|^2+3\mu^2\sigma^2_{12}+10\mu^2 (\sigma^2_{22}+\sigma^2_{21})+\mu^2v_1^2\beta_{s,\max}^2 v_2^2\right)\notag\\
&= 2\zeta^2 {\beta}_{c,\max}^2v_1^2v_2^2\left((1+\frac{\mu}{\zeta}\sigma_{11})^2+\|2I-\cJ'_{\epsilon}\|^2\right)+O(\mu^2),\\
l&\triangleq\mu^2 v_1^2\overline{\sigma}^2_s=O(\mu^2),\label{eq: constant j definition new biased}\\
m&\triangleq \mu^2  v_1^2\overline{\sigma}^2_s+\frac{10\mu^2 \|\widecheck{b}\|^2}{1-\|\cJ''_{\epsilon}\|}=O(\mu^2),\label{eq: constant k definition new biased}\\
n&\triangleq \zeta^2v_1^2\overline{\sigma}^2_c +\mu^2  v_1^4v_2^2\beta_{c,\max}^2\overline{\sigma}^2_s+20\beta_{c,\max}^2v_1^2v_2^2\mu^2 \|\widecheck{b}\|^2= \zeta^2v_1^2\overline{\sigma}^2_c+O(\mu^2).\label{eq: constant l definition new biased}
\end{align}
If the matrix $\Gamma$ is stable, i.e., $\rho(\Gamma)<1$, then by iterating~\eqref{eq: single inequality recursion text new biased}, we arrive at:
\begin{align}
\label{eq: single inequality recursion steady state 2 new biased}
\limsup_{i\rightarrow\infty}\left[\begin{array}{c}
\expec\|\bphib_i^z\|^2\\
\expec\|\bphic_i^z\|^2\\
\expec\|\cV_{\epsilon}^{-1}\bz_i\|^2
\end{array}\right]\preceq(I_3-\Gamma)^{-1}\left[\begin{array}{c}
l\\
m\\
n
\end{array}\right].
\end{align}
As we will see in the following, for some given {learning problem} settings (captured by $\{\sigma_{11}^2,\sigma_{12}^2,\sigma_{21}^2,\sigma_{22}^2,\beta^2_{s,\max},\overline{\sigma}^2_s\}$), small step-size parameter $\mu$, network topology (captured by the matrix $\cA$ {and the variables  $\{v_1^2,v_2^2,\cJ_{\epsilon}\}$ resulting from its eigendecomposition}), and quantizer settings (captured by $\{\beta_{c,\max}^2,\overline{\sigma}^2_{c}\}$), the stability of $\Gamma$ can be controlled by the damping coefficient $\zeta$ and the mixing parameter $\gamma$ used in steps~\eqref{eq: step 2} and~\eqref{eq: step 3}, respectively. 
Generally speaking, and since the spectral radius of a matrix is upper bounded by its $1-$norm, the matrix $\Gamma$ is stable if:
\begin{equation}
\label{eq: condition ensuring stability of gamma new biased}
\rho(\Gamma)\leq\max\{|a|+|d|+|g|,|b|+|e|+|h|,|c|+|f|+|j|\}<1.
\end{equation}
Since $\sigma_{11}>0$
, a sufficiently small $\mu$ can make $|a|+|d|+|g|$ strictly smaller than 1. For $|b|+|e|+|h|$, observe that if the damping coefficient $\zeta$ and the mixing parameter $\gamma$ are chosen such that:
\begin{equation}
\label{eq: condition on the mixing parameter new biased}
\|\cJ''_{\epsilon}\|+4\zeta^2v_1^2v_2^2\beta^2_{c,\max}\|I-\cJ'_{\epsilon}\|^2<1,
\end{equation}
then $|b|+|e|+|h|$ can be made strictly smaller than  {$1$} for sufficiently small $\mu$. Finally, for $|c|+|f|+|j|$, observe that if the parameters $\gamma$ and $\zeta$ are chosen such that:
\begin{equation}
\label{eq: condition on the mixing parameter new biased 1}
\frac{2\zeta^2\|I-\cJ'_{\epsilon}\|^2}{1-\|\cJ''_{\epsilon}\|}+2 {\beta}_{c,\max}^2{\zeta^2}v_1^2v_2^2\left(1+\|2I-\cJ'_{\epsilon}\|^2\right)<1,
\end{equation}
then $|c|+|f|+|j|$ can be made strictly smaller than {$1$} for sufficiently small $\mu$. It is therefore clear that the RHS of~\eqref{eq: condition ensuring stability of gamma new biased} can be made strictly smaller than  {$1$} for sufficiently small $\mu$ and for a damping coefficient $\zeta\in(0,1]$ and mixing parameter $\gamma\in(0,1]$ satisfying conditions~\eqref{eq: condition on the mixing parameter new biased} and~\eqref{eq: condition on the mixing parameter new biased 1}. In the following, we analyze in details conditions~\eqref{eq: condition on the mixing parameter new biased} and~\eqref{eq: condition on the mixing parameter new biased 1}. By following similar arguments as in~\cite[pp.~516--517]{sayed2014adaptation}, we can establish the following identities on the block diagonal matrices $I-\cJ_{\epsilon}'$ and $2I-\cJ_{\epsilon}'$ appearing in conditions~\eqref{eq: condition on the mixing parameter new biased} and~\eqref{eq: condition on the mixing parameter new biased 1}:
\begin{align}
\|I-\cJ_{\epsilon}'\|^2&\overset{\eqref{eq: jordan decomposition of A'}}=\|\gamma(I-\cJ_{\epsilon})\|^2 =\gamma^2\|I-\cJ_{\epsilon}\|^2 \leq\gamma^2(\rho(I-\cJ_{\epsilon})+\epsilon)^2,\label{eq: bound on I-J epsilon' biased}\\
\|2I-\cJ_{\epsilon}'\|&\overset{\eqref{eq: jordan decomposition of A'}}=\|(1+\gamma)I-\gamma\cJ_{\epsilon}\|\leq(1+\gamma)-\gamma\left(\rho(\cJ_{\epsilon})+\epsilon\right)\in(1,1+\gamma),\label{eq: bound on 2I-J epsilon' biased}
\end{align}
where $\rho(I-\cJ_{\epsilon})\in(0,2)$ since $\rho(\cJ_{\epsilon})\in(0,1)$. By using the bounds~\eqref{eq: spectral norm of J''} and~\eqref{eq: bound on I-J epsilon' biased} into~\eqref{eq: condition on the mixing parameter new biased}, we can upper bound the LHS of~\eqref{eq: condition on the mixing parameter new biased} by:
\begin{equation}
\|\cJ''_{\epsilon}\|+4v_1^2v_2^2\beta^2_{c,\max}\zeta^2\|I-\cJ'_{\epsilon}\|^2\leq 1-\gamma\zeta(1-\rho(\cJ_{\epsilon})-\epsilon)+4v_1^2v_2^2\beta^2_{c,\max}(\gamma\zeta)^2(\rho(I-\cJ_{\epsilon})+\epsilon)^2.
\end{equation}
The upper bound in the above inequality is guaranteed to be strictly smaller than $1$ if:
\begin{equation}
\label{eq: condition on 1-zeta to solve}
4v_1^2v_2^2\beta^2_{c,\max}(\gamma\zeta)^2(\rho(I-\cJ_{\epsilon})+\epsilon)^2-\gamma\zeta(1-\rho(\cJ_{\epsilon})-\epsilon)<0.
\end{equation}
Now, by using the above condition and the fact that $\gamma\zeta$ must be in $(0,1]$, we  obtain {condition~\eqref{eq: mixing parameter condition theorem} on} $\gamma\zeta$.
For the second condition~\eqref{eq: condition on the mixing parameter new biased 1},  we start by noting that its LHS can be upper bounded by:
\begin{equation}
\label{eq: mixing parameter condition theorem biased 1}
\begin{split}
&\frac{2\zeta^2\|I-\cJ'_{\epsilon}\|^2}{1-\|\cJ''_{\epsilon}\|}+ 2{\beta}_{c,\max}^2\zeta^2v_1^2v_2^2\left(1+\|2I-\cJ'_{\epsilon}\|^2\right)\\
&\overset{\eqref{eq: spectral norm of J''},\eqref{eq: bound on I-J epsilon' biased},\eqref{eq: bound on 2I-J epsilon' biased}}\leq 2\gamma\zeta\frac{(\rho(I-\cJ_{\epsilon})+\epsilon)^2}{1-(\rho(\cJ_{\epsilon})+\epsilon)}  +{2{\beta}_{c,\max}^2\zeta^2v_1^2v_2^2\left(1+\left((1+\gamma)-\gamma(\rho(\cJ_{\epsilon})+\epsilon)\right)^2\right)}
\end{split}
\end{equation}
{Thus,~\eqref{eq: condition on the mixing parameter new biased 1} is guaranteed to be satisfied under condition~\eqref{eq: mixing parameter condition theorem 2}.}

Under conditions~\eqref{eq: mixing parameter condition theorem} and~\eqref{eq: mixing parameter condition theorem 2}, $\rho(\Gamma)<1$, and consequently, the matrix $\Gamma$ is stable. Moreover, it holds that\footnote{{While constants crucial to understanding the algorithm's behavior are written explicitly  in~\eqref{eq: I-Gamma new biased} and~\eqref{eq: inverse of I-Gamma new biased}, the other constants that are less significant are encapsulated in the Big $O$ notation.}}:
\begin{equation}
\label{eq: I-Gamma new biased}
(I-\Gamma)=\left[\begin{array}{ccc}
\mu\sigma_{11}&O(\mu)&O(\mu)\\
O(\mu^2)&1-e&f\\
O(\mu^2)&h&1-j
\end{array}\right],
\end{equation}
and:
\begin{equation}
\label{eq: inverse of I-Gamma new biased}
(I-\Gamma)^{-1}=\left[\begin{array}{ccc}
\frac{1}{\mu\sigma_{11}}&O(1)&O(1)\\
O(\mu)&O(1)&O(1)\\
O(\mu)&O(1)&O(1)
\end{array}\right].
\end{equation}
Now, using~\eqref{eq: constant j definition new biased},~\eqref{eq: constant k definition new biased},~\eqref{eq: constant l definition new biased}, and~\eqref{eq: inverse of I-Gamma new biased} into~\eqref{eq: single inequality recursion steady state 2 new biased}, we arrive at:
\begin{align}
\label{eq: single inequality recursion steady state 3 new biased}
&\limsup_{i\rightarrow\infty}\left[\begin{array}{c}
\expec\|\bphib^z_i\|^2\\
\expec\|\bphic^z_i\|^2\\
\expec\|\cV_{\epsilon}^{-1}\bz_i\|^2
\end{array}\right]\preceq\left[\begin{array}{c}
\mu v_1^2\frac{\overline{\sigma}^2_s}{\sigma_{11}}+O(\mu^2)+\overline{\sigma}^2_{c} O(1)\\
O(\mu^2)+\overline{\sigma}^2_{c}  O(1)\\
O(\mu^2)+\overline{\sigma}^2_{c} O(1)
\end{array}\right].
\end{align}
By noting that:
\begin{align}
\limsup_{i\rightarrow\infty}\expec\|\bphit^z_i\|^2=\limsup_{i\rightarrow\infty}\expec\|\cV_{\epsilon}(\cV_{\epsilon}^{-1}\bphit^z_i)\|^2&\leq\limsup_{i\rightarrow\infty}\|\cV_{\epsilon}\|^2\left[\expec\|\bphib^z_i\|^2+\expec\|\bphic^z_i\|^2\right]\notag\\
&\hspace{-1mm}\overset{\eqref{eq: single inequality recursion steady state 3 new biased}}={\kappa\mu+O(\mu^2)+\overline{\sigma}^2_{c} O(1)},\label{eq: equation relating w tilde to phi tilde biased}
\end{align}
we can finally conclude {that~\eqref{eq: steady state mean square error} holds}. {Result~\eqref{eq: steady state mean square error phitilde} follows from~\eqref{eq: equation relating w tilde to phi tilde biased} by replacing $\overline{\sigma}^2_cO(1)$ by $O(\mu^{1+\varepsilon})$.}

To establish~\eqref{eq: steady state mean square error w}, we first show that the mean-square difference between the trajectories $\{\bphit^z_i,\bcwt_i\}$ is asymptotically bounded by $O(\mu^{1+\varepsilon})$. By subtracting $\bcwt_i$ and $\bphit^z_i$, we can write:
\begin{align}
\limsup_{i\rightarrow\infty}\expec\|\bcwt_i-\bphit^z_i\|^2&\overset{\eqref{eq: wt in terms of phi}}=\limsup_{i\rightarrow\infty}\expec\|\cA'\bphit_i-\bphit^z_i\|^2\notag\\
&=\limsup_{i\rightarrow\infty}\expec\|\cA'(\bphit^z_i+\bz_i)-\bphit^z_i\|^2\notag\\
&=\limsup_{i\rightarrow\infty}\expec\left\|\cV_{\epsilon}(\Lambda_{\epsilon}'-I_M)\cV_{\epsilon}^{-1}\bphit^z_i+\cV_{\epsilon}\Lambda_{\epsilon}'\cV_{\epsilon}^{-1}\bz_i\right\|^2\notag\\
&\hspace{-3.3mm}\overset{{\eqref{eq: jordan decomposition of A},}\eqref{eq: jordan decomposition of A'}}=\limsup_{i\rightarrow\infty}\expec\left\|\cV_{R,\epsilon}(\cJ'_{\epsilon}-I)\bphic^z_i+\cV_{\epsilon}\Lambda_{\epsilon}'\cV_{\epsilon}^{-1}\bz_i\right\|^2\notag\\
&\leq\limsup_{i\rightarrow\infty}\left[2\|\cV_{R,\epsilon}(\cJ'_{\epsilon}-I)\|^2\expec\|\bphic^z_i\|^2+2\|\cV_{\epsilon}\Lambda'_{\epsilon}\|^2\expec\left\|\cV_{\epsilon}^{-1}\bz_i\right\|^2\right].\label{eq: equation relating w tilde to phi tilde biased and w tilde}
\end{align}
Now, using~\eqref{eq: single inequality recursion steady state 3 new biased} with $\overline{\sigma}^2_cO(1)$ replaced by $O(\mu^{1+\varepsilon})$, we can conclude that:
\begin{equation}
\limsup_{i\rightarrow\infty}\expec\|\bcwt_i-\bphit^z_i\|^2=O(\mu^{1+\varepsilon}).\label{eq: intermediate result on the difference of the trajectories}
\end{equation}
Finally, note that:
\begin{align}
\expec\|\bcwt_i\|^2&=\expec\|\bcwt_i-\bphit^z_i+\bphit^z_i\|^2\notag\\
&\leq\expec\|\bcwt_i-\bphit^z_i\|^2+\expec\|\bphit^z_i\|^2+2|\expec(\bcwt_i-\bphit^z_i)^\top\bphit^z_i|\notag\\
&\overset{\text{(a)}}\leq\expec\|\bcwt_i-\bphit^z_i\|^2+\expec\|\bphit^z_i\|^2+2\sqrt{\expec\|\bcwt_i-\bphit^z_i\|^2\expec\|\bphit^z_i\|^2},\label{eq: equation relating w tilde to phi tilde biased and w tilde}
\end{align}
and, hence,  {from the  sub-additivity property of the limit superior, and} from~\eqref{eq: steady state mean square error phitilde} and~\eqref{eq: intermediate result on the difference of the trajectories}, we get:
\begin{equation}
\label{eq: limsup relation on the vector difference}
{\limsup_{i\rightarrow\infty}\expec\|\bcwt_i\|^2\leq \limsup_{i\rightarrow\infty}\expec\|\bphit^z_i\|^2+O(\mu^{1+\frac{\varepsilon}{2}}),}
\end{equation}
which establishes~\eqref{eq: steady state mean square error w}. In step (a), we used $|\expec\bx|\leq \expec|\bx|$ from Jensen's inequality and we applied Holder's {inequality, namely, $\expec|\bx^\top\by|\leq(\expec|\bx|^p)^\frac{1}{p}(\expec|\by|^q)^\frac{1}{q}$ when $1/p+1/q=1$, with $p=q=2$}.

The analysis can be simplified in settings where the compression operators $\{\bcC_k\}$ are such that their relative compression noise terms $\beta^2_{c,k}=0$, $\forall k$. In fact, in such settings, we can replace $\beta^2_{c,\max}$ in~\eqref{eq: expression 20 biased} by $0$, and use the resulting inequality $\expec\|\cV_{\epsilon}^{-1}\bz_{i}\|^2\leq \zeta^2v_1^2\overline{\sigma}^2_c$ 
directly into~\eqref{eq: centralized error vector recursion inequality with bounding gradient noise biased} and~\eqref{eq: centralized error vector recursion inequality 2 with bounding gradient noise biased}, without the need to study the evolution of $\expec\|\cV_{\epsilon}^{-1}\bz_{i}\|^2$ as in~\eqref{eq: single inequality recursion text new biased}. By doing so, we find that the variances of $\bphib^z_i$ and $\bphic^z_i$ are coupled and recursively bounded as:
 \begin{equation}
\label{eq: single inequality recursion text new biased 2 times 2}
\left[\begin{array}{c}
\expec\|\bphib^z_i\|^2\\
\expec\|\bphic^z_i\|^2
\end{array}\right]\preceq\left[\begin{array}{cc}
a&b\\
d&e
\end{array}\right]\left[\begin{array}{c}
\expec\|\bphib^z_{i-1}\|^2\\
\expec\|\bphic^z_{i-1}\|^2
\end{array}\right]+\left[\begin{array}{c}
l'\\
m'
\end{array}\right],
\end{equation}
where $l' = l+\overline{\sigma}^2_cO(\mu)$, and $m'= m+\overline{\sigma}^2_c O(1)$. By setting $\gamma=\zeta=1$ (so that $\cJ''_{\epsilon}=\cJ_{\epsilon}$), and by using similar arguments as in~\eqref{eq: single inequality recursion steady state 2 new biased}--\eqref{eq: equation relating w tilde to phi tilde biased},  we can  establish the mean-square-error stability for a sufficiently small step-size $\mu$ and obtain the performance result~\eqref{eq: steady state mean square error}.

\section{Bit Rate stability}
\label{app: rate stability}
Equation~\eqref{eq: sigma square} follows  from Lemma~\ref{lemma: Property of the top-$c$ quantizer} and Table~\ref{table: examples of quantizers} (row~$3$, col. $3$--$4$). Invoking similar arguments to the ones used to establish Theorem 2 in~\cite{nassif2023quantization}, the individual summands in~\eqref{eq: bit rate general at agent k} can be upper bounded by:
\begin{equation}
2+\log_2\left(\frac{\ln\left(1+\displaystyle{\frac{\omega}{\eta}}\sqrt{\expec\|\bchi_{k,i}\|^2}\right)}{2\ln\left(\omega+\sqrt{1+\omega^2}\right)}+2\right).
\label{eq:finalrateformula}
\end{equation}
By taking the limit superior of~\eqref{eq: centralized error vector recursion inequality on sum of  delta 1 biased} as $i\rightarrow \infty$ and by using~\eqref{eq: single inequality recursion steady state 3 new biased} and~\eqref{eq: sigma square}, we obtain:
\begin{equation}
\label{eq: appendix c}
\limsup_{i\rightarrow\infty}\expec\|\bchi_{k,i}\|^2\leq \kappa_1\mu^{1+\varepsilon}+\kappa_2\mu^2,
\end{equation}
for sufficiently small $\mu$, and where $\kappa_1$ and $\kappa_2$ are some positive constants independent of $\mu$. Applying the limit superior to~\eqref{eq:finalrateformula},  using the fact that~\eqref{eq:finalrateformula} is a continuous and increasing function in the argument $\expec\|\bchi_{k,i}\|^2$, and using~\eqref{eq: appendix c}, in view of~\eqref{eq: sigma square}, we find that each summand in~\eqref{eq: bit rate general at agent k} is $O(1)$, which in turn implies~\eqref{eq: rate stability result}.

\end{appendices}
\bibliographystyle{IEEEbib}
{\balance{
\bibliography{reference}}}

\end{document}